\documentclass[11pt]{article}
\RequirePackage[colorlinks,citecolor=blue,urlcolor=blue,linkcolor=blue]{hyperref}
\usepackage{url}

\setlength{\topmargin}{-.6in}
\setlength{\oddsidemargin}{-0cm}
\setlength{\evensidemargin}{-1cm}
\setlength{\textwidth}{16.5cm}
\setlength{\textheight}{23cm}

\DeclareSymbolFont{slenderlargesymbols}{OMX}{ccex}{m}{n}
\DeclareMathSymbol{\prod}{\mathop}{slenderlargesymbols}{"51}
\usepackage{latexsym,amssymb,amsmath,amsthm,graphics,graphicx,float,psfrag, epsfig, color, authblk, enumerate}
\usepackage{mathabx}
\newtheorem{theorem}{Theorem} %[section]
\newtheorem{lemma}[theorem]{Lemma}
\newtheorem{proposition}[theorem]{Proposition}
\newtheorem{corollary}[theorem]{Corollary}
\newtheorem{definition}[theorem]{Definition}

\newcommand{\gtilde}{\tilde{g}}
\newcommand{\ctilde}{\tilde{c}}
\newcommand{\R}{\mathbb{R}}
\newcommand{\calB}{\mathcal{B}}
\newcommand{\A}{\mathcal{A}}
\newcommand{\calAp}{\A_+}
\newcommand{\calWp}{\mathcal{W}_+}
\newcommand{\calBp}{\mathcal{B}_+}

\newcommand{\eps}{\epsilon}
\newcommand{\rhotilde}{\tilde{\rho}}
\newcommand{\xtilde}{\tilde{x}}
\newcommand{\ztilde}{\tilde{z}}
\newcommand{\PP}{\mathbb{P}}
\newcommand{\thetabar}{\overline{\theta}}
\newcommand{\thetacheck}{\widecheck{\theta}}
\newcommand{\zetacheck}{\rho}

\newcommand{\spherek}{S^{k-1}}
\newcommand{\spherel}{S^{\ell-1}}
\newcommand{\calN}{\mathcal{N}}
\newcommand{\tran}{t}
\newcommand{\red}[1]{{\color{black} #1}}

\newcommand{\xo}{x_0}

\newcommand{\Wt}{W^\tran}

\newcommand{\Wplus}[1]{W_{+, #1}}

\newcommand{\Wpx}{\Wplus{x}}
\newcommand{\Wpy}{\Wplus{y}}

\newcommand{\indwx}{1_{w_i \cdot x > 0}}
\newcommand{\indwy}{1_{w_i \cdot y > 0}}

\newcommand{\wi}{w_i}

\newcommand{\wix}{w_i \cdot x}
\newcommand{\wixtilde}{w_i \cdot \xtilde}
\newcommand{\wiytilde}{w_i \cdot \ytilde}
\newcommand{\wiy}{w_i \cdot y}
\newcommand{\wit}{w_i^\tran}
\newcommand{\Wpxt}{\Wpx^\tran}
\newcommand{\Wpyt}{\Wpy^\tran}

\newcommand{\WpxtWpx}{\Wpx^\tran \Wpx}

\newcommand{\WpxtWpy}{\Wpx^\tran \Wpy} 

\newcommand{\hmeps}{h_{-\eps}}
\newcommand{\Gmeps}{G_{-\eps}}

\newcommand{\heps}{h_{\eps}}
\newcommand{\Geps}{G_{\eps}}

\newcommand{\wiwit}{\wi\wit}
\newcommand{\xii}{\xi_i}
\newcommand{\xiit}{\xii^\tran}
\newcommand{\xiixiit}{\xi_i\xiit}
\newcommand{\matrixleq}{\preceq}
\newcommand{\matrixgeq}{\succeq}
\newcommand{\E}{\mathbb{E}}

\newcommand{\Ndelta}{\calN_\delta}

\newcommand{\Mxyhat}{M_{\hat{x} \leftrightarrow \hat{y}}}

\newcommand{\xhat}{\hat{x}}
\newcommand{\yhat}{\hat{y}}
\newcommand{\xohat}{\hat{x}_0}

\newcommand{\ytilde}{\tilde{y}}

\newcommand{\AtA}{A^\tran A}

\newcommand{\Qxy}{Q_{x,y}}

\newcommand{\Sepsxo}{S_{\eps, \xo}}

\newcommand{\hxxo}{h_{x, \xo}}
\newcommand{\htildexy}{\tilde{h}_{x,y}}

\newcommand{\XWhatWtwo}{X_{\What, W_2}}

\newcommand{\rmax}{r_{\text{max}}}
\newcommand{\bigPioned}{\prod_{i=1}^d}
\newcommand{\Pioned}{\Pi_{i=1}^d}
\newcommand{\PiWd}{\Pi_{i=d}^1 W_{i,+,x}}

\newcommand{\PiWdy}{\Pi_{i=d}^1 W_{i,+,y}}
\newcommand{\PiWdo}{\Pi_{i=d}^1 W_{i,+,\xo}}

\newcommand{\PiWn}[1]{\Pi_{i={#1}}^1 W_{i,+,x}}
\newcommand{\PiWny}[1]{\Pi_{i={#1}}^1 W_{i,+,y}}
\newcommand{\Pipithetapii}{\prod_{i=0}^{d-1} \frac{\pi - \thetabar_i}{\pi}}
\newcommand{\zetaterm}{\sum_{i=0}^{d-1} \frac{\sin \thetabar_i}{\pi} \prod_{j=i+1}^{d-1} \frac{\pi - \thetabar_j}{\pi}}

\newcommand{\Wipxn}[1]{W_{#1, +, x}}
\newcommand{\Wipyn}[1]{W_{#1, +, y}}

\newcommand{\Wnpx}[1]{W_{#1, +, x}}
\newcommand{\Wnpy}[1]{W_{#1, +, y}}

\newcommand{\Wipx}{W_{i, +, x}}
\newcommand{\Wipy}{W_{i, +, y}}
\newcommand{\Vpx}{V_{+,x}}
\newcommand{\Vpy}{V_{+,y}}
\newcommand{\Wippx}{(W_i)_{+, x}}
\newcommand{\Wippy}{(W_i)_{+, y}}
\newcommand{\Wippxt}{(W_i)^t_{+, x}}

\newcommand{\Win}[1]{W_{#1}}
\newcommand{\Wi}{\Win{i}}

\newcommand{\What}{\hat{W}}

\newcommand{\vx}{v_{x}}
\newcommand{\vbarx}{\overline{v}_{x}}
\newcommand{\vbar}{\overline{v}}
\newcommand{\vxxo}{v_{x, \xo}}
\newcommand{\vbarxxo}{\overline{v}_{x, \xo}}

\newcommand{\vxn}{v_{x_n}}

\newcommand{\hxn}{h_{x_n}}
\newcommand{\hx}{h_x}

\newcommand{\xn}{x_{n}}

\newcommand{\Ktilde}{\tilde{K}}

\DeclareMathOperator{\Span}{span}
\DeclareMathOperator{\relu}{relu}
\DeclareMathOperator{\diag}{diag}

\DeclareMathOperator{\range}{range}

\title{Global Guarantees for Enforcing Deep Generative Priors by Empirical Risk}

% The \author macro works with any number of authors. There are two
% commands used to separate the names and addresses of multiple
% authors: \And and \AND.
%
% Using \And between authors leaves it to LaTeX to determine where to
% break the lines. Using \AND forces a line break at that point. So,
% if LaTeX puts 3 of 4 authors names on the first line, and the last
% on the second line, try using \AND instead of \And before the third
% author name.

\author{ Paul Hand\thanks{This work was accepted for presentation at Conference on Learning Theory (COLT) 2018.  Both authors contributed equally.  }\ \thanks{Department of Mathematics and Khoury College of Computer and Information Sciences, Northeastern University, Boston, MA} \  and Vladislav Voroninski$^*$\thanks{Helm.ai, Menlo Park, CA}}

\begin{document}
% \nipsfinalcopy is no longer used

\maketitle

\begin{abstract}
We examine the theoretical properties of enforcing priors provided by generative deep neural networks via empirical risk minimization. In particular we consider two models, one in which the task is to invert a generative neural network given access to its last layer and another in which the task is to invert a generative neural network given only compressive linear observations of its last layer.  We establish that in both cases, in suitable regimes of network layer sizes and a randomness assumption on the network weights, that the non-convex objective function given by empirical risk minimization does not have any spurious stationary points. That is, we establish that with high probability, at any point away from small neighborhoods around two scalar multiples of the desired solution, there is a descent direction. Hence, there are no local minima, saddle points, or other stationary points outside these neighborhoods.  These results constitute the first theoretical guarantees which establish the favorable global geometry of these non-convex optimization problems, and they bridge the gap between the empirical success of  enforcing deep generative priors and a rigorous understanding of non-linear inverse problems.\\
%We examine the theoretical properties of enforcing priors provided by generative deep neural networks via empirical risk minimization. In particular we consider two models, one in which the task is to invert a generative neural network given access to its last layer and another which entails recovering a latent code in the domain of a generative neural network from compressive linear observations of its last layer. We establish that in both cases, in suitable regimes of network layer sizes and a randomness assumption on the network weights, that the non-convex objective function given by empirical risk minimization does not have any spurious stationary points. That is, we establish that with high probability, at any point away from small neighborhoods around two scalar multiples of the desired solution, there is a descent direction. These results constitute the first theoretical guarantees which establish the favorable global geometry of these non-convex optimization problems, and bridge the gap between the empirical success of deep learning and a rigorous understanding of non-linear inverse problems.\\Gaussian
\end{abstract}

\section{Introduction}
Exploiting the structure of images and natural signals has proven to be a fruitful endeavor across many domains of science. For instance, the wavelet transform, discovered by Daubechies and others \cite{daubechies1992ten}, led to the observation that natural images are sparse in the wavelet basis, enabling compression algorithms such as JPEG 2000 to tame the storage and transfer of the modern deluge of image and video data.   Principles of wavelet based image compression, combined with surprising advances in convex relaxation, have also opened the door to greatly improved signal acquisition strategies, which unlocked critical applications throughout the imaging sciences.  In particular, breaking with the dogma of the Nyquist sampling theorem, which stems from worst-case analysis, Candes and Tao, and Donoho \cite{CRT2005,Donoho, donoho2009counting}, provided a theory and practice of compressed sensing (CS), which exploits the sparsity of natural signals in the wavelet basis to design acquisition strategies with drastically lower sample complexity --- that on par with the sparsity level of the signal at hand. In particular,  using the standard basis in lieu of the wavelet basis without loss of generality, they established that to recover a vector $x \in \mathbb{R}^n$ with $k < n$ non-zero entries from $m = O(k \log n)$ observations $\langle x, a_i \rangle, i = 1,2\ldots, m$, where $a_i$ are i.i.d Gaussian, it suffices to minimize $\|x\|_1$ subject to the observations with high probability. On a practical level, compressed sensing has lead to significant reduction in the sample complexity of signal acquisition of natural images, for instance speeding up MRI imaging by an order of magnitude \cite{MRM:MRM21391}. Beyond MRI, compressed sensing has impacted many if not all imaging sciences, by providing a general tool to exploit the parsimony of natural signals to improve acquisition speed, increase SNR and reduce sample complexity. More broadly, the principled use of sparsity as a prior has led  to the development of the field of matrix completion \cite{MC}, breakthroughs in phase retrieval \cite{CSV2013, CESV2011} and blind deconvolution \cite{BDC}; and is at this point routinely utilized across applied mathematics and machine learning. %, \red{((such as with microscopy))},  (citations, rely paper by Bengio etc).

Meanwhile, the advent of practical deep learning \cite{Goodfellow-et-al-2016} has significantly improved machine understanding of image and audio data. For instance, deep learning techniques are now the state of the art across most of computer vision and have taken the field far beyond where it stood just a few years prior.  
%\textcolor{red}{might want to remove the following sentence. this is only kind of true and not the impressive part of deep learning} For instance, in the ImageNet Large-Scale Visual Recognition Competition, image classification approaches based on deep learning have achieved superhuman performance along certain metrics since 2015 \cite{he2015delving}. 
The success of deep learning ostensibly stems from its ability to exploit the hierarchical nature of images and other natural signals without explicit hand-engineering. There are many techniques and add-on architectural choices associated with deep learning, but many of them are non-essential from a theoretical and, to a large extent, practical perspective, with simple convolutional deep nets with Rectified Linear Units (ReLUs) achieving close to the state of the art performance on many tasks \cite{Simplicity}. The class of functions represented by such deep networks is readily interpretable as hierarchical compression schemes with exponentially many linear filters, each being a linear combination of filters in earlier layers. Constructing such compression schemes by hand would be quite tedious, if not impossible, and the biggest surprise and advantage of deep learning is that simple stochastic gradient descent (SGD) allows one to efficiently traverse this class of functions subject to potentially highly non-convex learning objectives. While this latter property has been empirically established in an impressive number of applications, it has so far eluded a completely satisfactory theoretical explanation.

%In particular, deep learning has lead to advances in practical techniques for compression of natural images, with deep learning approaches outperforming the wavelet based JPEG compression standards by 2-3X in recent works (citations). It is natural to consider whether such improvements in image compression via deep neural networks may be leveraged to improve signal acquisition strategies beyond that of traditional compressive sensing. Indeed, while a natural image is sparse in the wavelet basis, a random linear combination of wavelet basis vectors is too crude of an approximation of the high level structure in natural images. This example demonstrates that the prior of sparsity in the wavelet basis is nowhere near tight. Improvements in priors on natural images beyond wavelet based approaches, properly enforced, should enable reliable signal acquisition at ever lower sample complexities. 

In essence, compressive sensing, and its numerous extensions, consist of enforcing a sparsity prior to regularize the solution of an inverse problem. Thus, improvements in the state of the art of compressed sensing can come from better reconstruction algorithms, better design of signal measurements, or more sophisticated priors.  Virtually all of the tens of thousands of research articles in the umbrella field of compressive sensing have focused on the first two directions, taking the linear sparsity model as the de-facto prior for regularization.  Those two directions are fundamentally limited in that no approach at recovering a $k$-sparse signal with respect to a basis could succeed with fewer than $k$ measurements.
%  However, as we argue with rigorous analysis in this paper, a prior that provides a lower-dimensional representation than sparsity could lead to drastically lower sample complexity and higher SNR.  Additionally, the sparsity prior of compressed sensing has permeated other fields with nonlinear measurement operators, such as phase retrieval.  Enforcing sparsity priors in phase retrieval has been met by fundamental limitations of theoretical computer science, with the theoretically optimal sample complexity of $O(k \log n)$ potentially being unobtainable via polynomial time algorithms \cite{barak2016nearly}, which have so far only produced $O(k^2 \log n)$ efficient reconstruction schemes \cite{li2013sparse}.  More sophisticated priors beyond sparsity have the potential to drastically lower the sample complexity of algorithms in phase retrieval and other problems.  
 
 Meanwhile, there have been great strides in generative modeling of images in modern machine learning that go well beyond linear sparsity models. 
Such improvements in priors on natural images beyond wavelet based approaches, when properly enforced, should enable more aggressive regularization of inverse problems, leading to lower sample complexity and higher SNR than traditional compressed sensing approaches. 
 In order to understand the potential for improving upon traditional compressive sensing, broadly speaking, as a function of advances in generative modeling, it is useful to reinterpret compressive sensing from the perspective of the field of generative modeling, a popular framework in machine learning which strives to sample from the probability distribution of natural images and other signals. Note that there is a duality between generative modeling and compression. Any compression scheme implicitly defines a generative model, by its inverse, and vice versa. In particular, wavelet based compression schemes implicitly define a generative model which attempts to sample from natural images via random sparse linear combinations of wavelet basis images.  This wavelet-based generative model is clearly too loose to capture the rich hierarchical structure of natural images, making it a sufficiently expressive yet very naive prior. 

Generative modeling has a rich history in machine learning, but only recent deep neural network based approaches to generative modeling have enabled the generation of realistic synthetic images in a variety of domains, for example by training generative adversarial networks (GANs) to find a Nash equilibrium of a non-convex game \cite{GAN, hong2017generative}; by training variational auto-encoders (VAEs) \cite{kingma2013auto, rezende2014stochastic}; and by training autoregressive models like PixelCNN \cite{oord2016pixel}, which generate pixels one-at-a-time by sampling from appropriate conditional probability distributions. GANs and VAEs map a low dimensional latent code space to a higher dimensional embedding space of images or other natural signals. For instance, if we equip the latent code space with a Gaussian distribution, the goal of generative adversarial training is to produce a deep neural network generator whose push-forward distribution is the distribution of natural images or another class of natural signals. %Because of this structure, it is easier to solve optimization problems related to GANs and VAEs than autoregressive models. 
Impressively, in-between the original posting of this paper and the current version of the manuscript, deep generative modeling has advanced to the point of producing high-resolution synthetic, yet extremely photorealistic, images of celebrity faces \cite{karras2017progressive}.  Further, continuous motion in the latent code space of the associated deep generative models has allowed for interpolation and continuous deformation of the resulting faces, even exhibiting equivariant properties with arithmetic operations in the latent code space corresponding to semantically meaningful image variations \cite{radford2015unsupervised}.

%A strong theoretical justification for using neural networks as signal priors has been lacking.  In particular, there is a need for a coherent explanation for several empirical observations detailing the success of optimization in a latent code space.  

The scope of application of deep generative modeling to regularizing inverse problems is vast. These more sophisticated priors are recently emerging in empirical applications of many fields of imaging, such as medical imaging \cite{yang2017dagan, kelly2017deep, hammernik2017learning, quan2017compressed, dar2017transfer, adler2018learned,moeskops2017adversarial,wolterink2017generative, nie2017medical, mahmood2017unsupervised, kohl2017adversarial, mahapatra2017retinal, dai2017scan, xue2017segan}, microscopy \cite{rivenson2017deep},  inpainting \cite{mao2016image, yeh2017semantic}, superresolution \cite{sonderby2016amortised, ledig2016photo,johnson2016perceptual}, compressed sensing \cite{lohit2017convolutional, liu2017high, mousavi2015deep, mousavi2017learning}, image manipulation \cite{zhu2016generative}, and many more.  See \cite{lucas2018using} for a review of deep learning for inverse problems in imaging. Importantly, approaches that regularize inverse problems using deep generative models, have empirically been shown to improve over sparsity-based approaches, advancing the state of the art in several fields. For instance, in Magnetic Resonance Imaging, deep generative networks have enabled image reconstruction that is qualitatively of higher diagnostic quality and higher SNR than traditional compressive sensing allows, and is additionally two orders of magnitude faster than sparsity-based approaches due to the utilization of GPUs in applying convolutional neural networks \cite{mardani2017deep, mardani2017recurrent}.  This development is significant because of the tremendous potential clinical applications of diagnostic-quality real-time MRI visualization.  
Deep generative models have also empirically been used directly for compression \cite{rippel2017real}. 
In the case of compressed sensing, optimization of an empirical risk objective over the latent code space has been empirically shown to recover images from 10x fewer linear compressive measurements than sparsity-based approaches \cite{Price}. 
%Other empirical work has also shown that the latent code corresponding to images in the range of a generator can be recovered in a way that is robust to noise.  
%Alternatively put, this study and others show that generative networks can in some cases be empirically inverted, from full or compressive measurements of their output.
%In \cite{}, the authors empirically show that the latent code for the output of a celebrity-face GAN can be recovered exactly using the direct optimization of empirical risk.  Further, they provide empirical evidence that this approach is robust to noise.  Alternatively put, this work shows that generative models can be empirically inverted.  In \cite{}, the authors empirically show that the latent code corresponding approximately to an image can be empirically recovered from compressive measurements, such as random Gaussian or low-frequency Fourier measurements.  They further show that empirical risk minimization in the latent code space can succeed with 10x less data than conventional sparsity-based compressed sensing approaches.  Alternatively put, generative networks can be empirically inverted even from compressive measurements of their output.  

As the quality and reach of deep generative modeling continues to increase, signal recovery in many scenarios will benefit analogously.

As with the rest of machine learning, in the field of deep generative modeling for regularizing inverse problems, or as we refer to it the field of deep compressive sensing, empirics is far ahead of the state of theoretical justification. In this paper we initiate the rigorous study of enforcing deep generative models as priors on the solutions to inverse problems, by providing a theory of compressive sensing that goes beyond linear sparsity and into the realm of applying deep neural network based generative priors. In particular we show that under suitable randomness assumptions on the weights of a neural network and successively expansive hidden layer sizes, the empirical risk objective for recovering a latent code in $\mathbb{R}^k$ from $m$ linear observations of the last layer of a generative network, where $m$ is proportional to $k$ up to log factors, has no spurious local minima or saddle points, in that there is a descent direction everywhere except possibly small neighborhoods around two scalar multiples of the desired solution.  Our descent direction analysis is constructive; based on deterministic conditions on the neural network weights and the measurements; and relies on novel concentration bounds of certain random matrices, uncovering some interesting geometric properties of the landscapes of empirical risk objective functions for random generative multilayer networks with ReLU activations. For a generative network that achieves a greater degree of compression, the proposed scheme would enable lower sample complexity and higher SNR.  If a generative model can compress a signal to a latent code dimensionality $k$ much less then the signal's sparsity level, then compressed sensing with the generative prior may significantly outperform compressed sensing with sparsity prior in terms of sample complexity.

\subsection{Related theoretical work}
Latent code space optimizations after neural network training, and the optimization over the weights of a neural network during training, may both be interpreted as inverse problems \cite{Mallat}. The tools developed in this paper, such as the novel nonasymptotic concentration results for high dimensional Gaussians followed by a ReLU, may be of independent interest, in particular being amenable for establishing global non-asymptotic analysis regarding convergence of SGD for training deep neural networks. Our work also relates to recent trends in optimization. Traditionally, rigorous understanding of inverse problems has been limited to the simpler setting in which the optimization objective is convex. 
 More recently, there has been progress in understanding non-convex optimization objectives for inverse problems, in albeit analytically simpler situations than those involving multilayer neural networks. For instance, the authors of \cite{WrightPR, Burer1} provide a global analysis of non-convex objectives for phase retrieval and community detection, respectively, ruling out adversarial geometries in these scenarios for the purposes of optimization.  Additionally, rigorous guarantees of nonconvex recovery include other results in phase retrieval \cite{wirtinger, chen2015solving}, blind deconvolution \cite{li2016rapid, ma2017implicit,  huang2017blind}, robust subspace recovery \cite{maunu2017well}, discrete joint alignment \cite{chen2016projected}, and more.   

In related work, the authors of \cite{Price} also study inverting compressive linear observations under generative priors, by proving a restricted eigenvalue condition on the range of the generative neural network. However, they only provide a guarantee that is local in nature, in showing the global optimum of empirical risk is close to the desired solution.  The work provides no guarantees about why the global minimum of the nonconvex problem can be reached.   In addition, \cite{Sanjeev} studied inverting neural networks given access to the last layer using an analytical formula that approximates the inverse mapping of a neural network. The results of \cite{Sanjeev} are in a setting where the neural net is not generative, and their procedure is at only approximate, and, since it requires observation of the last layer, it is not readily extendable to the compressive linear observation setting. Meanwhile, the optimization problem we study can yield exact recovery, which we observe empirically via gradient descent. Most importantly, in contrast to \cite{Price,Sanjeev}, we provide a global analysis of the non-convex empirical risk objective function and constructively exhibit a descent direction at every point outside a neighborhood of the desired solution and a negative scalar multiple of it. Our guarantees are non-asymptotic, and to the best of our knowledge the first of their kind. %\red{Cite Gilbert??}

\red{\subsection{Notation}  \label{sec:notation}
Before we present the main result, we now introduce notation that will be used throughout this paper. Let $[n] = \{1, \ldots, n\}$. Let $e_i$ is the $i$th standard basis element
for $i \in [n]$.  Let $\relu(x) = \max(x, 0)$ apply entrywise for $x \in \R^n$.  Let $\diag(Wx>0)$ be the diagonal matrix that is 1 in the $(i,i)$th entry if $(Wx)_i > 0$, and 0 otherwise.  Let $\calB(x, r)$ be the Euclidean ball of radius $r$ centered at $x$.  Let $\Pi_{i=d}^1 W_i = W_d W_{d-1} \cdots W_1$.  
%Let $\Apx = \diag(Ax >0) A$. Let $\Bpx = \diag(B \Apx x > 0) B$.   Let $\Wnpx{1} = \diag(W_{1}x>0) W_1$ and  $\Wipx = \diag(\Wi \Wnpx{i-1} \cdots \Wnpx{2} \Wnpx{1} x > 0) \Wi$. 
   Let $I_n$ be the $n \times n$ identity matrix.  Let $A \matrixleq B$ mean that $B-A$ is a positive semidefinite matrix.  For matrices $A$, let $\|A\|$ be the spectral norm of $A$.  Let $\spherek$ be the unit sphere in $\R^k$. 
  For any nonzero $x \in \R^n$, let $\hat{x} = x / \|x\|_2$.  %For fixed $x,y \in \R^n$, let $\Mxyhat$ is the matrix such that $\xhat \mapsto \yhat$, $\yhat \mapsto \xhat$, and $z \mapsto 0$ for all $z \in \Span(\{x,y\})^\perp. $  
  For a set $S$, let $|S|$ denote its cardinality. 
We will write $\gamma = O( \delta)$ to mean that there exists a positive constant C such that $\gamma \leq C \delta$, when $\gamma$ is understood to be positive.  Similarly we will write $c = \Omega( \delta)$ to mean that there exists a positive constant C such that $c \geq C \delta$.   
When we say that a constant depends polynomially on $\eps^{-1}$, that means that it is at most $C \eps^{-k}$ for some positive $C$ and positive integer $k$.  %When we say that a constant depends polynomially on $\eps$, that means that it is at least $C \eps^{k}$ for some positive $C$ and positive integer $k$.  
%Let $g(\theta) = \cos^{-1} \bigl( \frac{ (\pi - \theta) \cos \theta + \sin \theta}{\pi} \bigr)$.
Let $\theta_0 = \angle (x, \xo)$ and $\thetabar_1  = g(\theta_0)$ where $g$ is given by \eqref{defn-g}.
For notational convenience, we will write $a = b + O_1(\eps)$ if $\| a-b\| \leq \eps$, where the norm is understood to be absolute value for scalars, the $\ell_2$ norm for vectors, and the spectral norm for matrices.  Write $g^{\circ d}$ to denote the composition of $g$ with itself $d$ times.
Let $1_S = 1$ if S and 0 otherwise.  For nonzero $v$, let $D_v f(x)$ be the (normalized) one-sided directional derivative of $f$ at $x$ in the direction of $v$: $D_v f(x) = \lim_{t \to 0^+} \frac{f(x + t v) - f(x)}{t \|v\|_2}$.}

\subsection{Main Results}

We consider the inverse problem of recovering a vector $y_0 \in \R^{n}$ from $m \ll n$ linear measurements. To resolve the inherent ambiguity from undersampling, we assume, as a prior, that the vector belongs to the range of a $d$-layer generative neural network $G: \R^k \to \R^{n}$, with $k<n$.  To recover the vector $y_0= G(\xo)$, we attempt to find the latent code $\xo \in \R^k$ corresponding to it.  
We consider a generative network modeled by $G(x) = \relu(\Win{d} \ldots \relu (\Win{2} \relu(\Win{1} \xo)) \ldots )$, where $\relu(x) = \max(x, 0)$ applies entrywise, $W_i \in \R^{n_i \times n_{i-1}}$, $n_i$ is the number of neurons in the $i$th layer, and $k=n_0 < n_1 < \cdots < n_d = n$.  
 We consider linear measurements of $G(\xo)$ given by the sampling matrix $A \in \R^{m \times n}$ and consider $k < m \ll n$.    The problem at hand is:
\begin{alignat*}{2}
	&\text{Let: } &&\xo \in \R^{k}, A\in \R^{m \times n}, \Wi \in \R^{n_i \times n_{i-1} } \text{ for } i=1\ldots d,\\
	& &&G(x) = \relu(\Win{d} \ldots \relu (\Win{2} \relu(\Win{1} \xo)) \ldots ),\\
	& &&y_0 = G(\xo),\\
	&\text{Given: } && W_1\ldots W_d, A,  \text{  and  observations  } A y_0 ,\\
	&\text{Find: } &&\xo.
\end{alignat*}
This problem can be viewed in two ways: (1) as above, given compressive measurements of a vector with the prior information that it belongs to the output of a generative neural network, find that vector;
%by first estimating its latent code $\xo$ and then computing $G(\xo)$; 
or (2), given compressive observations of the output of a generative neural network, find the latent code corresponding to the network's output by inverting the neural network and compression simultaneously.  

As a way to solve the above problem, we consider minimizing the empirical risk objective
\begin{align}
f(x) := \frac{1}{2} \Bigl \| A G(x) - A y_0 \Bigr\|_2^2. \label{defn-f}
\end{align}
As this objective is nonconvex, there is no \textit{a priori} guarantee of efficiently finding the global minimum \cite{Murty1987}.   Approaches such as gradient descent could in principle get stuck in local minima, instead of finding the desired global minimizer $\xo$.  

In this paper, we consider a fully-connected generative network $G:\R^k \to \R^n$ with Gaussian weights and no bias term, along with a Gaussian sampling matrix $A \in \R^{m \times n}$.  We show that under appropriate conditions and with high probability, $f$ has a strict descent direction everywhere outside two small neighborhoods of $\xo$ and a negative multiple of $\xo$.  We assume that the network is sufficiently \textit{expansive} at each layer, $n_i = \Omega(n_{i-1} \log n_{i-1})$, and that there are a sufficient number of measurements, $m = \Omega(k d \log (n_1 \cdots n_d))$.  \red{Let $D_v f(x)$ be the (normalized) one-sided directional derivative of $f$ at $x$ in the direction of $v$: $D_v f(x) = \lim_{t \to 0^+} \frac{f(x + t v) - f(x)}{t \|v\|_2}$}. \red{Let $\calB(x,r)$ be the Euclidean ball of radius $r$ centered at $x$. } Our main result is as follows:
  
\begin{theorem} \label{thm-multi-layer-XX}
Fix $\eps >0$ such that $K_1 d^8 \eps^{1/4} \leq 1$, and let $d \geq 2$.  Assume $n_i \geq c n_{i-1} \log n_{i-1}$ for all $i = 1 \ldots d$ and $m > c d  k \log \Pi_{i=1}^d n_i$.  Assume that for each $i$, the entries of $W_i$ are i.i.d. $\mathcal{N}(0, 1/n_i)$, and the entries of $A$ are i.i.d. $\mathcal{N}(0, 1/m)$ and independent from $\{W_i\}$.
Then, on an event of probability at least $1 -  \sum_{i=1}^d \ctilde n_i e^{-\gamma n_{i-1} }- \ctilde e^{-\gamma m}$, we have the following. For all nonzero $x$ and $\xo$, there exists $\vxxo \in \R^k$ such that the one-sided directional derivatives of $f$ satisfy
\begin{alignat*}{2}
&D_{-\vxxo} f(x) <0, \quad & & \forall x  \not\in  \calB(\xo, K_2 d^{3} \eps^{1/4} \|\xo\|_2) \cup \calB(-\zetacheck_d \xo, K_2 d^{13} \eps^{1/4} \|\xo\|_2) \cup \{0\},\\
&D_v f(0) < 0, & &\forall v \neq 0,
\end{alignat*}
where $\rho_d$ is a positive number that converges to $1$ as $d \to \infty$.
  Here, $c$ and $\gamma^{-1}$ are constants that depend polynomially on $\eps^{-1}$, and $\ctilde, K_1, K_2$ are universal constants.  

\end{theorem}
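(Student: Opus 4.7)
The plan is to reduce the random nonconvex objective to a deterministic landscape and then analyze the zeros of a simple planar vector field. Writing the feedforward map as $G(x) = (\PiWd)\,x$, where $\Wipx$ denotes the restriction of $W_i$ to its rows active on the forward pass through $x$, away from the measure-zero set where some preactivation vanishes $f$ is differentiable with
\[
\nabla f(x) = (\PiWd)^\tran A^\tran A\bigl((\PiWd)x - (\PiWdo)\xo\bigr),
\]
and one-sided directional derivatives exist everywhere by the piecewise linearity of $G$.

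The first step is to establish two deterministic conditions on the draw. The \emph{Weight Distribution Condition} (WDC) asserts that, uniformly over nonzero $x,y$,
\[
(\PiWd)^\tran(\PiWdy) \approx \Qxy,
\]
where $\Qxy$ is a deterministic matrix supported on $\Span\{\hat x,\hat y\}$ and determined only by $\|x\|$, $\|y\|$, and the iterated angles $\thetabar_0 = \angle(x,y)$, $\thetabar_i = g(\thetabar_{i-1})$. The second is a range-restricted RIC-like inequality for $A$: for every pair of activation patterns realized along the network,
\[
\bigl|\langle A \PiWd u,\, A \PiWdy v\rangle - \langle \PiWd u,\, \PiWdy v\rangle\bigr| \leq \eps\, \|\PiWd\|\,\|\PiWdy\|\,\|u\|\,\|v\|.
\]
Both hold with the claimed failure probability by combining an $\eps$-net argument over the latent space with subexponential concentration for Gaussian matrices masked by a ReLU. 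The technical heart, and the main obstacle, is proving the WDC \emph{uniformly} in $x$: active row sets change with $x$, so one must bound the number of realizable activation patterns at layer $i$ by a polynomial in $\bigPioned n_i$, iterate the concentration layer by layer, and track how $\eps$ degrades through the product, which is what drives the expansivity requirement $n_i \geq c\, n_{i-1}\log n_{i-1}$.

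Given these two conditions, define the candidate descent direction
\[
\vxxo := Q_{x,x}\,x - Q_{x,\xo}\,\xo,
\]
the deterministic surrogate of $\nabla f(x)$. A direct computation then yields
\[
D_{-\vxxo} f(x) = -\|\vxxo\|_2^2 + O(\eps)\bigl(\|x\|_2^2 + \|\xo\|_2^2\bigr),
\]
so it suffices to lower-bound $\|\vxxo\|_2$ outside the two specified neighborhoods. Since $\vxxo$ lies in $\Span\{\hat x,\hat \xo\}$, project onto this plane and parametrize $x = r(\cos\phi\,\hat \xo + \sin\phi\,\hat \xo^\perp)$ to obtain a two-dimensional map $\hxxo(r,\phi)$.

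The remaining analysis is purely deterministic. Using the closed-form expressions for $\Qxy$ that involve $\Pipithetapii$ and $\zetaterm$, show that $\hxxo$ has exactly two zeros: at $(r,\phi) = (\|\xo\|,0)$, corresponding to $x = \xo$, and on the antipodal ray $\phi = \pi$ at a unique positive radius $r = \rho_d\|\xo\|$. Monotonicity of $g$ on $[0,\pi]$ together with $g^{\circ d}(\pi) \to 0$ as $d \to \infty$ forces $\rho_d \to 1$. Converting the non-vanishing of $\hxxo$ away from these two roots into a quantitative lower bound, and propagating the $O(\eps)$ error from the WDC and the RIC-like bound, yields the radii $K_2 d^3\eps^{1/4}\|\xo\|_2$ and $K_2 d^{13}\eps^{1/4}\|\xo\|_2$; the larger radius around $-\rho_d\xo$ reflects the weaker curvature of $\hxxo$ near the antipodal zero, which is where most of the polynomial-in-$d$ slack arises. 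The case $x = 0$ is treated separately: $G(0)=0$, so $f(tv) = \tfrac12\|Ay_0\|_2^2 - t\langle AG(v),Ay_0\rangle + O(t^2)$ for small $t>0$, and the WDC together with the RIC-like bound make $\langle AG(v),AG(\xo)\rangle$ strictly positive, giving $D_v f(0) < 0$ in every nonzero direction.
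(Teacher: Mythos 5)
Your high-level program matches the paper's: establish two deterministic conditions on the draw (a weight distribution condition and a range-restricted near-isometry for $A$), use them to show the (sub)gradient concentrates uniformly around a deterministic map $\hxxo$ living in $\Span\{\hat x,\hat\xo\}$, show $\hxxo$ is bounded away from zero outside two balls around $\xo$ and $-\rho_d\xo$, handle $x=0$ separately, and verify the conditions hold with high probability by a net argument plus concentration plus a count of realizable activation patterns. But two of your formulations are off in ways that would derail a rigorous write-up. First, the paper's WDC is a \emph{per-layer} statement: for each $W_i$ individually, $\|W_{i,+,x}^\tran W_{i,+,y}-Q_{x,y}\|\leq\eps$ for all nonzero $x,y$, with $Q_{x,y}=\tfrac{\pi-\theta_0}{2\pi}I_k+\tfrac{\sin\theta_0}{2\pi}M_{\hat x\leftrightarrow \hat y}$, which has a full-rank identity part and is \emph{not} supported on $\Span\{\hat x,\hat y\}$. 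The concentration of the $d$-layer product $(\PiWd)^\tran(\PiWdy)y$ to a vector in the plane is not an assumed condition but a derived consequence, obtained by unrolling a recursion across layers (Lemma~\ref{lemma:Gradient-multilayer-concentration}); this recursion is exactly where the $d^3\sqrt{\eps}$ degradation comes from, and trying to net the full product map $x\mapsto\PiWd$ directly, as you suggest, would fail because its dependence on $x$ through the nested activation patterns is far less tractable than a single layer's. Second, you invert the roles of the two probabilistic devices: the activation-pattern count (Lemmas~\ref{lemma:combinatorial-bound-Apluszero} and~\ref{lemma:num-apx-aox-bpx-box}) feeds the union bound for the RRIC, since it bounds the number of $k$-dimensional subspaces the range of $G$ can inhabit, while the per-layer WDC handles the discontinuity of $\Wpx^\tran\Wpy$ not by pattern counting but by sandwiching between two Lipschitz smoothings $G_{\pm\eps}$ and then $\eps$-netting, and it is this per-layer net that requires $n_i\geq c\,n_{i-1}\log n_{i-1}$. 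These attributions need to be corrected for the proof to go through as you outline it.
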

\red{This theorem states that for a network of fixed depth $d$, with high probability there is always a descent direction outside of  two specified, sufficiently small neighborhoods, provided that the network is Gaussian and sufficiently expansive.  Further, for such networks, zero is a local maximizer.  We note that the linear dependence of sample complexity with respect to $k$, for fixed $d$, is optimal.  The fixed $d$ regime is realistic to applications because many deep learning networks in the wild have $d$ on the order of only 10.  We also note that the theorem's scalings with respect to $\eps$, $d$, and $n_i$ are all polynomial, and not exponential, though the dependence on each of these variables could likely be improved.  While the sample complexity scaling appears to get worse for larger $d$, we note that larger $d$ allows for the possibility of generative models with lower values of $k$.  This is because the number of piecewise linear pieces in $G$ grows exponentially in $d$.}  Also, note that while the weights of any layer of the network are assumed to be i.i.d. Gaussian, there is no assumption on the independence between $W_i$ and $W_j$ for $i \neq j$.

\red{The descent direction $\vxxo$ is given by the gradient of $f$:
\begin{align*}
\vxxo &= \begin{cases}\nabla f(x) & G \text{ is differentiable at x,}\\ \lim_{\delta \to 0^+} \nabla f(x + \delta w) & \text{otherwise,} \end{cases} 
\end{align*}
where $w$ can be arbitrarily chosen such that $G$ is differentiable at $x+\delta w$ for sufficiently small $\delta$.  Such a $w$ exists by the piecewise linearity of $G$, and could be generated randomly with probability 1.   An explicit formula for $\nabla f(x)$, where it exists, is given by \eqref{defn-nabla-f} in Section \ref{sec:proofs}.  This expression for $\vxxo$ is in a form that can be computed for any $x$, even for points of nondifferentiability, as part of a gradient based algorithm.  }

This theorem will be proven by showing the sufficiency of two deterministic conditions on $G$ and $A$, and then by showing that Gaussian $G$ and $A$ of appropriate sizes satisfy these conditions with the appropriate probability.
%In this paper, we introduce deterministic conditions on the neural network $G$ and on the sampling matrix $C$ that guarantees that $f$ has a strict descent direction at any point outside a ball around $\xo$ and a negative multiple thereof.  Under these conditions, any convergent gradient descent scheme must converge to approximately one of these multiples of $\xo$.  %These conditions are motivated by the development of the Restricted Isometry Property for compressed sensing with a sparsity priory.
%We show that these conditions are satisfied with high probability when $G$ and $C$ are i.i.d. Gaussian matrices of appropriate sizes.
The first deterministic condition is on the spatial arrangement of the network weights within each layer.  
\red{
\begin{definition}
We say that the matrix $W \in \R^{n \times k}$ satisfies the \emph{Weight Distribution Condition} with constant $\eps$ if for all nonzero $x,y \in \R^k$, 
\begin{align}
\Bigl \| \sum_{i=1}^n \indwx \indwy \cdot w_i w_i^t  - \Qxy \Bigr \| \leq \eps,  \text{ with } \Qxy = \frac{\pi - \theta_0}{2 \pi} I_k + \frac{\sin \theta_0}{2\pi}  \Mxyhat, \label{WDC}
\end{align}
where $w_i \in \R^k$ is the $i$th row of $W$; $\Mxyhat \in \R^{k \times k}$ is the matrix\footnote{A formula for $\Mxyhat$ is as follows.  If $\theta_0 = \angle(\xhat, \yhat) \in (0, \pi)$ and $R$ is a rotation matrix such that $\xhat$ and $\yhat$ map to $e_1$ and $\cos \theta_0 \cdot e_1 + \sin \theta_0 \cdot e_2$ respectively, then $\Mxyhat = R^t \begin{pmatrix} \cos \theta_0 & \sin \theta_0 & 0 \\ \sin \theta_0 & - \cos \theta_0 & 0 \\ 0 & 0 & 0_{k-2} \end{pmatrix} R$, where $0_{k-2}$ is a $k-2 \times k-2$ matrix of zeros.  If $\theta_0 = 0$ or $\pi$, then $\Mxyhat = \xhat \xhat^t$ or $- \xhat \xhat^t$, respectively.} such that $\xhat \mapsto \yhat$, $\yhat \mapsto \xhat$, and $z \mapsto 0$ for all $z \in \Span(\{x,y\})^\perp$;  $\xhat = x/\|x\|_2$  and $\yhat = y /\|y\|_2$;  $\theta_0 = \angle(x, y)$; and $1_S$ is the indicator function on $S$. 
\end{definition}
}

 The norm on the left hand side of \eqref{WDC} is the spectral norm.  Note that an elementary calculation\footnote{To do this calculation, take $x=e_1$ and $y = \cos \theta_0\cdot e_1 + \sin \theta_0 \cdot e_2$ without loss of generality.  Then each entry of the matrix can be determined analytically by an integral that factors in polar coordinates.} gives that $\Qxy = \E[\sum_{i=1}^n \indwx \indwy \cdot w_i w_i^t ]$ for $w_i \sim \mathcal{N}(0, I_k/n)$.  As the rows $w_i$ correspond to the neural network weights of the $i$th neuron in a layer given by $W$, the WDC provides a deterministic property under which the set of neuron weights within the layer given by $W$ are distributed approximately like a Gaussian.  The WDC could also be interpreted as a deterministic property under which the  neuron weights are distributed approximately like a uniform random variable on a sphere of a particular radius.  Note that if $x=y$, $\Qxy$ is an isometry up to a factor of $1/2$.

The second deterministic condition is that the compression matrix acts like an isometry on pairs of differences of vectors in the range of $G: \R^k \to \R^n$. 
\red{
\begin{definition}
We say that the compression matrix $A \in \R^{m \times n}$ satisfies the \emph{Range Restricted Isometry Condition (RRIC)} with respect to $G$ with constant $\eps$ if for all $x_1, x_2, x_3, x_4 \in \R^k$,
\begin{align}
\Bigl| \Bigl \langle A \bigl( G(x_1) - G(x_2) \bigr), A \bigl( G(x_3) - G(x_4) \bigr) \Bigr\rangle  &- \Bigl \langle  G(x_1) - G(x_2) ,  G(x_3) - G(x_4)  \Bigr \rangle \Bigr| \notag \\&\leq \eps \| G(x_1) - G(x_2) \|_2  \| G(x_3) - G(x_4) \|_2.
\end{align}
\end{definition}}
We can now state our main deterministic result.  

\begin{theorem} \label{thm-multi-layer-deterministic}
Fix $\eps >0$ such that $K_1 d^8 \eps^{1/4} \leq 1$, and let $d \geq 2$.  Suppose that $G$ is such that $W_i$ has the  WDC with constant $\eps$ for all $i = 1 \ldots d$.  Suppose $A$ satisfies the RRIC with respect to $G$ with constant $\eps$.  Then, for all nonzero $x$ and $\xo$, there exists $\vxxo \in \R^k$ such that the one-sided directional derivatives of $f$ satisfy
\begin{alignat*}{2}
&D_{-\vxxo} f(x) < \red{-K_3 \frac{\sqrt{\eps} d^3}{2^d} \max(\|x\|_2, \|\xo\|_2)}, \quad & & \forall x  \not\in  \calB(\xo, K_2 d^{3} \eps^{1/4} \|\xo\|_2) \cup \calB(-\zetacheck_d \xo, K_2 d^{13} \eps^{1/4} \|\xo\|_2) \cup \{0\},\\
&D_y f(0) < \red{- \frac{1}{8 \pi 2^d} \|\xo\|_2}, & &\forall y \neq 0,
\end{alignat*}
where $\rho_d$ is a positive number that converges to $1$ as $d \to \infty$,  and $K_1$, $K_2$, and $K_3$ are universal constants.
\end{theorem}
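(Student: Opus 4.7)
\medskip

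\noindent\textbf{Proof plan.} My approach is to reduce the statement to an approximate computation of the gradient $\nabla f(x)$ (and its directional analogues at points of nondifferentiability), and then to show that this approximate gradient has norm bounded below outside the two claimed neighborhoods. The plan is to take $v_{x,\xo}$ to be essentially $\nabla f(x)$, so that $D_{-v_{x,\xo}} f(x) = -\|\nabla f(x)\|_2$, and to work out a closed-form ``population" expression $h_{x,\xo}$ to which $\nabla f(x)$ is close, layer by layer, under WDC and RRIC.

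\medskip

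\noindent\textbf{Step 1: gradient formula.} At a point of differentiability, the chain rule gives
\[
\nabla f(x) \;=\; \Bigl(\prod_{i=1}^d W_{i,+,x}^{\tran}\Bigr)\, A^{\tran} A\, \bigl(G(x) - G(\xo)\bigr),
\]
where $W_{i,+,x} = \diag(W_i \cdot G_{i-1}(x) > 0)\, W_i$ is the linearization of the $i$th ReLU block along the path of $x$, so that $G(x) = \bigl(\prod_{i=d}^1 W_{i,+,x}\bigr) x$. At points of nondifferentiability, $v_{x,\xo}$ is defined as the stated one-sided limit, so the same formula with appropriately chosen masks holds. I will work with this formula throughout.

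\medskip

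\noindent\textbf{Step 2: derandomize $A$ using RRIC.} The RRIC with constant $\eps$ says that $A^{\tran}A$ acts like the identity on pairwise inner products of vectors of the form $G(u) - G(v)$. Expanding $\langle \nabla f(x), w\rangle$ for a test direction $w$ and writing $G(x+tw) - G(x)$ as a linear function of $t$ (for $t$ small, by piecewise linearity of $G$), the RRIC lets me replace $A^{\tran}A$ with the identity at the cost of an additive error of order $\eps \cdot \|G(x) - G(\xo)\|_2 \cdot \|(\prod W_{i,+,x})w\|_2$. This reduces the problem to analyzing
\[
h_{x,\xo} \;:=\; \Bigl(\prod_{i=1}^d W_{i,+,x}^{\tran}\Bigr) \Bigl( \prod_{i=d}^1 W_{i,+,x}\,x \;-\; \prod_{i=d}^1 W_{i,+,\xo}\,\xo\Bigr).
\]

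\medskip

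\noindent\textbf{Step 3: iterate WDC to get the ``population" gradient.} The WDC is exactly the deterministic condition that for each layer, $W_{i,+,u}^{\tran} W_{i,+,v} \approx Q_{u,v}$ in spectral norm, where $Q_{u,v} = \tfrac{\pi-\theta}{2\pi} I + \tfrac{\sin\theta}{2\pi} M_{\hat u\leftrightarrow \hat v}$ and $\theta = \angle(u,v)$. In particular $W_{i,+,u}^{\tran} W_{i,+,u} \approx \tfrac12 I$. I will peel off one layer at a time: tracking the angle $\overline{\theta}_i$ between $G_i(x)$ and $G_i(\xo)$ gives the recursion $\overline{\theta}_i = g(\overline{\theta}_{i-1})$ for the function $g$ referenced in the notation section, with $\overline{\theta}_0 = \angle(x, \xo)$. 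Each layer multiplies the squared norms by roughly $1/2$ and transforms the cross terms through $Q_{\cdot,\cdot}$. After $d$ iterations this yields a closed-form vector (in the two-dimensional span of $x$ and $\xo$) of the form
\[
h_{x,\xo} \;\approx\; \frac{1}{2^d} x \;-\; \Bigl(\prod_{i=0}^{d-1}\frac{\pi-\overline{\theta}_i}{\pi}\Bigr) \frac{\|\xo\|_2}{2^d}\,\hat{x}_0 \;-\; \Bigl(\sum_{i=0}^{d-1}\frac{\sin\overline{\theta}_i}{\pi}\prod_{j=i+1}^{d-1}\frac{\pi-\overline{\theta}_j}{\pi}\Bigr)\frac{\|\xo\|_2}{2^d}\,\hat{x}_0^{\perp}
\]
plus spectral-norm error that accumulates polynomially in $d$ and $\eps$ across the $d$ products. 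Bookkeeping these errors carefully (each of the $2^d$ cross-product expansions contributes a controllable term) is what forces the $d^{O(1)}\eps^{1/4}$ scaling in the final neighborhoods.

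\medskip

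\noindent\textbf{Step 4: locating the zero set of the population gradient.} The next step is purely two-dimensional: analyze when the explicit expression above vanishes. Clearly $h_{x,\xo}=0$ when $x = \xo$. The nontrivial observation is that the angle recursion $g$ drives any $\overline{\theta}_0 \in (0,\pi)$ toward $0$ (i.e., $g^{\circ i}$ contracts to $0$), and this produces a second root located along the ray $-\rho_d \xo$, where $\rho_d = \prod_{i=0}^{d-1}(\pi-\overline{\theta}_i)/\pi \cdot (\text{correction})$ evaluated at $\overline{\theta}_0 = \pi$ — a positive scalar which approaches $1$ as $d\to\infty$ because $(\pi - g^{\circ i}(\pi))/\pi \to 1$. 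A direct continuity/monotonicity argument on the two-variable function $(\|x\|_2, \overline{\theta}_0)\mapsto h_{x,\xo}$ then shows that outside the two balls $\calB(\xo,\cdot)$ and $\calB(-\rho_d \xo,\cdot)$, the norm $\|h_{x,\xo}\|_2$ is bounded below by a quantity of order $2^{-d}\sqrt{\eps}\,d^3\max(\|x\|_2,\|\xo\|_2)$, matching the desired lower bound.

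\medskip

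\noindent\textbf{Step 5: assembling and the behavior at $0$.} Combining Steps 2--4, for $x$ outside the two neighborhoods, choosing $v_{x,\xo} = \nabla f(x)$ yields $D_{-v_{x,\xo}} f(x) = -\|\nabla f(x)\|_2 \le -\|h_{x,\xo}\|_2 + (\text{RRIC error}) + (\text{WDC error})$, and the chosen exponents $d^3\eps^{1/4}$, $d^{13}\eps^{1/4}$ guarantee the population term dominates the errors. For the behavior at $x = 0$, $G$ is ReLU-homogeneous so $G(0)=0$ and for any $y\neq 0$ one directly computes $\lim_{t\to 0^+} \tfrac{f(ty)-f(0)}{t\|y\|_2}$ via the linearization $G(ty) = t(\prod_{i=d}^1 W_{i,+,y}) y$: one RRIC application followed by a single WDC bound gives the clean inner product $-\langle (\prod W_{i,+,y})y, (\prod W_{i,+,\xo})\xo\rangle/\|y\|_2$, which via WDC reduces to a positive multiple of $\|\xo\|_2$ times $2^{-d}$ up to an angle factor, yielding the stated $-\tfrac{1}{8\pi 2^d}\|\xo\|_2$ bound.

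\medskip

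\noindent\textbf{Main obstacle.} The delicate part is Step~3: controlling the spectral-norm error when iterating WDC through $d$ layers while simultaneously tracking the angle recursion $\overline{\theta}_i = g(\overline{\theta}_{i-1})$. Naively, each layer introduces error $\eps$ multiplied by the operator norm of what has been assembled so far, but also by the sensitivity of $Q_{\cdot,\cdot}$ to perturbations of its two input directions, and one must resist exponential blowup in $d$. Step~4 is the second hard piece: the spurious critical neighborhood around $-\rho_d \xo$ has no analogue in the usual sparsity-based compressed sensing story, and ruling out other critical points requires a careful 2D geometric argument on the explicit vector formula, using that $g$ strictly contracts angles into $(0,\pi)$.
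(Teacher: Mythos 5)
Your overall roadmap — write $\nabla f(x) = (\prod W_{i,+,x})^{\tran} A^{\tran}A\,(G(x)-G(\xo))$, strip $A^{\tran}A$ with the RRIC, iterate the WDC layer-by-layer to land on a closed-form two-dimensional ``population'' field $h_{x,\xo}$ driven by the angle recursion $\thetabar_i = g(\thetabar_{i-1})$, and then isolate its zero set — is exactly the route the paper takes. However there are two concrete gaps.

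\textbf{(1) The closed-form for $h_{x,\xo}$ is off in a way that matters.} You wrote the $\sin$-series term as a multiple of $\hat{x}_0^{\perp}$. The correct term (which is what actually comes out of peeling off one $Q_{\cdot,\cdot}$ at a time) is a multiple of $\hat{x}$:
\[
h_{x,\xo}
= \frac{1}{2^d}\Bigl[-\Bigl(\prod_{i=0}^{d-1}\frac{\pi-\thetabar_i}{\pi}\Bigr)\xo
+ \Bigl(\|x\|_2 - \sum_{i=0}^{d-1}\frac{\sin\thetabar_i}{\pi}\prod_{j=i+1}^{d-1}\frac{\pi-\thetabar_j}{\pi}\,\|\xo\|_2\Bigr)\hat{x}\Bigr].
\]
This is not cosmetic. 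With this form, decomposing along $\hat x_0$ and $\hat x_0^\perp$ gives the two scalar equations $(-\xi+(r-\zeta)\cos\thetabar_0)$ and $(r-\zeta)\sin\thetabar_0$, and the vanishing of the second factor forces $\sin\thetabar_0\approx 0$ (so $\thetabar_0\approx 0$ or $\approx\pi$) unless $r\approx\zeta$, which in turn forces $\xi\approx0$ and hence $\thetabar_0\approx\pi$ anyway. With your form, the $\hat x_0^\perp$ component is $r\sin\thetabar_0 - \zeta$, which can vanish at interior angles whenever $\zeta>0$; the argument that only the two balls survive would collapse. You need the $\hat{x}$ version.

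\textbf{(2) Nondifferentiability is glossed over in Step 5.} You assert $D_{-v_{x,\xo}}f(x) = -\|\nabla f(x)\|_2$. This identity only holds at points of differentiability; on the piecewise-linear boundaries (where some $W_j(\Pi_{i<j} W_{i,+,x})x$ has a zero coordinate) $\nabla f$ does not exist, and choosing $v_{x,\xo}$ as a one-sided limit of gradients does not by itself make $D_{-v_{x,\xo}}f(x)$ equal to $-\|v_{x,\xo}\|_2$. The paper resolves this by picking a sequence $x_n\to x$ of differentiability points with $D_{-v_x}f(x) = -\lim_n v_{x_n}\cdot(v_x/\|v_x\|_2)$, and then lower-bounding $\lim_n v_{x_n}\cdot v_x \ge \|h_x\|_2^2 - (\text{error})$ via uniform concentration and continuity of $h$. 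Your argument is missing this sequence-and-continuity step, and without it the directional-derivative claim is unjustified exactly on the set where the nonsmoothness of the ReLU could in principle hide a stationary point.

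Everything else in your sketch — the RRIC reduction, the layer-by-layer recursion with per-layer error $O(\eps)$ times the accumulated operator norm, the definition and role of $\rho_d$ and its limit, and the handling of $D_y f(0)$ via one RRIC application plus the lower-bound on $\langle (\Pi W_{i,+,y})y, (\Pi W_{i,+,\xo})\xo\rangle$ — matches the paper.
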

\red{Note that the $2^d$ scaling in the bounds is an artifact of the scaling of the problem and does not indicate a vanishingly small derivative.  Roughly speaking, the $\relu$ activation functions zero out roughly half of its arguments.  Hence, while $W_i$ has spectral norm approximately $1$, the rows of $W_i$ that are retained by the $\relu$ will have spectral norm approximately $1/2$.  Thus, $f(x)$ itself is on the order of $2^{-d}$ under the RRIC and WDC for appropriately small $\eps$.  }

In the case that $A = I_n$, the RRIC is trivially satisfied, and we get the following corollary about inverting multilayer neural networks.
\begin{corollary}[Approximate Invertibility of Multilayer Neural Networks]
If $G$ is a $d$-layer neural network such that $W_i$ satisfies the WDC with constant $\eps$ for all $i = 1 \ldots d$, then the function $f(x) = \|G(x) - G(\xo)\|_2$ has no stationary points outside of a neighborhood around $\xo$ and $-\rho_d \xo$. 
\end{corollary}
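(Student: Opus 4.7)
The corollary is an essentially immediate specialization of Theorem \ref{thm-multi-layer-deterministic} to the case $A = I_n$. The plan has two steps: (i) verify that $A = I_n$ trivially satisfies the RRIC, so that the theorem applies to the squared objective $\tilde f(x) = \tfrac{1}{2}\|G(x)-G(\xo)\|_2^2$; and (ii) transfer the strict descent direction from $\tilde f$ to the non-squared $f(x) = \|G(x)-G(\xo)\|_2$ using $f = \sqrt{2\tilde f}$.

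For step (i), when $A = I_n$ the left-hand side of the RRIC inequality is identically zero for every quadruple $(x_1, x_2, x_3, x_4) \in (\R^k)^4$, so $A = I_n$ satisfies the RRIC with constant $0$ (and in particular with the same $\eps$ appearing in the WDC hypothesis). Combined with the assumed WDC on each $W_i$, Theorem \ref{thm-multi-layer-deterministic} provides, for every nonzero $x$, a direction $\vxxo \in \R^k$ along which $D_{-\vxxo}\tilde f(x)$ is strictly negative outside $\calB(\xo, K_2 d^3 \eps^{1/4}\|\xo\|_2) \cup \calB(-\rho_d\xo, K_2 d^{13}\eps^{1/4}\|\xo\|_2) \cup \{0\}$, and it also gives $D_y \tilde f(0) < 0$ for every $y \neq 0$.

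For step (ii), the map $t \mapsto \sqrt{2t}$ is $C^\infty$ on $(0,\infty)$, so at any $x$ with $G(x) \neq G(\xo)$ the chain rule for one-sided directional derivatives yields
\begin{equation*}
D_v f(x) \;=\; \frac{D_v \tilde f(x)}{\|G(x) - G(\xo)\|_2}
\end{equation*}
for every direction $v$. Since the denominator is positive, the sign of $D_v f(x)$ coincides with that of $D_v \tilde f(x)$; hence $D_{-\vxxo} f(x) < 0$ on the same set as in (i), and $D_y f(0) < 0$ for every $y \neq 0$ (using $G(0) = 0$, which makes $G(0) \neq G(\xo)$ whenever $\xo \neq 0$). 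Points with $G(x) = G(\xo)$ are global minimizers of the nonnegative function $f$ and hence not obstructions.

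The only real issue is conceptual rather than technical: because $G$ is piecewise linear, $f$ fails to be differentiable on a measure-zero set, so the phrase ``stationary point'' must be interpreted in a nonsmooth sense (for instance, via Clarke's subdifferential, or equivalently by requiring every one-sided directional derivative at $x$ to be nonnegative). Under any such convention, the existence of a direction with strictly negative one-sided derivative precludes stationarity, which proves the corollary.
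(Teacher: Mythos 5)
Your proposal takes essentially the same route as the paper: the paper's entire justification is the one-line observation that the RRIC holds trivially (with constant $0$) when $A = I_n$, after which Theorem \ref{thm-multi-layer-deterministic} applies. You additionally handle two points the paper leaves implicit — that the corollary is stated for the unsquared $\|G(x)-G(\xo)\|_2$ while the theorem concerns $\tfrac12\|G(x)-G(\xo)\|_2^2$, and that ``stationary point'' must be read in a nonsmooth (directional-derivative or Clarke) sense — both of which are correct and worth making explicit; your use of the bound $\|\Wipx x_{i-1}\|_2 \geq \sqrt{1/2-\eps}\,\|x_{i-1}\|_2$ (which follows from the WDC) would tighten the final remark that $G(x)=G(\xo)$ forces $x$ into the designated neighborhood.
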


In the case of a Gaussian network with Gaussian measurements, the WDC and RRIC are satisfied with high probability if the network is sufficiently expansive and there are a sufficient number of measurements. 
\begin{proposition} \label{thm-multi-layer}
Fix $0 < \eps < 1$.  Assume $n_i \geq c n_{i-1} \log n_{i-1}$ for all $i = 1 \ldots d$ and $m > c d  k \log \Pi_{i=1}^d n_i$.  Assume the entires of $W_i$ are i.i.d. $\mathcal{N}(0, 1/n_i)$, and the entries of $A$ are i.i.d. $\mathcal{N}(0, 1/m)$.  
Then, $W_i$ satisfies the WDC with constant $\eps$ for all $i$ and $A$ satisfies the RRIC with respect to $G$  with constant $\eps$ with probability at least $1 -  \sum_{i=1}^d \ctilde n_i e^{-\gamma n_{i-1} }- \ctilde e^{-\gamma m}$.  Here, $c$ and $\gamma^{-1}$ are constants that depend polynomially on $\eps^{-1}$, and $\ctilde$ is a universal constant.
\end{proposition}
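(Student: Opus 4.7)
The plan is to establish the WDC and the RRIC separately by combining pointwise matrix concentration with carefully chosen $\epsilon$-net arguments, then union-bounding the failure probabilities.

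For the WDC, by the 2-homogeneity of the defining sum in $x$ and $y$ (the indicators depend only on the directions $\hat x, \hat y$, and the summand $w_i w_i^t$ is independent of $x,y$), it suffices to prove the bound uniformly for $(x,y) \in \spherek \times \spherek$. For a single fixed pair, $\sum_i \indwx \indwy w_i w_i^t$ is a sum of independent, bounded (after truncation to a high-probability event $\{\|w_i\|_2 \le C\sqrt{k/n_i}\}$) random positive-semidefinite matrices with mean $\Qxy$, so the matrix Bernstein inequality yields concentration to within $\eps$ provided $n \gtrsim \eps^{-2} k \log k$. The main obstacle is that the indicator $\indwx$ is discontinuous in $x$, so the usual Lipschitz-net argument fails. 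I would handle this with a tube/discrepancy argument: place a $\delta$-net on $\spherek \times \spherek$ of size $(1/\delta)^{O(k)}$, and for arbitrary $(x,y)$ compare to the nearest net point $(x',y')$. The contribution of indices $i$ where $\indwx \neq \indwxtilde$ is controlled by bounding the number of $w_i$ with $|w_i \cdot \hat{x}| \le \delta$ — this count is a sum of Bernoulli variables with mean $O(\delta n)$, and Bernstein gives concentration. Choosing $\delta$ polynomially small in $\eps$ and $k^{-1}$ and taking a union bound over the net yields the WDC on an event of probability at least $1 - \tilde c n e^{-\gamma n_{i-1}}$, provided $n_i \gtrsim n_{i-1} \log n_{i-1}$.

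For the RRIC, the essential observation is that $G$ is piecewise linear: $G(x) = \bigl(\Pi_{i=d}^1 W_{i,+,x}\bigr) x$ on each region of $\R^k$ on which the sign pattern of every hidden activation is constant. Each linear piece is the image of a polyhedral cone under a fixed matrix, and so differences $G(x_1) - G(x_2)$ lie in a union of at most $N$ subspaces of dimension $k$, where $N$ counts distinct tuples of per-layer sign patterns realizable by $G$. A standard hyperplane-arrangement/VC bound gives that the number of sign patterns realized by $n_i$ hyperplanes restricted to the $k$-dimensional image from earlier layers is $O(n_i^k)$, hence $\log N \lesssim k \sum_i \log n_i = k \log \Pi_i n_i$. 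On each fixed $k$-dimensional subspace, a standard $\eps$-net plus Johnson-Lindenstrauss-style concentration for a Gaussian $A$ gives the four-vector RRIC inner product condition (via polarization, reducing to $|\|Au\|_2^2 - \|u\|_2^2| \le \eps \|u\|_2^2$) with failure probability $e^{-c \eps^2 m}$ on each subspace, when $m \gtrsim \eps^{-2} k$. Union-bounding over the $d$ layers' sign-pattern tuples, and over the polarization-induced collection of subspaces, costs $\log N \cdot d$, giving the sample requirement $m \gtrsim d k \log \Pi_i n_i$ as stated.

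The main obstacle is the WDC net argument: the discontinuity of $\indwx$ in $x$ means one cannot transfer the pointwise matrix-Bernstein estimate to arbitrary $x$ through a Lipschitz bound, and the tube argument has to simultaneously control (i) the number of $w_i$ whose sign with respect to $x$ flips under a $\delta$-perturbation, and (ii) the spectral contribution of those flipped terms $w_i w_i^t$. Both are handled by scalar Bernstein bounds on counts of $w_i$ in small angular bands and on the operator norm of $\sum_{i \in S} w_i w_i^t$ for such bands $S$, but the probabilities must be compatible with the final failure bound $\tilde c n_i e^{-\gamma n_{i-1}}$. Once these two concentration arguments are in place, the remaining step is bookkeeping: combine them with a union bound over all $d$ layers for the WDC, and independently with the RRIC bound for $A$, to obtain the claimed aggregate probability.
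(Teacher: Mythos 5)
Your RRIC argument --- write the range of $G$ as a union of at most $N$ subspaces of dimension $\le k$ with $\log N \lesssim k\log\prod_i n_i$ via a hyperplane-arrangement count of per-layer sign patterns, then apply a Johnson--Lindenstrauss-type concentration for Gaussian $A$ on each subspace and union-bound --- is essentially the paper's proof (the sign-pattern counts of Lemmas~\ref{lemma:combinatorial-bound-Apluszero} and \ref{lemma:num-apx-aox-bpx-box}, fed into the \cite{Baraniuk2008} variant Lemma~\ref{lemma:rip-baraniuk} to obtain Lemma~\ref{lemma:ABCCBA-ABBA-xy}). Your WDC argument, however, is a genuinely different route. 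The paper sidesteps the discontinuity of $x\mapsto\Wpx^t\Wpy$ by sandwiching it between continuous surrogates $\Geps \matrixleq \Wpx^t\Wpy \matrixleq \Gmeps$ built from the soft indicators $\heps,\hmeps$ (which are $1/\eps$-Lipschitz), so that each surrogate admits a standard Lipschitz-net argument (Lemmas~\ref{lemma:Gmeps-uniform} and \ref{lemma:Geps-uniform}) and the $O(\eps)$ bias from softening the step just folds into the target error. You instead work with the discontinuous quantity directly, controlling for each net point $(x',y')$ the set $S$ of rows whose sign flips under a $\delta$-perturbation (a Bernoulli count over a thin angular slab) and then the spectral contribution of $\sum_{i\in S}w_iw_i^t$. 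This can be made to work: with $\|w_i\|_2 = O(\sqrt{k/n_i})$ on a high-probability event, the crude estimate $\|\sum_{i\in S}w_iw_i^t\|\le\sum_{i\in S}\|w_i\|_2^2$ already suffices once $\delta$ is chosen polynomially small in $\eps$ and the latent dimension, and the resulting net size is still absorbed under $n_i\gtrsim n_{i-1}\log n_{i-1}$. The trade-off is that the soft-indicator sandwich avoids any conditioning on the flipped set or explicit band counting, whereas your tube argument is more hands-on and makes the error source explicit; both yield the same scalings and the same aggregate failure probability as in Lemma~\ref{lemma:isometry-apxtapx-uniform}. Two small cautions on your sketch: with entries $\mathcal{N}(0,1/n_i)$ the slab event should be $|w_i\cdot\hat x'| \le \delta\|w_i\|_2$ rather than $|w_i\cdot\hat x| \le \delta$; and ``scalar Bernstein on the operator norm'' is loose --- you need either the trace bound above or a matrix concentration inequality on the band, since a scalar tail bound alone does not control a spectral norm.
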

As stated after Theorem \ref{thm-multi-layer-XX}, no assumption is made on the independence between $W_i$ and $W_j$ for $i \neq j$.
While Proposition \ref{thm-multi-layer} is stated for $A \in \R^{m \times n}$ with i.i.d. Gaussian entries, it also applies in the case of any random matrix that satisfies the following concentration of measure condition:
\begin{align*}
\PP \bigl(| \|Ax\|_2^2  -  \|x\|_2^{2} | \geq \epsilon \|x\|_2^2 \bigr) \leq 2 e^{-m c_0(\epsilon)},
\end{align*}
for any fixed $x \in \R^n$, where $c_0(\epsilon)$ is a positive constant depending only on $\epsilon$.   In particular, Proposition \ref{thm-multi-layer} and hence Theorem \ref{thm-multi-layer-XX} extends to the case of where the entries of $A$ are independent Bernoulli random variables (and the entries of $W_i$ are Gaussian).  See \cite{Baraniuk2008} for more.

\subsection{Discussion}

In this paper, we provide the first rigorous global analysis of the efficacy of enforcing generative neural network priors.  
%These priors have been empirically shown to beat sparsity priors with 10x less data in some compressed sensing contexts.  
We show that if a generative neural network has Gaussian weights and is sufficiently expansive at each layer, then, with high probability, the empirical risk objective applied to the network output has no spurious local minima or saddle points outside two small neighborhoods around the global optimum and a negative reflection of it.  Further, if the output of the network is subject to random Gaussian compressive measurements, then the same conclusion holds with information theoretically optimal sample complexity with respect to the latent code dimensionality.  That is, a convergent gradient descent scheme will approximately invert the generative network, even in the presence of a sufficient number of compressive measurements.  
  
% and is followed by linear compressive measurements with an information-theoretically optimal sample complexity with respect to the latent code dimensionality, then, with high probability, there are no spurious local minima or saddle points of the empirical risk objective outside two small neighborhoods around the global optimum and a negative reflection of it.  The conclusion also holds when the output of the generative network is not subject to compressive measurements.

%\red{
%One benefit of some generative models is that they permit optimization over a low dimensional latent code space \cite{lipton2017precise, Price, zhu2016generative, lucas2018using}, providing robustness to noise and outliers.  Approaches to inverse problems based on such models have already shown success empirically.  In some cases, this latent code space optimization is a direct projection onto the manifold of natural images, as provided by the generative model, and it is empirically robust to noise \cite{lipton2017precise}.  As shown in \cite{lucas2018using}, such a projector can also be used as preprocessing in order to mitigate against adversarial examples, which are minor and sometimes imperceptible modifications to images that lead to catastrophic misclassification by neural networks \cite{szegedy2013intriguing}.  Robustness against adversarial examples is important for the security of machine learning systems \cite{yuan2017adversarial, song2017pixeldefend, yuan2017adversarial, akhtar2018threat}, in particular those that will be part of self-driving cars.  
%}

As this theoretical work is the first of its kind, it leaves open many important questions deserving further research.  Because any particular generative network is unlikely to contain an observed image \textit{exactly}, it is important to establish a similar guarantee to Theorem \ref{thm-multi-layer-XX} in the case that the observed image is not in the range of the generative network.  This line of work includes establishing noise tolerance and robustness to outliers, both of which have been established for sparsity-based compressed sensing. Such results would provide even further theoretical support for several empirical observations about enforcing generative priors via an optimization over latent code space \cite{Price, zhu2016generative}, including empirical robustness of inverting generative models \cite{lipton2017precise}.  In particular, it could help explain the significant observation that generative priors can mitigate against adversarial examples \cite{ilyas2017robust, song2017pixeldefend}, which are minor and sometimes imperceptible modifications to images that lead to catastrophic misclassification by neural networks \cite{szegedy2013intriguing}.  Robustness against adversarial examples is important for the security of machine learning systems \cite{yuan2017adversarial, akhtar2018threat}, in particular those that will be part of self-driving cars.  

% Additionally, it is important to establish that a specific gradient descent algorithm converges to the global optimizer.  The measurement model should also be extended to a broader, and hopefully deterministic, measurement model.  

In this work, we assume that weights of the generative network are modeled by Gaussians.  There is empirical evidence justifying this assumption for trained neural networks \cite{Sanjeev}.  Additionally, previous theoretical work with neural networks, in the area of classification, has also studied Gaussian networks \cite{giryes2016deep}.  As was the case with compressed sensing, where theoretical developments with Gaussians inspired subsequent theory with more realistic measurement models, the novel results of this paper motivate additional theoretical  with more complicated assumptions on the weights of generative networks.  Further, while the generative model assumed in this paper captures key structural elements of real neural networks (each layer acting as a nonlinear function of a linear transformation), this work also motivates establishing similar results for more complicated network structures, including bias terms, convolutional layers, max-pooling, and more.  

Most importantly, we provide in this paper a theoretical framework for studying the enforcement of deep generative priors via empirical risk as a means of regularization on inverse problems. Besides compressive sensing with linear measurements, there are a myriad of inverse problems that may benefit from such an approach. One particularly exciting example is the field of phase retrieval, which is critical in the biological sciences for X-ray crystallography and modern techniques like XFEL-imaging, which is a promising approach that may lead to breakthroughs in understanding of proteins and other molecular structures. Phase retrieval involves recovering vectors from quadratic observations, and enforcing linear sparsity priors subject to such quadratic measurements has been met with potentially fundamental limitations of polynomial time algorithms. In particular, while the theoretically optimal sample complexity of sparse phase retrieval is $O(\red{s} \log n)$, \red{where $s$ is the sparsity of the signal}, it is potentially unobtainable via polynomial time algorithms \cite{barak2016nearly}, which have so far only produced $O(\red{s}^2 \log n)$ efficient reconstruction schemes \cite{li2013sparse}. This bottleneck in sample complexity makes improvements in signal priors critical for the field of phase retrieval to advance.  \red{The present work indicates that it may be possible to use generative priors for problems such as phase retrieval.  If a bound similar to that in our Theorem is true in phase retrieval, it would mean that recovery could be possible with $O(k)$ measurements, where $k$ is the latent code dimensionality.  This could beat sparsity based approaches both because the scaling is linear in the signal's latent dimensionality of the representation, and because $k$ can be smaller than $s$ for the very same signal.}

 More significantly, the regime of using generative modeling as a means of regularization opens new doors for improving the workflow of biological scientists. In modern phase retrieval, modeling assumptions which aid in lowering sample complexity and increasing SNR are all hand-coded, making the process extremely tedious. In contrast, deep generative modeling simply requires obtaining a dataset of previously reconstructed molecular structures, which are easily available in extensive databases amassed over the years of practice of crystallography \cite{velankar2009pdbe}. One may envision training generative models on such datasets and using the resulting neural network priors to regularize the inverse problem of phase retrieval, tabula rasa, and potentially more effectively than hand-modeling ever could, as has been witnessed in the field of computer vision. This makes possible recovering the structure of biological molecules without explicit modeling, freeing up scientists to focus on innovating on new imaging modalities instead of grappling with the tedium of hand-coding their prior knowledge to solve the resulting inverse problems. More broadly, combining the power of deep generative modeling with modern methods of optimization and signal recovery, allows potentially paradigm shifting improvements to the empirical sciences, by taking a data-driven artificial intelligence approach to signal recovery.

\section{Proofs} \label{sec:proofs}

The theorems are proven by a concentration argument.  We show that $\vxxo \in \R^k$ concentrates around a particular $\hxxo \in \R^k$ that is a continuous function of nonzero $x,\xo$ and is zero only at $x = \xo$ and $x = - \rho_d \xo$.  Before we sketch the proof below, we introduce some useful quantities.

In order to analyze which rows of a matrix $W$ are active when computing $\relu(Wx)$, we let 
\[
\Wpx = \diag(Wx>0) W.
\]
For a fixed $W$, the matrix $\Wpx$ zeros out the rows of $W$ that do not have a positive dot product with $x$.  Alternatively put, $\Wpx$ contains weights from only the neurons that are active for the input $x$.  We also define $\Wnpx{1} = (W_1)_{+,x} =  \diag(W_{1}x>0) W_1$ and  
\begin{align*}
\Wipx = \diag(\Wi \Wnpx{i-1} \cdots \Wnpx{2} \Wnpx{1} x > 0) \Wi.
\end{align*}
The matrix $\Wipx$ consists only of the neurons in the $i$th layer that are active if the input to the first layer is  $x$.  Additionally, it will be useful to control how the operator $x\mapsto W_{+,x} x$ distorts angles. In order to study this, we define 
\begin{align}
g(\theta) := \cos^{-1} \Bigl( \frac{ (\pi - \theta) \cos \theta + \sin \theta}{\pi} \Bigr). \label{defn-g}
\end{align}

We now specify the choice of $\vxxo$ as follows.
%We begin by providing an expression for $\vxxo$. 
%We say that $x$ is a `differentiable point' if $G$ is differentiable at $x$ (and hence $f$ is also differentiable at $x$).  
At any $x \in \R^k$ such that $G$ is differentiable at $x$,
\begin{align}
 \nabla f(x) = (\PiWd)^t \AtA (\PiWd) x -  (\PiWd)^t \AtA (\PiWdo) \xo. \label{defn-nabla-f}
\end{align}
Let $w\in \R^k$ be such that $G$ is differentiable at $x+\delta w$ for sufficiently small $\delta$.  Such a $w$ exists by the piecewise linearity of $G$. 
 Let
\begin{align}
\vxxo &= \begin{cases}\nabla f(x) & G \text{ is differentiable at x,}\\ \lim_{\delta \to 0^+} \nabla f(x + \delta w) & \text{otherwise.} \end{cases} \label{defn-vxxo}
\end{align}
Note that the first part of the definition of $\vxxo$ can be viewed as the special case of the second part of the definition with $w=0$.  When $G$ is not differentiable at $x$, multiple values of $\vxxo$ are consistent with the above definition.  These values correspond to the multiple choices of $w$.  The concentration analysis applies simultaneously for all appropriate $w$ because of uniformity in the concentration results below.  

A sketch of the proof is as follows:  
\begin{itemize}
\item The WDC and RRIC imply that
\begin{align*}
\vxxo &\approx (\PiWd)^t  (\PiWd) x -  (\PiWd)^t (\PiWdo) \xo =: 
\vbarxxo,
\end{align*}
uniformly over nonzero $x$ and $\xo$.  See the proof of Theorem \ref{thm-multi-layer-deterministic} in Section \ref{sec:deterministic-theorem}.
\item The WDC implies that
\begin{align*}
\vbarxxo &\approx  - \frac{1}{2^d} \Bigl( \prod_{i=0}^{d-1} \frac{\pi - \thetabar_i}{\pi}  \Bigr)\xo 
+ \frac{1}{2^d} \left[ x - \sum_{i=0}^{d-1} \frac{\sin \thetabar_i}{\pi}  \Bigl( \prod_{j=i+1}^{d-1} \frac{\pi - \thetabar_j}{\pi}  \Bigr)  \frac{\|\xo\|_2}{\|x\|_2} x  \right] =: \hxxo, 
\end{align*}
uniformly over nonzero $x$ and $\xo$, where
$\thetabar_i = g(\thetabar_{i-1})$, $\thetabar_0 = \angle(x,\xo)$, and $\hxxo$ is continuous for nonzero $x, \xo$.  See Sections \ref{sec:angle-contraction} and  \ref{sec:concentration-no-compression}.  
\item Direct analysis shows that $\hxxo \approx 0$ only within a neighborhood of $\xo$ and $-\rho_d \xo$.   See Section \ref{sec:zeros-hxxo}.
\item Arguments from probabilistic concentration theory establish that the WDC and RRIC with high probability for Gaussian matrices of appropriate dimensions.  See Sections \ref{sec:WDC-Gaussian} and \ref{sec:RRIC-Gaussian}, respectively.  These together establish Proposition \ref{thm-multi-layer}.
\end{itemize}
Theorem \ref{thm-multi-layer-XX} is the combination of Theorem \ref{thm-multi-layer-deterministic} and Proposition \ref{thm-multi-layer}.

\red{
The proof capitalizes on the structure of the $\relu$ nonlinearities because it considers points in the range of the generator $G$ to lie in the union of finitely many subspaces, each given by the range of all possible matrices $\Wipx$.   Probabilistic concentration of these matrices requires a bound on the maximum number of such subspaces.  These bounds would be worse or possibly infinite for nonlinearities other than $\relu$.  While the nondeterministic result is stated for Gaussian weight matrices $W_i$, the same analysis would extend to weight matrices whose rows are given by a uniform distribution over a sphere of appropriate radius, as the proof capitalizes on rotational invariance of the neuronal weights.
}

%\red{put in definition of $\hxxo$}
%We prove the theorems using concentration arguments.  The weight distribution condition and the range-restricted isometry condition establish that $\vxxo$ is uniformly close (over all $x$ and $\xo$) to an explicit expression whose zeros are at $\xo$ and $-\rho_d \xo$.  This establishes Theorem \ref{thm-multi-layer-deterministic}.  Arguments from probabilistic concentration theory establish Proposition \ref{thm-multi-layer}, that these two deterministic conditions hold with high probability for appropriate Gaussian networks and measurements.  Theorem \ref{thm-multi-layer-XX} is the combination of these two results.

\subsection{Approximate angle contraction property $\Wpx$} \label{sec:angle-contraction}
In the concentration result of the next section, we will make use  of the fact that the angle between $\Wpx x$ and $\Wpy y$ is approximately $g\bigl(\angle(x, y) \bigr)$ if the WDC holds.  As Figure \ref{fig:plot-of-g} shows, $g$ is monotonic and less than the identity.  Thus, the mapping $x \mapsto W_{+, x} x$ has an approximate angle contraction property in the sense of the following lemma.

\begin{figure}[h!]
\label{fig:plot-of-g}
\begin{center}
\includegraphics[width=0.9 \linewidth]{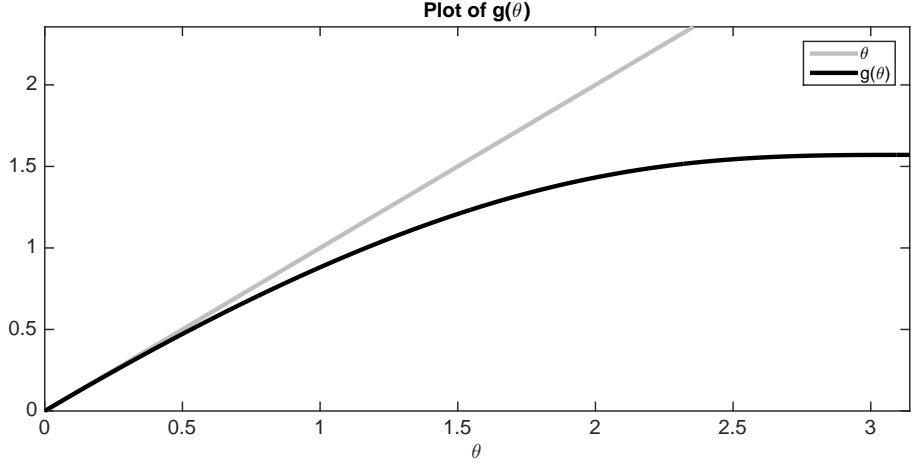}
\end{center}
\caption{A plot of $g(\theta)$ from equation \eqref{defn-g}.}
\end{figure}

%\begin{lemma} \label{lemma:ApxApxtildeDifference}
%Fix $0 < \eps < 1$.  Let $A \in \R^{n \times k}$ have i.i.d. $\calN(0,1/n)$ entries.   If $n > c_1 k \log k$, then with probability at least $1 - n e^{-c_2 k}$, we have $$\| \Apx - \Apy \| \leq  \sqrt{\eps} \quad \text{ for all } x, y \in \spherek \text{ such that } \thetaxy  \leq \eps.$$
%Here, $c_1$ and $c_2$ are constants that depend only on $\eps$.
%\end{lemma}
%\begin{proof}
%We have
%\begin{align*}
%\| \Apx - \Apy \|^2 &= \| \ApxtApx  + \Apyt \Apy - \Apxt \Apy - \Apyt \Apx \| \\
%&\leq  \|\ApxtApx - I_k/2 \| + \|\Apyt \Apy - I_k/2\| \notag\\
%& \quad \quad + 2 \| \Apxt \Apy - \Qxy \| + \| I_k - 2 \Qxy\| 
%\end{align*}
%By Lemma \ref{lemma:isometry-apxtapx-uniform},  if $n >c_1 k \log k$, then there is an event of probability at least $1 - n e^{-c_2 k}$ on which
%\begin{align*}
%&\forall x, \quad \| \ApxtApx - I_k/2 \| \leq \eps/12, \\
%&\forall x, \quad \| \Apxt \Apy - \Qxy \| \leq \eps/12. 
%\end{align*}
%As $\Qxy =  \frac{\pi - \thetaxy}{2\pi} I_k + \frac{\sin \thetaxy}{2 \pi} \Mxyhat$ and $%\thetaxy \leq \eps$, we have
%\begin{align*}
%\| I_k - 2 \Qxy \| = \Bigl \| \frac{\thetaxy}{\pi} I_k - \frac{2 \sin \thetaxy}{2 \pi} \Mxyhat \Bigr \| \leq \frac{2}{\pi} \thetaxy \leq \frac{2}{\pi}\eps.
%\end{align*}
%Thus, $\| \Apx - \Apxtilde\|^2 \leq \frac{4}{12}\eps + \frac{2}{\pi} \eps \leq \eps$.  
%\end{proof}

\begin{lemma} \label{lemma:control-thetatwo} 
Fix $0 < \eps < 0.1$.  Let $W \in \R^{n \times k}$ satisfy the WDC with constant $\eps$.  We have
$\theta_1 := \angle( \Wpx x,  \Wpy y)$ is well-defined for all $x\neq0, y \neq0$,  and 
\[
| \theta_1 - g(\theta_0) | \leq 4 \sqrt{\eps},
\]
where $\theta_0 = \angle(x,y)$ and $g$ is defined by \eqref{defn-g}.
\end{lemma}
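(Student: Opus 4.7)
The plan is to reduce the lemma to a statement about cosines, to which the WDC applies directly, and then invert via an arccos inequality. Writing the defining quantities in terms of $w_i$, we have
\[
\langle \Wpx x, \Wpy y \rangle = x^{\tran} S y, \quad \|\Wpx x\|_2^2 = x^{\tran} T_x x, \quad \|\Wpy y\|_2^2 = y^{\tran} T_y y,
\]
where $S = \sum_i \indwx \inday w_i w_i^{\tran}$, $T_x = \sum_i \indwx w_i w_i^{\tran}$, and $T_y$ is defined analogously.

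First I would handle well-definedness and the denominator. Applying the WDC with $x=y$, note that $Q_{x,x} = \tfrac{1}{2} I_k$ (since $\theta_0 = 0$ makes the $\sin\theta_0$ term vanish). Hence $\|T_x - \tfrac12 I_k\| \leq \eps$, giving
\[
\bigl| \|\Wpx x\|_2^2 - \tfrac{1}{2} \|x\|_2^2 \bigr| \leq \eps \|x\|_2^2,
\]
and similarly for $\Wpy y$. Since $\eps < 0.1 < 1/2$, both $\Wpx x$ and $\Wpy y$ are nonzero, so $\theta_1$ is well-defined, and
\[
\|\Wpx x\|_2 \|\Wpy y\|_2 = \tfrac{1}{2}\|x\|_2 \|y\|_2 + O_1(C_1 \eps \|x\|_2 \|y\|_2)
\]
for a small universal constant $C_1$ obtained by expanding $\sqrt{(1/2 + O(\eps))(1/2 + O(\eps))}$.

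Next I would evaluate $x^{\tran} Q_{x,y} y$ directly from the formula for $Q_{x,y}$. Using that $M_{\hat{x} \leftrightarrow \hat{y}}$ sends $\hat{y}$ to $\hat{x}$, we get $\hat{x}^{\tran} M_{\hat{x}\leftrightarrow \hat{y}} \hat{y} = \hat{x}^{\tran}\hat{x} = 1$, so
\[
x^{\tran} Q_{x,y} y = \frac{\|x\|_2 \|y\|_2}{2\pi}\bigl[(\pi - \theta_0)\cos\theta_0 + \sin\theta_0\bigr] = \frac{\|x\|_2 \|y\|_2}{2}\,\cos g(\theta_0).
\]
The WDC applied to $S$ then gives $|x^{\tran} S y - x^{\tran} Q_{x,y} y| \leq \eps \|x\|_2 \|y\|_2$. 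Dividing by the denominator estimate above and bounding $|\cos g(\theta_0)| \leq 1$ yields a bound of the form
\[
|\cos \theta_1 - \cos g(\theta_0)| \leq C_2 \eps
\]
for an explicit small universal constant $C_2$.

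Finally, to pass from cosines to angles, I would invoke the standard inequality $|\arccos u - \arccos v| \leq \sqrt{2\,|u-v|}$ for $u,v \in [-1,1]$, which follows from the sum-to-product identity $\cos\alpha - \cos\beta = -2\sin(\tfrac{\alpha+\beta}{2})\sin(\tfrac{\alpha-\beta}{2})$ combined with $|\sin t| \geq \tfrac{2}{\pi}|t|$ on $[-\pi/2,\pi/2]$ and a worst-case analysis near $\pm 1$. Substituting our cosine bound yields $|\theta_1 - g(\theta_0)| \leq \sqrt{2 C_2 \eps}$, which is at most $4\sqrt{\eps}$ provided $C_2 \leq 8$; the constants $C_1, C_2$ emerging from steps one through three are easily arranged to satisfy this under $\eps < 0.1$. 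The principal obstacle is the denominator control: the bound on cosines loses a factor when one passes through $\arccos$, and it is the $\sqrt{\eps}$ blow-up near $\theta_0 = 0$ or $\pi$ (where $\sin g(\theta_0)$ is small) that forces the final rate to be $\sqrt{\eps}$ rather than $\eps$. Tracking the constants through the denominator expansion is where care is required to land exactly at the $4\sqrt{\eps}$ bound stated.
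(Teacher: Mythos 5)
Your overall strategy coincides with the paper's: pass to cosines, apply the WDC to get a bound of roughly $5\eps$ on $|\cos\theta_1 - \cos g(\theta_0)|$, and then invert the cosine. However, the auxiliary arccos inequality you invoke is false as stated. The inequality $|\arccos u - \arccos v| \leq \sqrt{2|u-v|}$ fails already at $v = 1$, $u = \cos t$: there $|u - v| = 1 - \cos t = 2\sin^2(t/2)$, so $\sqrt{2|u-v|} = 2|\sin(t/2)| < t = |\arccos u - \arccos v|$ for every $t > 0$. The sharp constant obtainable from the sum-to-product identity together with $\sin t \geq \tfrac{2}{\pi} t$ on $[0,\pi/2]$ is in fact $\pi/\sqrt{2}$, not $\sqrt{2}$: writing $\alpha = \arccos u$, $\beta = \arccos v$, $s = \tfrac{\alpha+\beta}{2}$, $t = \tfrac{|\alpha-\beta|}{2}$, one has $|u-v| = 2\sin s\,\sin t \geq 2\sin^2 t \geq \tfrac{8}{\pi^2}t^2$, giving $|\alpha-\beta| = 2t \leq \tfrac{\pi}{\sqrt{2}}\sqrt{|u-v|}$.

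The distinction matters here. With the correct constant $\pi/\sqrt{2}$ and $C_2 = 5$ one gets $|\theta_1 - g(\theta_0)| \leq \tfrac{\pi}{\sqrt{2}}\sqrt{5\eps} \approx 4.97\sqrt{\eps}$, which overshoots the stated bound $4\sqrt{\eps}$. To close the gap you must also exploit $\eps < 0.1$: since $5\eps < 1/2$ implies $\sin t \leq \sqrt{|u-v|/2} < 1/2$, hence $t < \pi/6$, on which interval the sharper concavity bound $\sin t \geq \tfrac{3}{\pi}t$ holds. This yields $|\alpha-\beta| \leq \tfrac{\pi\sqrt{2}}{3}\sqrt{|u-v|} \leq \tfrac{\pi\sqrt{10}}{3}\sqrt{\eps} \approx 3.3\sqrt{\eps} < 4\sqrt{\eps}$. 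Your final paragraph gestures at constant-tracking but locates the difficulty in the denominator expansion; in fact the denominator and numerator perturbations combine cleanly to give $C_2 = 5$ exactly as in the paper, and the delicate point is entirely in the arccos step, which your stated inequality does not handle correctly.
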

\begin{proof}
It suffices to establish that for all $x \neq 0, y \neq 0$,
\[
\Biggl |\cos \theta_1 -  \frac{(\pi - \theta_0) \cos \theta_0 + \sin \theta_0}{\pi} \Biggr | \leq 5 \eps.
\]
Without loss of generality, consider only $x, y \in \spherek$. 
By the WDC, $\|\Wpxt \Wpy - \Qxy \| \leq  \eps$ for all $x, y \in \spherek$.   Let 
\begin{align*}
\delta_1 &= \langle x, (\Wpxt \Wpy -  \Qxy) y \rangle\\
 \delta_2 &= \langle x, (\Wpxt \Wpx -  I_{k}/2) x\rangle\\
  \delta_3 &= \langle y, (\Wpyt \Wpy -  I_{k}/2) y \rangle.
\end{align*}
We have $\max(|\delta_1|, |\delta_2|, |\delta_3|) \leq \eps$ for all $x,y \in S^{k-1}$.   
As $|\delta_2|\leq \eps$, $|\delta_3| \leq \red{\eps}$, and $\eps < 1/2$, 
$ \Wpx x \neq 0$ and $\Wpy y \neq 0$.  Thus, $\theta_1$ is well-defined.  We have
\begin{align*}
\cos \theta_1  &= \frac{\langle \Wpx x, \Wpy y \rangle}{\| \Wpx x\|_2 \| \Wpy y\|_2} \\
&= \frac{\langle  x, \Wpxt \Wpy y \rangle}{\sqrt{ \langle x, \WpxtWpx x \rangle \langle y, \WpxtWpx y \rangle}}\\
&= \frac{\langle x, \Qxy y \rangle + \delta_1 }{\frac{1}{2} \sqrt{(1+2 \delta_2) (1 + 2 \delta_3)}}
\end{align*}
Thus, 
\begin{align*}
| \cos \theta_1 - 2 \langle x, \Qxy y \rangle | &\leq  2 |\langle x, \Qxy y\rangle| \Biggl |1 - \frac{1}{\sqrt{(1 + 2\delta_2) (1 + 2\delta_3)}} \Biggr| + 2 \delta_1 \frac{1}{\sqrt{(1+ 2\delta_2) (1 + 2\delta_3)}}  \\
& \leq  \left |1 - \frac{1}{(1 - 2 \eps)} \right | + 2 \eps \frac{1}{(1- 2 \eps)}  \\
& \leq 5\eps
\end{align*}
where the second line follows as $2 |\langle x, \Qxy y \rangle | \leq 2 \|\Qxy\| \leq 1$ and $\max(|\delta_1|, |\delta_2|, |\delta_3|) \leq \eps$, and the last line follows because $\eps < 0.1$.   The proof is concluded by noting that $2 \langle x, \Qxy y \rangle = \frac{1}{\pi} \bigl[ (\pi - \theta_0) \cos \theta_0 + \sin \theta_0 \bigr]$. 

\end{proof}

\subsection{Concentration of terms without compression} \label{sec:concentration-no-compression}

At points of differentiability $x$, we  have 
\begin{align*}
\vxxo =(\PiWd)^t \AtA (\PiWd) x -  (\PiWd)^t \AtA (\PiWdo) \xo.
\end{align*} 
In this section, we prove that the WDC establishes concentration for the terms in $\vxxo$ uniformly in $x$ and $\xo$ in the compressionless case of $A = I_n$.  The case with compression will use the concentration of these terms too.

\begin{lemma}\label{lemma:Gradient-multilayer-concentration} 
Fix $0 < \eps <  d^{-4}/(16 \pi)^2$ and $d \geq 2$.    
Suppose that $W_i \in \R^{n_i \times n_{i-1}}$ satisfies the WDC with constant $\eps$ for $i = 1\ldots d$.
%Let $B \in \R^{n_2 \times n_1}$ have i.i.d. $\calN(0, 1/n_2)$ entries.
Define
%\begin{align}
%v_{x,y} = ( \PiWd)^t ( \PiWd ) x- ( \PiWd)^t ( \PiWdy) y   \label{defn-vxy-d-layer}
%\end{align}
%and
\[
\htildexy =  \frac{1}{2^d} \left[ \Bigl( \prod_{i=0}^{d-1} \frac{\pi - \thetabar_i}{\pi}  \Bigr)y 
+ \sum_{i=0}^{d-1} \frac{\sin \thetabar_i}{\pi}  \Bigl( \prod_{j=i+1}^{d-1} \frac{\pi - \thetabar_j}{\pi}  \Bigr)  \frac{\|y\|_2}{\|x\|_2} x  \right],
% \frac{\Pi_{i=0}^{d-1} (\pi-\bar{\theta}_i)}{(2\pi)^d}y + \left[ \sum_{i=0}^{d-1} \frac{\sin \bar{\theta}_{i}}{2\pi} \frac{1}{2^{i}}\left( \Pi_{j=i+1}^{d-1} \frac{\pi - \bar{\theta}_{j} }{2\pi} \right)  - \frac{1}{2^d} \frac{\|x\|_2}{\|y\|_2}  \right]\frac{\|y\|_2}{\|x\|_2} x
\] 
where $\bar{\theta}_i = g(\thetabar_{i-1})$ for $g$ given by \eqref{defn-g} and $\thetabar_0 = \angle(x,y)$.  For all $x \neq 0$ and $y \neq 0$, 
\begin{align}
\|( \PiWd)^t ( \PiWdy) y - \htildexy  \|_2 &\leq 24 \frac{d^3 \sqrt{\eps}}{2^{d}} \|y\|_2, \text{ and } \label{vxyhxy-d-cubed}  \\
\bigl \langle  (\PiWd) x, ( \PiWdy) y \bigr \rangle  &\geq \frac{1}{4 \pi} \frac{1}{2^d} \|x\|_2 \|y\|_2. \label{piwxpiwyy-lowerbound}
\end{align}
%where $\hxy = \Bigl[ \frac{(\pi - \alphabarxy)(\pi - \thetaxy)}{4 \pi^2} y + \Bigl( \frac{(\pi - \alphabarxy) \sin \thetaxy}{4 \pi^2}  + \frac{\sin \alphabarxy}{4 \pi}   \Bigr) \|y\|_2 \xhat  \Bigr]$.
\end{lemma}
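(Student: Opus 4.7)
The plan is to prove both bounds by induction on $d$, tracking how WDC errors propagate layer by layer. Define intermediate activations $u_0 = x$, $v_0 = y$, and for $i \geq 1$, $u_i = \Wnpx{i} u_{i-1}$ and $v_i = \Wnpy{i} v_{i-1}$, so that $(\PiWd)x = u_d$ and $(\PiWdy) y = v_d$; let $\theta_i = \angle(u_i, v_i)$. Two consequences of the WDC will be used throughout. First, specializing to $y = x$ in \eqref{WDC} gives $\Wnpx{i}^\tran \Wnpx{i} = \tfrac{1}{2} I + O_1(\eps)$ on the relevant subspace, so telescoping yields $\|u_i\|_2^2 = \|x\|_2^2 \cdot 2^{-i}(1 + O(i\eps))$, and similarly for $\|v_i\|_2$. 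Second, the general WDC gives $\Wnpx{i}^\tran \Wnpy{i} = Q_{u_{i-1}, v_{i-1}} + O_1(\eps)$, whose key action is $Q_{u,v} v = \tfrac{\pi - \theta}{2\pi} v + \tfrac{\sin \theta}{2\pi} \|v\|_2 \hat{u}$ for $\theta = \angle(u,v)$. Iterating Lemma \ref{lemma:control-thetatwo} together with the fact that $g$ is $1$-Lipschitz on $[0,\pi]$ gives $|\theta_i - \thetabarn{i}| \leq 4i\sqrt{\eps}$, which bounds each of $|\cos\theta_i - \cos\thetabarn{i}|$, $|\sin\theta_i - \sin\thetabarn{i}|$, and $|(\pi - \theta_i)/\pi - (\pi - \thetabarn{i})/\pi|$ by $O(i\sqrt{\eps})$.

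For the first bound, peel off the outermost pair of factors and apply the WDC to get
\[
(\PiWd)^\tran(\PiWdy)y \approx \Wnpx{1}^\tran \cdots \Wnpx{d-1}^\tran \Bigl[ \tfrac{\pi - \theta_{d-1}}{2\pi} v_{d-1} + \tfrac{\sin\theta_{d-1}}{2\pi} \|v_{d-1}\|_2 \hat{u}_{d-1}\Bigr].
\]
The first summand is $\tfrac{\pi - \theta_{d-1}}{2\pi}$ times the $(d-1)$-layer version of the left-hand side. For the second, iterate $\Wnpx{i}^\tran u_i \approx \tfrac{1}{2} u_{i-1}$ to obtain $\Wnpx{1}^\tran \cdots \Wnpx{d-1}^\tran u_{d-1} \approx x/2^{d-1}$, then divide by $\|u_{d-1}\|_2 \approx \|x\|_2/2^{(d-1)/2}$ and use $\|v_{d-1}\|_2 \approx \|y\|_2/2^{(d-1)/2}$; replacing $\theta_{d-1}$ by $\thetabarn{d-1}$, the whole expression becomes
\[
\tfrac{\pi - \thetabarn{d-1}}{2\pi} (\PiWn{d-1})^\tran (\PiWny{d-1}) y + \tfrac{\sin\thetabarn{d-1}}{2\pi \cdot 2^{d-1}} \cdot \tfrac{\|y\|_2}{\|x\|_2} x,
\]
which is precisely the recurrence satisfied by $\htildexy$ (with base case $d = 1$ direct from the WDC applied to $W_1$). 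Let $E_d$ denote the $\ell_2$ error in this approximation. The per-step additive error (combining WDC, angle, and norm approximations, each suppressed to the $2^{-d}$ scale by the telescoping $\|u_i\|_2, \|v_i\|_2$) is bounded by $C d^2 \sqrt{\eps}/2^d \cdot \|y\|_2$, and since the recursive coefficient $(\pi - \thetabarn{d-1})/(2\pi) \leq 1/2$, the inequality $E_d \leq \tfrac{1}{2} E_{d-1} + C d^2 \sqrt{\eps}/2^d \cdot \|y\|_2$ unrolls to $E_d \leq 24 d^3 \sqrt{\eps}/2^d \cdot \|y\|_2$.

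For the inner product bound, write $\langle u_d, v_d \rangle = \|u_d\|_2 \|v_d\|_2 \cos \theta_d$. The norm estimates give $\|u_d\|_2 \|v_d\|_2 \geq (1 - O(d\sqrt{\eps})) \|x\|_2 \|y\|_2/2^d$. Since $g$ is monotone increasing on $[0,\pi]$ with $g(0) = 0$ and $g(\pi) = \pi/2$, iterating yields $\thetabarn{i} \in [0, \pi/2]$ for $i \geq 1$; a direct analysis of $g$ shows that $\cos\thetabarn{d}$, as a function of $\theta_0 \in [0, \pi]$, is minimized at $\theta_0 = \pi$, so $\cos\thetabarn{d} \geq \cos\thetabarn{2} = 1/\pi$ for all $d \geq 2$. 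Combined with $|\cos\theta_d - \cos\thetabarn{d}| \leq 4d\sqrt{\eps} \leq 1/(4\pi)$ under the hypothesis $\eps < d^{-4}/(16\pi)^2$, this gives $\cos\theta_d \geq 1/(2\pi)$, whence $\langle u_d, v_d \rangle \geq \|x\|_2\|y\|_2/(4\pi \cdot 2^d)$. The main obstacle is the error bookkeeping in the first claim: obtaining a final error of only $O(d^3\sqrt{\eps}/2^d)$ rather than something that blows up exponentially relies on the recurrence being a genuine contraction ($\tfrac{\pi - \thetabarn{d-1}}{2\pi} \leq \tfrac{1}{2}$) and on every per-step error entering at the $2^{-d}$ scale through the telescoping norm suppression $\|u_{d-1}\|_2 \|v_{d-1}\|_2 \lesssim \|x\|_2\|y\|_2/2^{d-1}$, so that the accumulated error grows only polynomially in $d$.
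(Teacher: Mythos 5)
Your proposal is correct and follows essentially the same route as the paper's proof: both use the WDC to peel off the outermost layer and arrive at the same recurrence $\Gamma_d \approx \frac{\pi - \thetabarn{d-1}}{2\pi}\Gamma_{d-1} + \frac{\sin\thetabarn{d-1}}{\pi}\frac{\|y\|_2}{\|x\|_2}\frac{x}{2^d}$, both iterate the angle-contraction lemma to get $|\theta_i - \thetabarn i|\leq 4i\sqrt{\eps}$, and both use the $\leq 1/2$ contraction factor plus the $2^{-d}$ scaling of $\|u_i\|_2\|v_i\|_2$ to keep the error polynomial in $d$. The only stylistic difference is that you bound the error $E_d$ directly by recursion, whereas the paper first solves the recurrence explicitly (via its displayed formula $\Gamma_d=(\prod_i s_i)y + \sum_i r_i\prod_{j>i}s_j$) and then compares term by term; your constant-level bookkeeping (the factor $24$) is asserted rather than derived, but the structure that produces the $d^3\sqrt{\eps}/2^d$ bound is sound and matches the paper's.
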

\begin{proof} 
\ 

\textit{Part I: Assembling some useful bounds.}  
%We have $\mathbb{E} \left[ A_{+,x}^t A_{+,y} \right] = Q_{x,y} = \frac{(\pi - \theta_{x,y})}{2\pi} I_{n} + \frac{\sin \theta_{x,y}}{2\pi} M_{\hat{x} <-> \hat{y}}$.  

Define $x_0 = x$, $y_0=y$, 
\[
x_d := \left( \Pi_{i=d}^1 W_{i,+,x} \right) x = (W_{d,+,x}W_{d-1,+,x}\ldots W_{1,+,x}) x
= W_{d,+,x} x_{d-1} = (W_d)_{+,x_{d-1}} x_{d-1},
\]
and analogously $y_d = \left( \Pi_{i=d}^1 W_{i,+,y} \right) y$, where $\Wippx = \diag(W_i x > 0) W_i$.

By the WDC, we have for all $i = 1 \ldots d$, for all $x\neq 0, y \neq 0$,
\begin{align}
\| \Wippxt \Wippy - Q_{x,y}\| \leq \eps, \label{defn-E1}
\end{align}
  In particular, $\|\Wipx^t \Wipy - Q_{x_{i-1}, y_{i-1}} \| \leq \eps$.   
We now detail several bounds that follow from the WDC.  Most immediately, we have that for all $x \neq 0$ and for all $i = 1 \ldots d$, 
\begin{align}
\Bigl \| \Wipx^t \Wipx - \frac{1}{2} I_{n_{i-1}} \Bigr\| \leq \eps, \label{defn-E1-symmetric-case}
\end{align}
and consequently, 
\[
\frac{1}{2} - \eps \leq \| \Wipx\|^2 \leq \frac{1}{2} + \eps.
\]
Hence,
\begin{align}
\|\PiWd\| \| \PiWdy \|  \leq \frac{1}{2^d} (1 + 2 \eps)^d = \frac{1}{2^d} e^{d \log(1+2 \eps)} \leq \frac{1 + 4 \eps d}{2^d}, \label{bound-piwixiy-on-E1}
\end{align}
where we used that $\log(1+z) \leq z$, $e^z \leq 1 + 2 z$ for $z < 1$, and $2 d \eps < 1$. 
We also have for all $x\neq 0$ that 
\begin{align}
\sqrt{\frac{1}{2} - \eps} \| x_{i-1} \|_2 \leq \| x_i \|_2 \leq \sqrt{\frac{1}{2} + \eps}\|x_{i-1}\|_2.  \label{bound_xi_ximinusone}
\end{align}
Thus, we have that for all $x,y\neq 0$, 
\begin{align*}
\left(\frac{1-2\epsilon}{1+2\epsilon}\right)^{d/2} \frac{\|y\|_2}{\|x\|_2} \leq \frac{\|y_d\|_2}{\|x_d\|_2} \leq \left(\frac{1+2\epsilon}{1-2\epsilon}\right)^{d/2} \frac{\|y\|_2}{\|x\|_2}. 
\end{align*}
Note that if $4 \eps d < 1$, 
\[
\Bigl(\frac{1+2 \eps}{1 - 2 \eps} \Bigr)^{d/2} \leq (1 + 8 \eps)^{d/2} =  e^{\frac{d}{2} \log(1+8\eps)} \leq e^{4 d \eps} \leq 1 + 8 d \eps
\]
where we used that $ \frac{1+z}{1-z} \leq 1+4z$ for $0\leq z\leq \frac{1}{2}$, $e^z \leq 1+2z$ for $0\leq z\leq1$, and $4 d \eps < 1$.  As $1-z \leq (1+z)^{-1}$ for $z>0$, we also have
\[
\Bigl(\frac{1-2 \eps}{1 + 2 \eps} \Bigr)^{d/2} \geq 1 - 8 d \eps,
\]
and thus, if $4 \eps d<1$,
\begin{align}
(1-8 d \eps) \frac{\|y\|_2}{\|x\|_2} \leq \frac{\|y_d\|_2}{\|x_d\|_2} \leq (1+8 d \eps) \frac{\|y\|_2}{\|x\|_2}. \label{bound-yd-over-xd-on-E1}
\end{align}

By Lemma \ref{lemma:control-thetatwo}, the WDC implies that $\theta_{\Vpx x, \Vpy y}$ is well-defined for all nonzero $x$ and $y$, and for $V = W_1, \ldots, W_d$, and
\begin{align}
| \theta_{\Vpx x, \Vpy y} - g(\theta_{x,y})| \leq \delta \quad \text{ for all } x\neq0, y\neq 0, V = W_1, \ldots, W_d, \label{theta-vplusx-vplusy}
\end{align}
where $g(\theta)$ is given by \eqref{defn-g},  and $\delta := 4 \sqrt{\eps}$.  
  Define $\theta_{i} :=\angle (x_{i}, y_{i}) \in [0, \pi]$.  
Note that by \eqref{theta-vplusx-vplusy}, we have\footnote{See Section \ref{sec:notation} for the meaning of $O_1$.}  $\theta_d = g(\theta_{d-1}) + O_1(\delta)$ for all $d$, and thus
$\theta_d = g(g(\cdots g(g(\theta_0) + O_1(\delta) ) +O_1(\delta)  \cdots) + O_1(\delta))   + O_1(\delta)$. Because $|g'(\theta)| \leq 1$ for all $\theta$ and because $\thetabar_d = g(g(\cdots g(\theta_0)\cdots)) = g^{\circ d} (\theta_0)$, we have
\begin{align}
|\theta_d - \overline{\theta}_d | \leq d \delta = 4 d \sqrt{\eps}. \label{bound-thetad-thetadbar}
\end{align}

\textit{Part II: Establishing \eqref{piwxpiwyy-lowerbound}}\\
We have that $\cos \theta_d \geq 3/(4 \pi)$ by combining \eqref{bound-thetad-thetadbar},  $\thetabar_d \leq \cos^{-1}(\frac{1}{\pi})$ for $d \geq 2$,  and $4 \pi d 4 \sqrt{\eps} \leq 1$.  Additionally, by \eqref{bound_xi_ximinusone}, we have $\|x_d\|_2 \|y_d\|_2 \geq \|x\|_2 \|y\|_2 \bigl( \frac{1}{2} - \eps  \bigr)^d \geq  \|x\|_2 \|y\|_2 \frac{1 - 2 d \eps}{2^d}$, where the last inequality holds as $\eps \leq 1/2$. If $2 d \eps \leq 2/3$, we have 
\[
\bigl \langle  (\PiWd) x, ( \PiWdy) y \bigr \rangle  = \cos(\theta_d) \|x_d\|_2 \|y_d\|_2 \geq \frac{1}{4 \pi} \frac{1}{2^d} \|x\|_2 \|y\|_2,
\]
which establishes \eqref{piwxpiwyy-lowerbound}.

\textit{Part III: Establishing \eqref{vxyhxy-d-cubed}}\\
This proof will proceed by setting up and solving a recurrence relation.  We will use that  
\begin{align}
\Gamma_d = s_d \Gamma_{d-1} + r_d, \ \Gamma_0 = y \quad \Rightarrow \quad \Gamma_d = \Bigl(\prod_{i=1}^d s_i \Bigr) y + \sum_{i=1}^d \Bigr( r_i \prod_{j = i+1}^d s_j \Bigr). \label{recurrence-relation}
\end{align}

First, we derive a recurrence relation and solve for $(\PiWd)^t (\PiWd)$.  We have
%First, we begin by showing that if $\eps < \frac{1}{4d}$, then
%\begin{align*}
%(\PiWd)^t \PiWd = \frac{I}{2^d} + O_1(\frac{4 d \eps}{2^d}).
%\end{align*}
%This follows by writing the recurrence relation
\begin{align*}
M_d:=  (\PiWd)^t (\PiWd) 
&= (\PiWn{d-1})^t \Bigl(\frac{1}{2}I_{n_{d-1}} + O_1(\eps) \Bigr)  (\PiWn{d-1}) \\
&= \frac{1}{2} (\PiWn{d-1})^t (\PiWn{d-1})  + O_1(\eps \Pi_{i=1}^{d-1} \| \Wipx\|^2) \\
&= \frac{1}{2} M_{d-1} + O_1 \Bigl(\eps \frac{1 + 4 \eps (d-1)}{2^{d-1}} \Bigr),
\end{align*}
where the first equality follows by \eqref{defn-E1-symmetric-case}, and the third equality follows from \eqref{bound-piwixiy-on-E1}, as $2 \eps d \leq 1$.
Solving this recurrence relation with $M_0 = I_{n_0}$ by \eqref{recurrence-relation}, we get that if $4 d \eps \leq 1$, then 
\begin{align}
(\PiWd)^t (\PiWd)  &= \frac{1}{2^d} I_{n_0} + \sum_{i=1}^d O_1\Bigl(\eps \frac{1 + 4 \eps (i-1)}{2^{i-1}} \Bigr) \frac{1}{2^{d-i}} \notag \\
&= \frac{1}{2^d}I_{n_0} + \frac{4 \eps d}{2^d}  O_1(1).  \label{error-bound-md}
\end{align}
Thus, 
\begin{align}
( \PiWd)^t ( \PiWd ) x = \frac{1}{2^d} x + O_1\Bigl( \frac{4 \eps d}{2^d} \Bigr) \|x\|_2 \label{vxy-first-term-d-layer-estimate}
\end{align}

Next, we  derive a recurrence relation for $\Gamma_d := (\PiWd)^t (\PiWdy) y$.  
%When taking products of elements in a non-commutative space we allow the indices of a product to run backward, while if the product of scalars has indices which run from a larger integer to a smaller one, we use the convention that this is an empty product equal to 1. We use $\approx$ to denote equality up to $O( \max( \|x\|_2, \|y\|_2) \frac{\epsilon (d-1)}{2^{d-1}})$ with high probability, with the appropriate concentration bound. We have
\begin{align}
\Gamma_{d} &= (\PiWn{d-1})^t (\Wnpx{d}^t \Wnpy{d}) (\PiWny{d-1}) y\\
&= (\PiWn{d-1})^t  \left( \frac{\pi-\theta_{d-1}}{2\pi}I_{n_{d-1}} + \frac{\sin \theta_{d-1}}{2\pi} M_{\hat{x}_{d-1} \leftrightarrow \hat{y}_{d-1} }  + O_1(\eps) \right) \left( \Pi_{i=d-1}^1 W_{i,+,y} \right) (y) \notag \\
%& \quad + O( \frac{1}{2^{d-1}} (1+ \red{4} \epsilon(d-1)) \|y\|_2 \epsilon)\\ %O( (\frac{1}{2}+\epsilon)^{(d-1)} \|y\|_2 \epsilon) \\
&= \frac{\pi-\theta_{d-1}}{2\pi} \Gamma_{d-1}  +  \frac{\sin \theta_{d-1}}{2\pi} \frac{\|y_{d-1}\|_2}{\|x_{d-1}\|_2} (\PiWn{d-1})^t (\PiWn{d-1}) x + \eps \Bigl( \frac{1 + 4 \eps d}{2^{d-1}} \Bigr) \|y\|_2 O_1(1) \notag\\
&= \frac{\pi-\theta_{d-1}}{2\pi} \Gamma_{d-1}  +  \frac{\sin \theta_{d-1}}{2\pi} \frac{\|y_{d-1}\|_2}{\|x_{d-1}\|_2} \frac{x}{2^{d-1}} + \frac{1}{2\pi} \frac{\|y_{d-1}\|_2}{\|x_{d-1}\|_2}  \frac{4 d\eps}{2^{d-1}} \|x\|_2 O_1(1) + \eps \Bigl( \frac{1 + 4 \eps d}{2^{d-1}} \Bigr) \|y\|_2 O_1(1) \notag\\
&= \frac{\pi-\theta_{d-1}}{2\pi} \Gamma_{d-1}  +  \frac{\sin \theta_{d-1}}{\pi} \frac{\|y_{d-1}\|_2}{\|x_{d-1}\|_2} \frac{x}{2^d} + \frac{1+8d\eps}{2\pi}  \frac{4 d\eps}{2^{d-1}} \|y\|_2 O_1(1) + \eps \Bigl( \frac{2}{2^{d-1}} \Bigr) \|y\|_2 O_1(1) \notag\\
&= \frac{\pi-\theta_{d-1}}{2\pi} \Gamma_{d-1}  +  \frac{\sin \theta_{d-1}}{\pi} \frac{\|y_{d-1}\|_2}{\|x_{d-1}\|_2} \frac{x}{2^d} + \eps \Bigl(\frac{3}{2\pi} \frac{2 \cdot 4 d }{2^d} +  \frac{4}{2^d} \Bigr) \|y\|_2  O_1(1) \notag\\
%&= \frac{\pi-\theta_{d-1}}{2\pi} \Gamma_{d-1}  +  \frac{\sin \theta_{d-1}}{2\pi} \frac{\|y_{d-1}\|_2}{\|x_{d-1}\|_2} \frac{x}{2^d} + O_1(\eps \frac{4}{2^d} + \frac{3}{2\pi} \frac{2 \cdot 4 d \eps}{2^d} ) \|y\|_2 \\
&= \frac{\pi-\theta_{d-1}}{2\pi} \Gamma_{d-1}  +  \frac{\sin \theta_{d-1}}{\pi} \frac{\|y_{d-1}\|_2}{\|x_{d-1}\|_2} \frac{x}{2^d} + \frac{8}{2^d} d \eps \|y\|_2 O_1(1) \notag\\
&= \frac{\pi-\theta_{d-1}}{2\pi} \Gamma_{d-1}  +  \frac{\sin \theta_{d-1}}{\pi} \frac{\|y\|_2}{\|x\|_2} \frac{x}{2^d} + \frac{1}{2^d} \frac{8 d \eps}{ \pi} \|y\|_2 O_1(1) + \frac{8}{2^d} d \eps \|y\|_2 O_1(1)\notag \\
&= \frac{\pi-\theta_{d-1}}{2\pi} \Gamma_{d-1}  +  \frac{\sin \theta_{d-1}}{\pi} \frac{\|y\|_2}{\|x\|_2} \frac{x}{2^d} + 11 d \eps \frac{\|y\|_2}{2^d} O_1(1) \notag\\
&= \Bigl( \frac{\pi-\thetabar_{d-1}}{2\pi} +O_1 \bigl(\frac{d \delta}{2 \pi} \bigr) \Bigr) \Gamma_{d-1}  +  \frac{\sin \thetabar_{d-1}}{\pi} \frac{\|y\|_2}{\|x\|_2} \frac{x}{2^d} + 2 d \delta \frac{\|y\|_2}{2^d} O_1(1)  \label{recurrence-relation-gammad}
\end{align}
where the second line follows from \eqref{defn-E1}, the definition $\Wnpy{d}= (W_d)_{+, y_{d-1}}$, and the definition of $Q_{x_d, y_d}$ in \eqref{WDC}; the third line follows by the definition of $M_{\hat{x}_{d-1} \leftrightarrow \hat{y}_{d-1} }$ in \eqref{WDC}, the bound \eqref{bound-piwixiy-on-E1}, and the definition of $x_{d-1}$; the fourth line uses \eqref{error-bound-md}; the fifth line follows from $4 \eps d \leq 1$ and \eqref{bound-yd-over-xd-on-E1}; the eighth line follows from \eqref{bound-yd-over-xd-on-E1}; and the last line follows from \eqref{bound-thetad-thetadbar}, $ \delta = 4 \sqrt{\eps}$, and $11 \sqrt{\eps} \leq 4$.

Solving the recurrence relation \eqref{recurrence-relation-gammad} using \eqref{recurrence-relation}, we get that
\begin{align}
\Gamma_d = &\bigPioned \Bigl[\frac{\pi - \thetabar_{i-1}}{2\pi} + O_1\bigl(\frac{i \delta}{2 \pi} \bigr) \Bigr] y  \notag \\ & \ \ +
\sum_{i=1}^d  \Bigl( \frac{\sin \thetabar_{i-1}}{ \pi} \frac{\|y\|_2}{\|x\|_2} \frac{x}{2^i} + \frac{2 i \delta}{2^i} \|y\|_2 O_1(1) \Bigr) \prod_{j=i+1}^d \Bigl( \frac{\pi - \thetabar_{j-1}}{2\pi} + O_1\bigl(\frac{j \delta}{2\pi} \bigr)    \Bigr) \label{soln-recurrence-Gammad}
\end{align}
First, we control the first term of $\Gamma_d$ in \eqref{soln-recurrence-Gammad}.  We have
\begin{align}
\left |   \bigPioned \Bigl[\frac{\pi - \thetabar_{i-1}}{2\pi} + O_1 \bigl(\frac{i \delta}{2 \pi} \bigr) \Bigr] y - \bigPioned \Bigl[ \frac{\pi - \thetabar_{i-1}}{2\pi} \Bigr]  y  \right | & \leq \Bigl( \bigPioned \Bigl[\frac{1}{2} + \frac{i \delta}{2\pi}  \Bigr] - \frac{1}{2^d} \Bigr) \|y\|_2 \notag\\
&\leq \frac{1}{2^d} \Bigl[ \Bigl(1 + \frac{d \delta}{\pi} \Bigr)^d - 1  \Bigr] \|y\|_2 \notag \\
&\leq \frac{1}{2^d} \Bigl[ e^{d \log(1+d \delta / \pi)} - 1    \Bigr] \|y\|_2 \notag\\
&\leq \frac{1}{2^d} \Bigl[ e^{d^2 \delta/\pi} - 1    \Bigr] \|y\|_2 \notag\\
&\leq \frac{2}{2^d} \frac{d^2 \delta}{\pi} \|y\|_2 \leq \frac{d^2 \delta}{2^d} \|y\|_2, \label{estimate-gammad-first-term}
\end{align}
where the first inequality follows by Lemma \ref{lemma:piritiri} and as $\frac{\pi - \thetabar_i}{2\pi} \in [0, 1/2]$; and fifth inequality follows because $e^{d^2 \delta / \pi} \leq  1+ 2 \frac{d^2 \delta}{\pi}$ if $d^2 \delta / \pi \leq 1$.

Next, we control the second term of $\Gamma_d$  in \eqref{soln-recurrence-Gammad}.
Similar to the calculation above, we have
\begin{align}
\left |   \prod_{j=i+1}^d \Bigl[\frac{\pi - \thetabar_{j-1}}{2\pi} + O_1(\frac{j \delta}{2 \pi}) \Bigr] - \prod_{j=i+1}^d \frac{\pi - \thetabar_{j-1}}{2\pi}    \right | & \leq \frac{d^2 \delta}{2^{d-i}} \label{second-term-partial-estimate}
\end{align}
if $d^2 \delta/\pi \leq 1$.   We now get that the second term of \eqref{soln-recurrence-Gammad} is
\begin{align}
&\sum_{i=1}^d \Bigl[ \frac{\sin \thetabar_{i-1}}{ \pi} \frac{\|y\|_2}{\|x\|_2} \frac{x}{2^i} + \frac{2 i \delta}{2^i} \|y\|_2 O_1(1) \Bigr] \prod_{j=i+1}^d \Biggl( \frac{\pi - \thetabar_{j-1}}{2\pi} + O_1 \Bigl(\frac{j \delta}{2\pi} \Bigr)    \Biggr) \notag\\
&= \sum_{i=1}^d \Bigl[ \frac{\sin \thetabar_{i-1}}{ \pi} \frac{\|y\|_2}{\|x\|_2} \frac{x}{2^i} + \frac{2 i \delta}{2^i} \|y\|_2 O_1(1) \Bigr] \Bigl[ \Bigl( \prod_{j=i+1}^d  \frac{\pi - \thetabar_{j-1}}{2\pi} \Bigr) +O_1\Bigl(\frac{d^2 \delta}{2^{d-i}}    \Bigr) \Bigr] \notag \\
&= \left[  \sum_{i=1}^d \frac{\sin\thetabar_{i-1}}{\pi} \frac{\|y\|_2}{\|x\|_2} \frac{x}{2^d} \prod_{j=i+1}^d \frac{\pi - \thetabar_{j-1}}{\pi} \right] + O_1\Bigl( \frac{5 d^3 \delta}{2^d}\Bigr) \|y\|_2 \label{estimate-gammad-second-term}
\end{align}
where the first equality follows by \eqref{second-term-partial-estimate} and $d^2 \delta / \pi \leq 1$, and the second equality follows by expanding the terms and using $d^2 \delta \leq 1$.

Combining \eqref{estimate-gammad-first-term} and \eqref{estimate-gammad-second-term}, we get
\begin{align*}
\Gamma_d = \frac{1}{2^d} \left[ \bigPioned \frac{\pi - \thetabar_{i-1}}{\pi} \right] y + \frac{1}{2^d} \left[ \sum_{i=1}^d \frac{\sin \thetabar_{i-1}}{\pi} \Bigl (\prod_{j=i+1}^d \frac{\pi - \thetabar_{j-1}}{\pi} \Bigr) \frac{\|y\|_2}{\|x\|_2} x \right ] + O_1\Bigl(\frac{6 d^3 \delta}{2^d} \Bigr) \|y\|_2. 
\end{align*}
We complete the proof of \eqref{vxyhxy-d-cubed} by combining this equality with \eqref{vxy-first-term-d-layer-estimate}, and $\delta = 4 \sqrt{\eps}$.

\end{proof}

In the proof of the previous lemma, we used the following technical result.  

\begin{lemma} \label{lemma:piritiri} 
Let $d \in \mathbb{N}$ and let $0 \leq r_i \leq \rmax$ for $i = 1 \ldots d$. We have
\begin{align*}
\left |  \prod_{i=1}^d (r_i + t_i)  -  \prod_{i=1}^d r_i \right |  \leq  \prod_{i=1}^d (\rmax + |t_i|) - \rmax^d.
\end{align*}
\end{lemma}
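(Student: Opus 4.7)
The plan is to prove this elementary algebraic inequality by direct expansion of both products. Expanding $\prod_{i=1}^d (r_i + t_i)$ using the distributive law gives $\sum_{S \subseteq [d]} \bigl(\prod_{i \in S} t_i \bigr)\bigl(\prod_{i \notin S} r_i\bigr)$, and the term corresponding to $S = \emptyset$ is exactly $\prod_{i=1}^d r_i$. Subtracting this off and applying the triangle inequality yields
\begin{align*}
\left| \prod_{i=1}^d (r_i + t_i) - \prod_{i=1}^d r_i \right| \leq \sum_{\emptyset \neq S \subseteq [d]} \Bigl(\prod_{i \in S} |t_i|\Bigr) \Bigl( \prod_{i \notin S} r_i \Bigr).
\end{align*}

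Next, using the hypothesis $0 \leq r_i \leq r_{\max}$, I would replace each $r_i$ in the remaining product by $r_{\max}$, bounding the right-hand side by $\sum_{\emptyset \neq S \subseteq [d]} \bigl(\prod_{i \in S} |t_i|\bigr)\bigl(\prod_{i \notin S} r_{\max}\bigr)$. Then I would re-assemble this sum by adding back the $S = \emptyset$ term and subtracting it at the end, recognizing
\begin{align*}
\sum_{S \subseteq [d]} \Bigl(\prod_{i \in S} |t_i|\Bigr) \Bigl(\prod_{i \notin S} r_{\max} \Bigr) = \prod_{i=1}^d (r_{\max} + |t_i|)
\end{align*}
by the distributive law run in reverse. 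Subtracting the $S = \emptyset$ contribution $r_{\max}^d$ yields exactly the claimed bound.

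As an alternative in case a reader prefers an inductive argument, I would do induction on $d$: the base case $d=1$ reduces to $|t_1| = (r_{\max} + |t_1|) - r_{\max}$, and the inductive step follows from the telescoping identity
\begin{align*}
\prod_{i=1}^{d+1}(r_i + t_i) - \prod_{i=1}^{d+1} r_i = t_{d+1} \prod_{i=1}^d (r_i + t_i) + r_{d+1}\Bigl( \prod_{i=1}^d (r_i + t_i) - \prod_{i=1}^d r_i \Bigr),
\end{align*}
combined with the triangle inequality, the inductive hypothesis, and the bounds $|r_i + t_i| \leq r_{\max} + |t_i|$ and $r_{d+1} \leq r_{\max}$. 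There is no real obstacle here; the only thing to be slightly careful about is that $t_i$ is not assumed to have any sign, so the triangle inequality must be invoked before any monotonicity argument in the $r_i$ variables, which the combinatorial expansion above handles cleanly.
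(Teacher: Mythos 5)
Your proof is correct, but it takes a genuinely different route from the paper's. The paper proves both inequalities — first $\bigl|\prod (r_i + t_i) - \prod r_i\bigr| \leq \prod (r_i + |t_i|) - \prod r_i$, then $\prod (r_i + |t_i|) - \prod r_i \leq \prod(\rmax + |t_i|) - \rmax^d$ — by a calculus argument: differentiating $\prod(r_i + \delta t_i)$ in an interpolation parameter $\delta \in [0,1]$, bounding the derivative termwise, and integrating; then checking each partial derivative $\partial/\partial r_k$ is nonnegative for monotonicity. You instead expand $\prod(r_i+t_i)$ as a sum over subsets $S \subseteq [d]$, peel off the $S = \emptyset$ term, apply the triangle inequality, upgrade each remaining $r_i$ to $\rmax$, and reassemble. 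Both treatments are valid and comparably short; yours is purely combinatorial and avoids any appeal to differentiability or integration, which some readers may find more transparent, while the paper's derivative argument is perhaps slightly more compact on the page and avoids manipulating $2^d$ terms. Your alternative inductive argument via the telescoping identity $\prod_{i \le d+1}(r_i+t_i) - \prod_{i \le d+1} r_i = t_{d+1}\prod_{i \le d}(r_i + t_i) + r_{d+1}\bigl(\prod_{i \le d}(r_i+t_i) - \prod_{i \le d} r_i\bigr)$ is also correct and is a third legitimate path to the same bound. One small point you handled correctly but is worth flagging as the one place an error could creep in: the step replacing $r_i$ by $\rmax$ inside $\prod_{i \notin S} r_i$ needs $r_i \ge 0$ (nonnegativity of all factors), which the hypothesis provides; without it the monotonic replacement would fail.
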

\begin{proof}
First, we establish that 
\begin{align*}
\Bigl | \bigPioned (r_i + t_i) - \bigPioned r_i \Bigr | \leq \bigPioned (r_i+ |t_i|) - \bigPioned r_i.
\end{align*}
This follows by noting that for $\delta \in [0,1]$, 
\begin{align*}
\left | \frac{d}{d\delta} \Bigl[ \prod_{i=1}^d (r_i + \delta t_i)   \Bigr]    \right| &= \left| \sum_{i=1}^d t_i  \prod_{j \neq i} (r_j + \delta t_j)    \right | 
\leq \sum_{i=1}^d |t_i| \prod_{j \neq i}  (r_j + \delta |t_j|)  
= \frac{d}{d \delta} \Bigl( \prod_{i=1}^d (r_i + \delta |t_i|)  \Bigr),
\end{align*}
and integrating over $\delta \in [0,1]$.  
Next, we establish that
\begin{align*}
\bigPioned (r_i+ |t_i|) - \bigPioned r_i \leq \bigPioned (\rmax + |t_i|) - \rmax^d.
\end{align*}
This follows by noting that for all $k = 1 \ldots d$,
\begin{align*}
\frac{\partial}{\partial r_k} \Bigl[ \bigPioned (r_i + |t_i|) - \bigPioned r_i \Bigr] = \prod_{j \neq k} (r_j + |t_j|) - \prod_{j \neq k} r_j \geq 0.
\end{align*}
\end{proof}

%The proofs hinge on a central technical result, which we state here in their simplest forms of the two layer case, and their various extensions.  The first such result, is that  matrices of the form $\ApxtApy = \sum_{i=1}^n \indax \inday \cdot \ai \ait$ concentrate around their expectation uniformly in $x$ and $y$.  Recall the definition that $\Apx = \diag(Ax > 0) A$.  That is, $\Apx$ keeps the rows of $A$ that have positive dot product with $x$, and it sets all other rows to zero.  Note that the matrices $\ApxtApy$ are discontinuous with respect to $x$ and $y$, and hence a standard covering argument is inadequate for establishing concentration of these matrices.  See Lemma \ref{lemma:isometry-apxtapx-uniform} for a proof.

%The second central technical result is a matrix RIP-like condition: that 
%$\Apxt \Bpxt \CtC \Bpy \Apy$ concentrates around its expectation $\Apxt \Bpxt \Bpy \Apy$, uniformly in $x$ and $y$, if $m = \Omega(k \log(n_1 n_2))$.  It is proven by establishing that the range of all matrices $\Bpx \Apx$ is a subset of a fixed union of $(n_1 n_2)^k$ subspaces of dimensionality at most $k$ and applying a RIP result from compressed sensing over each of those subspaces and taking a union bound.  See Lemma \ref{lemma:ABCCBA-ABBA-xy} for a proof. 

\subsection{Proof of Deterministic Theorem} \label{sec:deterministic-theorem}

The proof of Theorem \ref{thm-multi-layer-deterministic} follows the outline provided in Section \ref{sec:proofs}.  

\begin{proof}[Proof of Theorem \ref{thm-multi-layer-deterministic}]
Recall that

\begin{align*}
\vxxo &= \begin{cases}\nabla f(x) & G \text{ is differentiable at x,}\\ \lim_{\delta \to 0^+} \nabla f(x + \delta w) & \text{otherwise,} \end{cases} \end{align*}
where $G$ is differentiable at $x+\delta w$ for sufficiently small $\delta$.  Such a $w$ exists by the piecewise linearity of $G$, and any such $w$ can be selected arbitrarily.  Also, recall that
\begin{align*}
 \nabla f(x) = (\PiWd)^t \AtA (\PiWd) x -  (\PiWd)^t \AtA (\PiWdo) \xo.
\end{align*}

Let
\begin{align*}
\vbarxxo &= (\PiWd)^t  (\PiWd) x -  (\PiWd)^t (\PiWdo) \xo,\\
\hxxo %&=  -\frac{\Pi_{i=0}^{d-1} (\pi-\theta_i)}{(2\pi)^d} \red{\xo} + \Bigl[   \frac{x}{2^d} - \sum_{i=0}^{d-1} \frac{\sin \theta_{i}}{2\pi} \frac{1}{2^{i}}\left( \Pi_{j=i+1}^{d-1} \frac{\pi - \theta_{j} }{2\pi} \right) \red{\frac{\| \xo\|_2}{\|x\|_2} x}  \Bigr]    \\
&=  - \frac{1}{2^d} \Bigl( \prod_{i=0}^{d-1} \frac{\pi - \thetabar_i}{\pi}  \Bigr)\xo 
+ \frac{1}{2^d} \left[ x - \sum_{i=0}^{d-1} \frac{\sin \thetabar_i}{\pi}  \Bigl( \prod_{j=i+1}^{d-1} \frac{\pi - \thetabar_j}{\pi}  \Bigr)  \frac{\|\xo\|_2}{\|x\|_2} x  \right], \\
S_{\eps, \xo} &= \Bigl \{ x \in \R^k \mid \| \hxxo \|_2 \leq \frac{1}{2^d} \eps \max(\|x\|_2, \|\xo\|_2 ) \Bigr \},
\end{align*}
where $\thetabar_i = g(\thetabar_{i-1})$ and $\thetabar_0 = \angle(x, \xo)$. 
For brevity of notation, write $\vx = \vxxo, \vbarx = \vbarxxo,$ and $\hx = \hxxo$.  

The WDC implies that  for all $x \neq 0$ and for all $i = 1 \ldots d$, 
\begin{align}
\| \Wipx\|^2 \leq \frac{1}{2} + \eps. \label{bound-Wiplusx-squared}
\end{align}

Now, we establish that for all differentiable points $x \in \R^k$,
\begin{align}
\| \nabla f(x) - \vbarx  \| \leq 2 \eps \Bigl(\frac{1}{2} + \eps \Bigr)^d  \max(\|x\|_2, \|\xo\|_2). \label{isometry-condition-deterministic-goal}
\end{align}
%We have $ \|\vx - \vbarx\|_2  \leq \eps (\Pioned \|\Wipx\|_2^2 + \Pioned \|\Wipx\|_2 \| \Wipxo\|_2 )  \max(\|x\|_2, \|\xo\|_2)$ for all $x\neq 0, \xo \neq 0$.  
At $x \in \R^k$ such that $G$ is differentiable at $x$, the local linearity of $G$ gives that $G(x+z) - G(x) = (\PiWd) z$ for any sufficiently small $z \in \R^k$.   By the RRIC, we have 
\begin{align}
| \langle \red{A} \PiWd z, \red{A} \PiWdy \ztilde \rangle  - \langle  \PiWd z,  \PiWdy \ztilde \rangle | \leq \eps \Pioned \| \Wipx\| \|\Wipy\| \|z\|_2 \|\ztilde\|_2. \label{RRIC-application}
\end{align}
for all $z, \ztilde$, which, together with \eqref{bound-Wiplusx-squared}, implies \eqref{isometry-condition-deterministic-goal}.

Similarly, the WDC implies by Lemma \ref{lemma:Gradient-multilayer-concentration} and $\eps < 1/(16 \pi d^2)^2$ that for all nonzero $x, \xo, y \in \R^k$, 
\begin{align}
\| \vbarx - \hx \|_2 &\leq K \frac{d^3 \sqrt{\eps}}{2^d} \max ( \|x\|_2, \| \xo\|_2), \text{ and } \label{vxbarhx-are-close}\\
\bigl \langle  (\PiWd) x, ( \PiWdy) y \bigr \rangle  &\geq \frac{1}{4 \pi} \frac{1}{2^d} \|x\|_2 \|y\|_2. \label{piwipxxpiwipyy}
\end{align}

Thus, we have, for all $x\neq 0, \xo \neq 0$, 
\begin{align}
\|\vx - \hx\|_2  &=\lim_{\delta \to 0^+} \|  \nabla  f(x + \delta w) - h_{x + \delta w}  \|_2 \notag \\
& \leq \lim_{\delta \to 0^+}  \bigl( \|\nabla f(x + \delta w)- \vbar_{x + \delta w}\|_2 + \|\vbar_{x + \delta w} - h_{x + \delta w}\|_2 \bigr) \notag\\
&\leq \sqrt{\eps} \Bigl(2 \frac{(1+2 \eps)^d}{2^d} + K \frac{d^3}{2^d} \Bigr) \max(\|x\|_2, \|\xo\|_2) \notag \\
&\leq \sqrt{\eps} \Ktilde \frac{d^3}{2^d}  \max(\|x\|_2, \|\xo\|_2), \label{vxgx-approx-highd}
\end{align}
for some universal constant $\Ktilde$, where the first inequality follows by the definition of $\vx$ and the continuity of $\hx$ for nonzero $x$; the second inequality follows by combining \eqref{bound-Wiplusx-squared}, \eqref{isometry-condition-deterministic-goal},  \eqref{vxbarhx-are-close},  and as $2 d \eps \leq 1 \Rightarrow (1+2 \eps)^d \leq e^{2 \eps d} \leq 1 + 4\eps d$.

Note that the one-sided directional derivative of $f$ in the direction of $y\neq 0$ at $x$ is $D_{y} f(x) = \lim_{t \to 0^+} \frac{f(x+t y) - f(x)}{t}$.  Due to the continuity and piecewise linearity of the function \[G(x) = \relu(\Win{d} \ldots \relu (\Win{2} \relu(\Win{1} x))  ),\] we have that for any $x, y \neq 0$ that there exists a sequence $\{x_n\} \to x$ such that $f$ is differentiable at each $x_n$ and $D_y f(x) = \lim_{n \to \infty} \nabla f(x_n)\cdot y$. Thus, as $\nabla f(x_n) = v_{x_n}$, 
\[
D_{-\vx} f(x)= - \lim_{n \to \infty} \vxn \cdot \red{\frac{\vx}{\|\vx\|}}.
\]
Now, we write
\begin{align*}
\vxn \cdot \vx &= \hxn \cdot \hx + (\vxn - \hxn) \cdot \hx + \hxn \cdot (\vx - \hx) + (\vxn - \hxn) \cdot (\vx - \hx)\\
&\geq \hxn \cdot \hx  - \| \vxn - \hxn\|_2 \| \hx\|_2  - \| \hxn \|_2 \|\vx - \hx\|_2 - \| \vxn - \hxn\|_2 \|\vx - \hx\|_2 \\
&\geq \hxn \cdot \hx -  \sqrt{\eps}  \Ktilde \frac{d^3}{2^d}  \max(\| \xn\|_2, \|\xo\|_2) \| \hx\|_2  -  \sqrt{\eps} \Ktilde \frac{d^3}{2^d} \max(\|x\|_2, \|\xo\|_2) \| \hxn\|_2 \\ & \ \ \ -  \eps  \Bigl(\Ktilde \frac{d^3}{2^d} \Bigr)^2 \max(\|\xn\|_2, \|\xo\|_2)  \max(\|x\|_2, \|\xo\|_2),
\end{align*}
where the first inequality follows from the triangle inequality, and the second inequality follows by \eqref{vxgx-approx-highd}.
As $\hx$ is continuous in $x$ for all nonzero $x$, we have for any $x \in S_{\red{8 \sqrt{\eps}} \Ktilde d^3 }^c$, 
\begin{align*}
\lim_{n \to \infty} \vxn \cdot \vx &\geq \| \hx\|_2^2 - 2 \red{\sqrt{\eps}} \Ktilde \frac{d^3}{2^d} \| \hx\|_2 \max(\|x\|_2, \|\xo\|_2) -  \eps \Bigl [ \Ktilde \frac{d^3}{2^d} \Bigr]^2 \max(\|x\|_2, \|\xo\|_2)^2 \\
&= \frac{\|\hx\|_2}{2} \Bigl [ \| \hx\|_2 - 4 \sqrt{\eps} \Bigl ( \Ktilde \frac{d^3}{2^d} \Bigr) \max(\|x\|_2, \|\xo\|_2) \Bigr]  + \frac{1}{2} \Bigl[ \| \hx \|^2 -2 \cdot  \eps \Bigl ( \Ktilde \frac{d^3}{2^d} \Bigr)^2 \max(\|x\|_2, \|\xo\|_2 )^2   \Bigr] \\
&\geq \red{  \frac{\|\hx\|_2}{2}  4 \sqrt{\eps} \Bigl ( \Ktilde \frac{d^3}{2^d} \Bigr) \max(\|x\|_2, \|\xo\|_2)    } \\
 &\geq \red{ \frac{7/8 \|\vx\|_2}{2} 4 \sqrt{\eps} \Bigl ( \Ktilde \frac{d^3}{2^d} \Bigr) \max(\|x\|_2, \|\xo\|_2), }\\
\end{align*}
\red{where the last inequality uses \eqref{vxgx-approx-highd} and the definition of $S_{\red{8 \sqrt{\eps}} \Ktilde d^3 }$.}
We conclude $D_{-\vx} f(x) < 0$ for all nonzero $x \in S_{\red{8} \sqrt{\eps} \Ktilde d^3}^c$.

It remains to prove that $\forall x \neq 0$,  $D_x f(0) < 0$.  
We compute that
\begin{align*}
D_x f(0) \cdot \red{\|x\|_2}&= -\langle A (\PiWd) x, A (\PiWdo) \xo \rangle \\
& = - \langle x, (\PiWd)^t \AtA (\PiWdo) \xo \rangle \\
& = - \langle x, (\PiWd)^t (\AtA - I_{n_2}) (\PiWdo) \xo \rangle \notag \\
& \quad - \langle (\PiWd) x,  (\PiWdo) \xo \rangle\\
& \leq \eps \frac{(1 + 2 \eps)^d}{2^d} \|x\|_2 \| \xo \|_2 - \frac{1/(4\pi)}{2^d} \| x\|_2 \| \xo\|_2 \\
& \leq  \frac{2 \eps}{2^d} \|x\|_2 \| \xo \|_2 - \frac{1/(4\pi)}{2^d} \| x\|_2 \| \xo\|_2 
\end{align*}
where the first inequality holds by \eqref{RRIC-application}, \eqref{bound-Wiplusx-squared}, and \eqref{piwipxxpiwipyy}; and the second inequality follows from $4 \eps d \leq 1$. Thus, for $\eps < \red{\frac{1}{16\pi}}$, $D_x f(0) < \red{-\frac{1}{8 \pi 2^d} \|\xo\|_2}$. 

The proof is finished by applying Lemma \ref{lemma:Seps} and $8 \pi d^6 \sqrt{\red{8} \sqrt{\eps} \Ktilde d^3} \leq 1$ to get 
\begin{align*}
S_{\red{8} \sqrt{\eps} \Ktilde d^3} \subset \mathcal{B}(x_0, 56 d \sqrt{\red{8} \sqrt{\eps} \Ktilde d^3} \|x_0\|_2  ) \cup  \mathcal{B}( -\zetacheck_d x_0, 500 d^{11} \sqrt{\red{8} \sqrt{\eps} \Ktilde d^3} \|\xo\|_2).
\end{align*}
\end{proof}

\subsection{Control of the zeros of $\hxxo$} \label{sec:zeros-hxxo}

We now show that $\hxxo$ is away from zero outside of a neighborhood of $\xo$ and $-\rho_d \xo$.

\begin{lemma} \label{lemma:Seps}
Suppose $8 \pi d^6 \sqrt{\eps} \leq 1$.  Let 
\[
S_{\eps,x_0}  = \left \{ x \neq 0 \in \R^k  \mid \|\hxxo \|_2 \leq \frac{1}{2^d}\eps \max \left( \|x\|_2, \|x_0\|_2\right) \right\},
\]
where $d$ is an integer greater than $1$ and let $\hxxo$ be defined by
%\begin{align}
%h_{x,y} = \frac{x}{2^d}- \frac{1}{2^d} \left[ \Bigl( \prod_{i=0}^{d-1} \frac{\pi - \thetabar_i}{\pi}  \Bigr)y 
%+ \sum_{i=0}^{d-1} \frac{\sin \thetabar_i}{\pi}  \Bigl( \prod_{j=i+1}^{d-1} \frac{\pi - \thetabar_j}{\pi}  \Bigr)  \frac{\|y\|_2}{\|x\|_2} x  \right]
%\end{align}

\begin{align}
\frac{\hxxo}{\|\xo\|_2}&=   - \frac{1}{2^d} \Bigl( \prod_{i=0}^{d-1} \frac{\pi - \thetabar_i}{\pi}  \Bigr) \xohat + \frac{1}{2^d}  \Bigl[  \frac{\|x\|_2}{\|x_0\|_2} - \sum_{i=0}^{d-1} \frac{\sin \thetabar_{i}}{\pi} \prod_{j=i+1}^{d-1} \frac{\pi - \thetabar_{j}}{\pi}   \Bigr]  \hat{x},   
%\thetabar_{i+1} &=  \cos^{-1} \Bigl( \frac{1}{\pi} \bigl[ (\pi - \thetabar_{i}) \cos \thetabar_i + \sin \thetabar_i \bigr] \Bigr) = g(\thetabar_i), \quad i = 0,1,\ldots, d-1, \label{eqn-thetai-iteration-relation-supp}
\end{align}
where $\thetabar_0 = \angle(x, \xo)$ and $\thetabar_i = g(\thetabar_{i-1})$ for $g$ given by \eqref{defn-g}. 
Define
\[
\zetacheck_d := \sum_{i=0}^{d-1} \frac{\sin \thetacheck_{i}}{\pi} \left( \prod_{j=i+1}^{d-1} \frac{\pi - \thetacheck_{j} }{\pi} \right),
\]
where $\thetacheck_0= \pi$ and $\thetacheck_i = g(\thetacheck_{i-1})$.  If $x \in \Sepsxo$, then we have that either
\[
|\thetabar_0| \leq 2 \sqrt{\epsilon} \quad \text{and} \quad |\|x\|_2 - \|x_0\|_2| \leq  18 d \sqrt{\epsilon} \|x_0\|_2
\]
or
\[
|\thetabar_0 - \pi| \leq  8\pi d^4 \sqrt{\epsilon}  \quad \text{and} \quad  \left| \|x\|_2 - \|x_0\|_2 \zetacheck_d \right | \leq 200 d^7 \sqrt{\epsilon} \|x_0\|_2.
\]
In particular, we have
\begin{align}
\Sepsxo \subset \mathcal{B}(x_0, 56 d \sqrt{\eps} \|x_0\|_2  ) \cup  \mathcal{B}( -\zetacheck_d x_0, 500 d^{11} \sqrt{\eps} \|\xo\|_2).
 \label{Seps-estimate-balls}
\end{align}
Additionally, $\rho_d \to 1$ as $d \to \infty$.
\end{lemma}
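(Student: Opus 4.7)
Since $\hxxo$ lies in the two-dimensional span of $\xohat$ and $\hat{x}$, I will write
\[
\frac{\hxxo}{\|\xo\|_2} = -\alpha\,\xohat + \beta\,\hat{x}, \qquad \alpha := \frac{1}{2^d}\prod_{i=0}^{d-1}\frac{\pi - \thetabar_i}{\pi}, \quad \beta := \frac{1}{2^d}\bigl[r - \zeta\bigr],
\]
where $r = \|x\|_2/\|\xo\|_2$ and $\zeta = \sum_{i=0}^{d-1}\frac{\sin\thetabar_i}{\pi}\prod_{j=i+1}^{d-1}\frac{\pi-\thetabar_j}{\pi}$. Decomposing $-\alpha\,\xohat + \beta\,\hat{x}$ into components parallel and orthogonal to $\xohat$ via $\hat x = \cos\theta_0\,\xohat + \sin\theta_0\,\hat w$, the hypothesis $x \in \Sepsxo$ yields the two scalar estimates
$|\alpha - \beta\cos\theta_0| \leq \tfrac{\eps}{2^d}\max(r,1)$ and $|\beta|\sin\theta_0 \leq \tfrac{\eps}{2^d}\max(r,1)$; decomposing along $\hat x$ instead gives $|\alpha|\sin\theta_0 \leq \tfrac{\eps}{2^d}\max(r,1)$.

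I will then record a few elementary properties of the iterates $\thetabar_i = g^{\circ i}(\theta_0)$: (i) $g$ is monotonically increasing and maps $[0,\pi]$ into $[0,\pi/2]$, so $\thetabar_i \in [0,\pi/2]$ for $i \geq 1$ and $(\pi-\thetabar_i)/\pi \geq 1/2$; (ii) $|g'|\leq 1$, so $|\thetabar_i - \thetacheck_i| \leq |\theta_0 - \pi|$ with $\thetacheck_i = g^{\circ i}(\pi)$; (iii) $g(\theta) \leq \theta$ on $[0,\pi]$, so $\thetabar_i \leq \theta_0$ when $\theta_0$ is small. These yield a lower bound $\alpha \geq (\pi - \theta_0)/(2\pi)\cdot(1/2)^{d-1}/2^d$ that is uniform in $\theta_0$ bounded away from $\pi$, the near-equality $\alpha = (1 - O(d\sqrt\eps))/2^d$ when $\theta_0 \leq 2\sqrt\eps$, and the estimate $|\zeta| = O(d\sin\theta_0)$.

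The core of the proof is a trichotomy on $\theta_0$. For the intermediate regime $\theta_0 \in [2\sqrt\eps,\, \pi - 8\pi d^4 \sqrt\eps]$, combining $|\alpha|\sin\theta_0 \leq \tfrac{\eps}{2^d}\max(r,1)$ with the lower bounds on $\alpha$ (from $\theta_0$ bounded from $\pi$) and on $\sin\theta_0$ (from $\theta_0$ bounded from $0$) contradicts the hypothesis $8\pi d^6\sqrt\eps \leq 1$, ruling out this regime. For $\theta_0 \leq 2\sqrt\eps$, the approximations $\alpha \approx 2^{-d}$ and $|\zeta| = O(d\sqrt\eps)$ reduce the parallel estimate to $|1 - r|/2^d$ up to $O(d\sqrt\eps)/2^d$ corrections, giving $|r - 1| \leq 18d\sqrt\eps$. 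For $\theta_0 \geq \pi - 8\pi d^4\sqrt\eps$, the first factor $(\pi-\theta_0)/\pi$ is $O(d^4\sqrt\eps)$, so $\alpha$ is negligible; the parallel estimate with $\cos\theta_0 \leq -1/2$ then forces $|\beta|$ small, whence $r \approx \zeta$. Replacing $\zeta$ by $\rho_d$ via a Lipschitz-in-products estimate analogous to Lemma \ref{lemma:piritiri} applied to the product $\prod_{j=i+1}^{d-1}(\pi-\thetabar_j)/\pi$ (with perturbation $|\thetabar_j - \thetacheck_j| \leq |\theta_0 - \pi|$) yields $|r - \rho_d| \leq 200 d^7\sqrt\eps$. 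The ball inclusions in \eqref{Seps-estimate-balls} then follow from the law of cosines. Finally, $\rho_d \to 1$ follows from the identity $\rho_d = \cos\thetacheck_d$ for $d \geq 1$, which holds because both sides satisfy the recursion $Y_{d+1} = \tfrac{\pi-\thetacheck_d}{\pi}Y_d + \tfrac{\sin\thetacheck_d}{\pi}$ with $Y_1 = 0$, combined with $\thetacheck_d \to 0$ by the contractivity of $g$ near $0$.

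The main obstacle is the $\theta_0 \approx \pi$ case, where the perturbation $|\theta_0 - \pi|$ must be propagated through all $d$ iterates $\thetabar_i$ and then through the $d$ terms of the sum-of-products defining $\zeta$, each of which is itself a product of up to $d$ factors. Careful bookkeeping of this double cascade, rather than any single sharp estimate, is what produces the polynomial $d^7\sqrt\eps$ scaling in the magnitude bound and the $d^4\sqrt\eps$ scaling in the angular bound.
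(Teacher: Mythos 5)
Your trichotomy on $\thetabar_0$ is a genuinely different organization from the paper, which first proves $r := \|x\|_2/\|\xo\|_2 \leq 4d$ unconditionally and then splits on whether $|r-\zeta| \geq \sqrt\eps M$ or not; you instead split on $\thetabar_0$ directly using the extra scalar estimate $|\alpha|\sin\thetabar_0 \leq \frac{\eps}{2^d}\max(r,1)$ obtained by projecting $\hxxo$ onto the orthogonal complement of $\hat x$. Your identity $\zetacheck_d = \cos\thetacheck_d$ (both satisfy $Y_{d+1} = \frac{\pi - \thetacheck_d}{\pi}Y_d + \frac{\sin\thetacheck_d}{\pi}$ with $Y_1 = 0$, since $\cos g(\theta) = \frac{(\pi-\theta)\cos\theta + \sin\theta}{\pi}$ by \eqref{defn-g}) is also cleaner than the paper's direct estimate of $1 - \zetacheck_d$, and is correct.

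However, there is a quantitative gap that breaks the proof as written. You bound $\alpha = \frac{1}{2^d}\prod_{i=0}^{d-1}\frac{\pi-\thetabar_i}{\pi}$ from below using only $(\pi-\thetabar_i)/\pi \geq 1/2$ for $i\geq 1$, yielding $\alpha \geq \frac{\pi-\thetabar_0}{2\pi}\cdot\frac{(1/2)^{d-1}}{2^d}$. The factor $(1/2)^{d-1}$ is exponentially small in $d$. In your intermediate regime, the contradiction you want is $\alpha\sin\thetabar_0 > \frac{\eps}{2^d}\max(r,1)$; with $\max(r,1) \leq 4d$ and the worst endpoint $\thetabar_0 \approx 2\sqrt\eps$ (so $\sin\thetabar_0 \approx 2\sqrt\eps$ and $\pi-\thetabar_0 \approx \pi$), your bound gives $\alpha\sin\thetabar_0 \gtrsim 2^{-(d-1)}\sqrt\eps/2^d$, and you would need $2^{-(d-1)}\sqrt\eps > 4d\eps$, i.e.\ $\sqrt\eps \lesssim (d\,2^d)^{-1}$. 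This is far stronger than the hypothesis $8\pi d^6\sqrt\eps \leq 1$, so the intermediate regime is not actually excluded. The same weakness propagates the threshold $\thetabar_0 \geq \pi - 8\pi d^4\sqrt\eps$ into a much worse, $d$-exponential one, so the large-angle analysis no longer produces the stated $d^7\sqrt\eps$ bound either. What is needed, and what the paper proves, is the polynomial lower bound $\prod_{i=1}^{d-1}\frac{\pi-\thetabar_i}{\pi} \geq d^{-3}$, which rests on the refined decay estimate $\thetacheck_i \leq \frac{3\pi}{i+3}$ (obtained from $g(\theta) \leq (\frac{1}{3\pi}+\frac{1}{\theta})^{-1}$ and monotonicity). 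Without this sharper control on how fast the iterates $\thetabar_i$ shrink, the polynomial-in-$d$ claims in the lemma statement cannot be reached. A secondary omission: you never establish $r \leq 4d$, which is used in $\max(r,1) \leq 4d$; this does follow in each branch of your trichotomy (in the intermediate regime from $|r - \zeta|\sin\thetabar_0 \leq \eps \max(r,1)$ plus $|\zeta| \leq d/\pi$, and similarly in the two extremes), but it needs to be spelled out, as the paper does up front.
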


\begin{proof}
%\textcolor{red}{Amend accordingly} Note that for $x \in S_{\epsilon}$ we have that if $\|x\|_2 \geq \|x_0\|_2$, then $\frac{1}{4}\|x\|_2 \leq \frac{1}{4}\|x_0\|_2 + \frac{1}{2\pi} + \epsilon \|x\|_2$, which implies that $\|x\|_2 \leq \frac{\|x_0\|_2 + \frac{2}{\pi}}{1-4\epsilon} \leq \max (4\|x_0\|_2, \frac{8}{\pi} ) :=M$ when $\epsilon < \frac{1}{8}$. Thus $\|x\|_2 \leq M$. \\
Without loss of generality, let $\|\xo\|_2=1$, $\xo = e_1$ and $\red{x} = r \cos \thetabar_0 \cdot e_1 + r \sin \thetabar_0 \cdot e_2$ for $\thetabar_0 \in [0, \pi]$.  Let $x \in \Sepsxo$. 

First we introduce some notation for convenience.  Let
\[
\xi = \Pipithetapii, \quad \zeta = \zetaterm, \quad r = \|x\|_2, \quad M = \max(r, 1).
\]
Thus, $\hxxo = - \red{\frac{1}{2^d}}\xi  \xohat + \red{\frac{1}{2^d}} (r- \zeta) \xhat$. 
By inspecting the components of $\red{\hxxo}$, we have that $x\in \Sepsxo$ implies
\begin{align}
|-\xi + \cos \thetabar_0 ( r - \zeta)| \leq \eps M \label{Seps-e1term}\\
|\sin \thetabar_0 (r - \zeta)| \leq \eps M \label{Seps-e2term}
\end{align}

Now, we record several properties.  We have:
\begin{align}
\thetabar_{i} &\in [0, \pi/2] \text{ for } i \geq \red{1} \notag \\
\thetabar_{i} &\leq \thetabar_{i-1} \text{ for } i \geq 1 \notag\\
 |\xi| &\leq 1 \label{bound-c-one}\\
| \zeta| &\leq \red{\min \Bigl(\frac{d}{\pi}, \frac{d}{\pi} \thetabar_0 \Bigr)} \label{zeta-bound-d-sin-theta} \\
\thetacheck_i &\leq \frac{3\pi}{i+3} \text{ for $i \geq 0$} \label{upper-bound-thetai}\\
\thetacheck_i &\geq \frac{\pi}{i+1} \text{ for $i \geq 0$} \label{lower-bound-thetai}\\
\xi = \prod_{i=0}^{d-1} \frac{\pi - \thetabar_i}{\pi} &\geq \frac{\pi - \thetabar_0}{\pi } d^{-3} \label{pi-theta-pi-d-cubed}\\
\thetabar_0 = \pi + O_1(\delta) &\Rightarrow \thetabar_i = \thetacheck_i + O_1(i \delta) \label{thetabar-thetacheck-bound}\\
\thetabar_0 = \pi + O_1(\delta) &\Rightarrow |\xi| \leq \frac{\delta}{\pi} \label{c-one-bound-thetabarzero}\\
\thetabar_0 = \pi + O_1(\delta) &\Rightarrow \zeta = \zetacheck_d + O_1(3 d^3 \delta) \text{ if } \frac{d^2 \delta}{\pi} \leq 1 \label{high-theta-zeta-zetabar-control}
\end{align}
We now establish \eqref{upper-bound-thetai}.  Observe $0< g(\theta) \leq \bigl( \frac{1}{3 \pi} + \frac{1}{\theta} \bigr)^{-1} =: \gtilde(\theta)$ for $\theta \in (0, \pi]$.  As $g$ and $\gtilde$ are monotonic increasing, we have $\thetacheck_i = g^{\circ i}(\thetacheck_0) = g^{\circ i}(\pi) \leq \gtilde^{\circ i}(\pi) = \bigl( \frac{i}{3 \pi} + \frac{1}{\pi} \bigr)^{-1} = \frac{3\pi}{i + 3}$.  Similarly, $g(\theta) \geq (\frac{1}{\pi} + \frac{1}{\theta})^{-1}$ implies that $\thetacheck_i \geq \frac{\pi}{i+1}$, establishing \eqref{lower-bound-thetai}. 

We now establish \eqref{pi-theta-pi-d-cubed}.  Using \eqref{upper-bound-thetai} and $\thetabar_i \leq \thetacheck_i$, we have
\begin{align*}
\prod_{i=1}^{d-1} \Bigl(1 - \frac{\thetabar_i}{\pi} \Bigr) &\geq \prod_{i=1}^{d-1} \Bigl(1 - \frac{3}{i+3}  \Bigr)
\geq d^{-3},
%= \exp \sum_{i=1}^{d-1} \log(1 - \frac{3}{i+3}) \geq \exp \Bigl(-3 \sum_{i=1}^{d-1} \frac{1}{i+3}\Bigr)
%\\&\geq \exp \Bigl( -3 \int_{s=3}^{d+2} \frac{1}{s} ds  \Bigr)  \geq \exp \Bigl( -3 \log d   \Bigr) \geq  d^{-3}.
\end{align*}
where the last inequality can be established by showing that the ratio of consecutive terms with respect to $d$ is greater for the product in the middle expression than for $d^{-3}$.

We establish \eqref{thetabar-thetacheck-bound} by using the fact that $|g'(\theta)| \leq 1$ for all $\theta \in [0, \pi]$ and using the same logic as for \eqref{bound-thetad-thetadbar}.

We now establish \eqref{high-theta-zeta-zetabar-control}.  As $\thetabar_0 = \pi + O_1(\delta)$, we have $\thetabar_i = \thetacheck_i + O_1(i \delta)$.  Thus, if $\frac{d^2 \delta}{\pi}\leq 1$,
\[
\prod_{j=i+1}^{d-1} \frac{\pi - \thetabar_j}{\pi} = \prod_{j=i+1}^{d-1} \Bigl( \frac{\pi - \thetacheck_j}{\pi} + O_1(\frac{i \delta}{2 \pi} ) \Bigr) = \Bigl(\prod_{j=i+1}^{d-1}  \frac{\pi - \thetacheck_j}{\pi} \Bigr)  + O_1(d^2 \delta )
\]
So
\begin{align}
\zeta &= \sum_{i=0}^{d-1} \Bigl( \frac{\sin \thetacheck_i}{\pi} + O_1(\frac{i \delta}{\pi}) \Bigr)  \Bigl[ \Bigl(\prod_{j=i+1}^{d-1} \frac{\pi - \thetacheck_j}{\pi} \Bigr) +O_1(d^2 \delta)  \Bigr] \\
&= \zetacheck_d + O_1 \Bigl( d^2\delta / \pi + d^3 \delta /\pi + d^4 \delta^2/\pi   \Bigr) \\
&= \zetacheck_d + O_1( 3 d^3 \delta).
\end{align}
Thus \eqref{high-theta-zeta-zetabar-control} holds.

Next, we establish that $x \in \Sepsxo \Rightarrow r \leq 4d$, and thus $M \leq 4d$.  Suppose $r > 1$.  At least one of the following holds: $|\sin\thetabar_0| \geq 1/\sqrt{2}$ or $|\cos \thetabar_0| \geq 1/\sqrt{2}$. If $|\sin \thetabar_0| \geq 1/\sqrt{2}$ then  \eqref{Seps-e2term} implies that $|r - \zeta| \leq \sqrt{2} \eps r$.  Using \eqref{zeta-bound-d-sin-theta}, we get $ r \leq \frac{d/\pi}{1 - \sqrt{2} \eps} \leq d/2$ if $\eps < 1/4$.  If $| \cos \thetabar_0| \geq 1/\sqrt{2}$, then \eqref{Seps-e1term} implies that 
$|r - \zeta| \leq \sqrt{2} (\eps r + |\xi|)$.   Using \eqref{bound-c-one}, \eqref{zeta-bound-d-sin-theta}, and $\eps<1/4$, we get $r \leq \frac{\sqrt{2} |\xi| +  \zeta}{1 - \sqrt{2}{\eps}} \leq \frac{\sqrt{2} + d }{1 - \sqrt{2} \eps} \leq 4 d$. Thus, we have $x \in S_\eps \Rightarrow r \leq 4d \Rightarrow M \leq 4d$. 

Next, we establish that we only need to consider the small angle case ($\thetabar_0 \approx 0$) and the large angle case ($\thetabar_0 \approx \pi$).  Exactly one of the following holds: $|r - \zeta| \geq \sqrt{\eps} M$ or $|r -\zeta| < \sqrt{\eps} M$.    If $|r - \zeta| \geq \sqrt{\eps}M$, then by \eqref{Seps-e2term}, we have $|\sin\thetabar_0| \leq \sqrt{\eps}$.  Hence $\thetabar_0 = O_1(2 \sqrt{\eps})$ or $\thetabar_0 = \pi + O_1(2 \sqrt{\eps})$, as $\eps < 1$. If $|r - \zeta| \leq \sqrt{\eps} M$, then by \eqref{Seps-e1term} we have $|\xi| \leq 2 \sqrt{\eps} M$.  Using \eqref{pi-theta-pi-d-cubed}, we get $\thetabar_0 = \pi + O_1(2 \pi d^3 \sqrt{\eps} M)$.  Thus, we only need to consider the small angle case, $\thetabar_0 =O_1(2 \sqrt{\eps})$ and the large angle case $\thetabar_0 = \pi + O_1(8 \pi d^4 \sqrt{\eps})$, where we have used $M \leq 4 d$.

\textbf{Small Angle Case}.  Assume $\thetabar_0 = O_1(2 \sqrt{\eps})$.  As $\thetabar_i \leq \thetabar_0 \leq 2 \sqrt{\eps}$ for all $i$, we have $\xi \geq (1 - \frac{2\sqrt{\eps}}{\pi})^d = 1 + O_1(\frac{4 d \sqrt{\eps}}{\pi})$ provided $2d \sqrt{\eps} \leq 1/2$.   By \eqref{zeta-bound-d-sin-theta}, we also have $\zeta = O_1( \frac{d}{\pi} 2 \sqrt{\eps}) = O_1(d\sqrt{\eps})$.  By \eqref{Seps-e1term},  we have
\[
|-\xi + \cos \thetabar_0 ( r - \zeta)| \leq \eps M.
\]
Thus, as $\cos \thetabar_0 = 1 + O_1(\thetabar_0^2/2) = 1 + O_1(2 \eps)$,
\[
-\Bigl( 1 + O_1(4 d \sqrt{\eps}) \Bigr) + (1 + O_1(2 \eps) )(r + O_1( d \sqrt{\eps})) = O_1(4 d \eps),
\]
and thus,
\begin{align}
r-1 &= O_1(4d \sqrt{\eps} + 2 \eps 4 d +  d \sqrt{\eps} + 2 d \eps^{3/2} + 4 \eps d)\\
&= O_1(18d \sqrt{\eps}).
\end{align}

\textbf{Large Angle Case}.  Assume $\theta_0 = \pi + O_1(\delta)$ where $\delta = 8 \pi d^4 \sqrt{\eps}$.  By \eqref{c-one-bound-thetabarzero} and \eqref{high-theta-zeta-zetabar-control}, we have $\xi = O_1(\delta/\pi)$, and we have $\zeta = \zetacheck_d + O_1(3 d^3 \delta)$ if $8 d^6 \sqrt{\eps} \leq 1$.
By \eqref{Seps-e1term}, we have 
\[
|-\xi +  \cos \theta_0 (r - \zeta) | \leq \eps M,
\]
so, as $\cos \theta_0 = 1 - O_1(\theta_0^2/2)$,
\[
O_1(\delta/\pi) + (1 + O_1(\delta^2/2))(r - \zetacheck_d + O_1(3 d^3 \delta)) = O_1(\eps M),
\]
and thus, using $r \leq 4 d$, $\zetacheck_d \leq d$, and $\delta = 8 \pi d^4 \sqrt{\eps} \leq 1$,
\begin{align}
r - \zetacheck_d &= O_1(\eps M + \delta/\pi + 3 d^3 \delta + \frac{5}{2} \delta^2 d + \frac{3}{2} d^3 \delta^3) \\
&= O_1 \Bigl(4 \eps d + \delta ( \frac{1}{\pi} + 3 d^3 + \frac{5}{2} d + \frac{3}{2} d^3) \Bigr) \\
&= O_1(200 d^7 \sqrt{\eps})
\end{align}

To conclude the proof of \eqref{Seps-estimate-balls}, we use the fact that
\[
\|x - \xo\|_2 \leq  \bigl| \|x\|_2 - \|\xo\|_2   \bigr| + ( \|\xo\|_2 +  \bigl| \|x\|_2 - \|\xo\|_2   \bigr| ) \thetabar_0.
\]
This fact simply says that if a 2d point is known to have magnitude within $\Delta r$ of some $r$ and is known to be within angle $\Delta \theta$ from $0$, then its Euclidean distance to the point of polar coordinates $(r,0)$ is no more than $\Delta r + (r +  \Delta r) \Delta \theta$.

Finally, we establish that $\rho_d\to 1$ as $d \to \infty$.  Note that $\rho_{d+1} = (1 - \frac{\thetacheck_d}{\pi}) \rho_d + \frac{\sin \thetacheck_d}{\pi}$ \red{ and $\rho_0 = 0$}.  It suffices to show $\rhotilde_d \to 0$, where $\rhotilde_d := 1 - \rho_d$.  \red{The following recurrence relation holds: $\rhotilde_d = (1 - \frac{\thetacheck_{d-1}}{\pi}) \rhotilde_{d-1} + \frac{\thetacheck_{d-1} - \sin \thetacheck_{d-1}}{\pi}$, with $\rhotilde_0=1$. } Using the recurrence formula \eqref{recurrence-relation} \red{ and the fact that $\thetacheck_0=\pi$}, we get that 
\begin{align}
\rhotilde_d =  \sum_{i=1}^d \frac{\thetacheck_{\red{i}-1} - \sin \thetacheck_{\red{i}-1}}{\pi} \prod_{j=i+1}^d \bigl(1 - \frac{\thetacheck_{\red{j}-1}}{\pi} \bigr)
\end{align}
using \eqref{lower-bound-thetai}, we have that
\begin{align*}
\prod_{j=i+1}^d \Bigl(1 - \frac{\thetacheck_{\red{j}-1}}{\pi} \Bigr) &\leq \prod_{j=i+1}^d \Bigl(1 - \frac{1}{\red{j}} \Bigr)
= \exp \Bigl( -\sum_{j = i+1}^d \frac{1}{\red{j}}  \Bigr) \leq \exp \Bigl( - \int_{i+1}^{d+1} \frac{1}{\red{s}} ds  \Bigr) = \frac{i+\red{1}}{d+\red{1}}
\end{align*}
Using \eqref{upper-bound-thetai} and the fact that $\thetacheck_{\red{i}-1} - \sin \thetacheck_{\red{i}-1} \leq \thetacheck_{\red{i}-1}^3 / 6$, we have that $\rhotilde_d \leq \sum_{i=1}^d \frac{\thetacheck_{\red{i}-1}^3}{6\pi}\cdot  \frac{i+\red{1}}{d+\red{1}} \to 0$ as $d \to \infty$.

\end{proof}

\subsection{Proof of WDC for Gaussian Matrices} \label{sec:WDC-Gaussian}

In this section, we establish a bound on the probability that a Gaussian matrix satisfies the WDC, provided it corresponds to a sufficiently expansive layer of a neural network.  \begin{lemma} \label{lemma:isometry-apxtapx-uniform} 
Fix $0 < \eps < 1$.  Let $W \in \R^{n \times k}$ have i.i.d. $\calN(0, 1/n)$ entries.   If $n > c k \log k$, then with probability at least $1- 8 n e^{-\gamma k}$, $W$ satisfies the WDC with constant $\eps$. Here $c, \gamma^{-1}$ are constants that depend only polynomially on $\eps^{-1}$. 
\end{lemma}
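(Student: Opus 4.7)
Define $T_{x,y} := \sum_{i=1}^n \indwx \inday \wiwit$, so that $\E T_{x,y} = \Qxy$. By $0$-homogeneity of both sides of the WDC in $x$ and $y$, it suffices to prove $\|T_{x,y} - \Qxy\| \le \eps$ uniformly for $(x,y) \in \spherek \times \spherek$. My plan combines pointwise matrix concentration, a union bound over a net of $\spherek$, and a stability argument that transfers the bound from net points to arbitrary directions.

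For the pointwise step, I would condition on the high-probability event $\mathcal{E}_0 = \{\max_i \|w_i\|_2^2 \le Ck/n\}$, which holds with probability $1 - n e^{-\gamma k}$ by a $\chi^2$ tail bound. On $\mathcal{E}_0$, and for fixed $x,y \in \spherek$, $T_{x,y}$ is a sum of $n$ bounded i.i.d.\ rank-one matrices; standard concentration (matrix Bernstein, or, exploiting Gaussianity more sharply, a covariance estimate on the approximately $n/2$ rows activated by the product of indicators) gives $\PP(\|T_{x,y} - \Qxy\| > \eps/3 \mid \mathcal{E}_0) \le 2k \exp(-\gamma_0 n \eps^2)$ with $\gamma_0$ polynomial in $\eps^{-1}$.

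A union bound over the $|\Ndelta|^2 \le (3/\delta)^{2k}$ pairs in a $\delta$-net $\Ndelta$ of $\spherek$ propagates this to $\max_{x,y \in \Ndelta} \|T_{x,y} - \Qxy\| \le \eps/3$, provided $n \gtrsim k \log(1/\delta)$. Choosing $\delta$ polynomially small in $\eps / k$ absorbs $\log(1/\delta)$ into the hypothesis $n > ck \log k$ by taking $c$ sufficiently large in $\eps^{-1}$. For the stability step, given arbitrary $x, y \in \spherek$ with closest net points $x', y' \in \Ndelta$, the difference $T_{x,y} - T_{x',y'}$ is supported on indices where $\operatorname{sign}(w_i \cdot x)$ or $\operatorname{sign}(w_i \cdot y)$ differs from the net counterpart, which forces $|w_i \cdot x| \le \delta \|w_i\|_2$ (or the analog for $y$). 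A hyperplane-tessellation argument, via a second coarser net combined with the monotonicity $I_v(\delta) \subset I_{v'}(2\delta)$ whenever $\|v-v'\| \le \delta$, gives $\sup_{v \in \spherek} |I_v(\delta)| \le C(n\delta\sqrt{k} + k)$ with probability $1 - e^{-\gamma_2 k}$. Combined with the $O(k/n)$ spectral-norm bound on each $\wiwit$ under $\mathcal{E}_0$ and the straightforward Lipschitz estimate $\|Q_{x,y} - Q_{x',y'}\| \lesssim \delta$, this completes the uniform WDC bound at level $\eps$.

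The principal obstacle is the stability step: the indicator functions $\indwx$ are discontinuous in $x$, so the perturbation estimate must be extracted from a uniform control of a ``near-boundary'' neuron count which is itself a supremum over $\spherek$. Balancing the three scales (the net resolution $\delta$, the slab width appearing in the flip count, and the pointwise deviation threshold $\eps/3$) so that all three error budgets fit simultaneously under the single sample-complexity budget $n = \Omega(k \log k)$ is what forces the specific polynomial dependence of $c$ and $\gamma^{-1}$ on $\eps^{-1}$ in the statement.
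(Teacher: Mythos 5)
Your overall skeleton (reduce to the sphere, pointwise matrix concentration, union bound over a net, then stability) matches the paper's, but your stability step takes a genuinely different route, and as written it does not close within the lemma's sample-complexity budget. The paper handles the discontinuity of $\indwx$ by sandwiching $\Wpxt\Wpy$ between two \emph{continuous} matrix-valued functions $\Geps$ and $\Gmeps$ built from the $1/\eps$-Lipschitz ramps $\heps$, $\hmeps$, and then runs a standard Lipschitz covering argument on each of those. You instead keep the hard indicators and try to control the change across a net step by counting sign-flips, i.e.\ bounding the near-boundary set $I_v(\delta)$.

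Here is the concrete problem with that. Uniformly over the sphere, a union bound over a $\delta$-net of size $(3/\delta)^k$ forces an additive term of order $k\log(1/\delta)$ in the near-boundary count, so the best one can hope for is $\sup_v |I_v(\delta)| \lesssim n\delta\sqrt{k} + k\log(1/\delta)$ (your stated bound drops the $\log(1/\delta)$, but that is cosmetic). You then bound each flipped rank-one term by $\|\wiwit\| = \|\wi\|_2^2 \lesssim k/n$ on $\mathcal{E}_0$, giving a stability error of order
$$\Bigl(n\delta\sqrt{k} + k\log\tfrac{1}{\delta}\Bigr)\cdot\frac{k}{n} \;=\; \delta\,k^{3/2} + \frac{k^2\log(1/\delta)}{n}.$$
The first term requires $\delta \lesssim \eps\, k^{-3/2}$, and then the second term is $\gtrsim k^2\log(k/\eps)/n$, which is $\lesssim \eps$ only if $n \gtrsim \eps^{-1}k^2\log(k/\eps)$. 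That exceeds the lemma's hypothesis $n > ck\log k$ (with $c$ a constant allowed to grow only polynomially in $\eps^{-1}$, not in $k$) by a factor of roughly $k/\log k$. To salvage this, one would need a sharper operator-norm estimate on $\sum_{i\in I}\wiwit$ than $|I|\cdot\max_i\|\wi\|^2$ --- say, exploiting that the near-boundary rows remain approximately isotropic in the orthogonal complement of the slab direction --- but your sketch uses only the crude product bound, and establishing such a refinement uniformly over the adversarially-selected index sets $I_v(\delta)$ is precisely the hard part you would need to supply.

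By contrast, the paper's smoothed sandwich circumvents the barrier entirely: the Lipschitz estimate
$\|\Gmeps(x,y) - \Gmeps(\xtilde,\ytilde)\| \le \|W\|^2\,\eps^{-1}\,(\max_i\|\wi\|_2)\,(\|x-\xtilde\|_2 + \|y-\ytilde\|_2)$
is linear in the net resolution with no additive floor of order $k^2/n$, so choosing $\delta \sim \eps^2/\sqrt{k}$ yields a net-step error that is $O(\eps)$ under $n>ck\log k$. If you want to keep hard indicators and a flip-count argument, you either need a uniform small-ball / near-isotropy bound for the flipped row ensemble, or you should fall back to the Lipschitz-sandwich trick.
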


The WDC with constant $\eps$ can be written as
\begin{align}
\| \Wpx^t\Wpy  - \Qxy \| \leq \eps  \label{WpxWpyQxy}
\end{align}
for all nonzero $x,y \in \R^k$. 
 A common way to establish concentration of a random function simultaneously over an infinite number of values of $(x,y)$, like \eqref{WpxWpyQxy}, is as follows:
\begin{itemize}
\item Show concentration of the quantity with high probability for a fixed $(x,y)$.
\item Bound the Lipschitz constant of the quantity with respect to $(x,y)$.
\item Take a union bound over a net whose size is given by the Lipschitz constant.
\end{itemize}
This argument does not apply in the present case because $\Wpx^t\Wpy$ is not continuous with respect to $(x,y)$.  To deal with this lack of continuity, we form two continuous variants that are greater and less than $\Wpx^t\Wpy$, respectively, with respect the semidefinite ordering.  
%The central technical lemma concerns concentration of the matrices $\Apxt \Apy$ uniformly over all nonzero $x,y$. 
% Recall the definition that $\Apx = \diag(Ax > 0) A$. % and $\Aox = \diag(Ax=0) A$.   
% For any nonzero $x,y$, let $\theta_0 = \angle(x,y) \in [0, \pi]$.  An elementary calculation shows that if $A\in \R^{n \times k}$ has i.i.d. $\calN(0,1/n)$ entries, then
%\begin{align}
%\E \Apxt \Apy = \Qxy := \begin{cases}  \frac{\pi - \theta_0}{2 \pi} I_n + \frac{\sin \theta_0}{2\pi}  \Mxyhat  \quad & x\neq 0, y\neq 0, \\
%0 & \text{otherwise}, \end{cases} \label{defn:E-aix-aixo-aiait}
%\end{align}
%where $\xhat = x / \|x\|_2$, $\yhat = y / \|y\|_2$, and $\Mxyhat$ is the matrix such that $\xhat \mapsto \yhat$, $\yhat \mapsto \xhat$, and $z \mapsto 0$ for all $z \in \Span(\{x,y\})^\perp. $
%The central technical lemma is the following concentration result:
%The remainder of this section provides a proof of this Lemma \ref{lemma:isometry-apxtapx-uniform}. We will prove this lemma by separately establishing high probability matrix upper bounds and lower bounds for $\Apxt \Apy$.  
We now introduce some notation in order to state these bounds.  
Let 
\[
\hmeps(z) = \begin{cases}0 & z \leq -\eps, \\ 1 + \frac{z}{\eps} & -\eps \leq z \leq 0, \\ 1 & z\geq 0,  \end{cases} \quad \text{ and } \quad 
\heps(z) = \begin{cases}0 & z \leq 0, \\  \frac{z}{\eps} & 0 \leq z \leq \eps, \\ 1 & z\geq \eps.  \end{cases}
\]
When applied to a vector, let $\hmeps$ and $\heps$ act component-wise.
Let $\wit$ be the $i$th row of $W$.  Note that $\WpxtWpy = \sum_{i=1}^n \indwx \indwy \cdot \wi \wit$, and define 
\begin{align*}
\Gmeps(x, y) :=  \sum_{i=1}^n \hmeps(\wix) \hmeps(\wiy) \wiwit \quad  \text{ and } \quad
\Geps(x,y) :=  \sum_{i=1}^n \heps(\wix) \heps(\wiy) \wiwit.
\end{align*}

As $\hmeps(z) \geq 1_{z>0}(z)$ and $\heps(z) \leq 1_{z<0}(z)$ for all $z \in \R$, we have that for all nonzero $x,y$ that $\Geps(x,y) \matrixleq \WpxtWpy \matrixleq \Gmeps(x,y)$.  Thus, it suffices to establish a matrix upper bound on $\Gmeps$ and a matrix lower bound on $\Geps$. 

First we establish a matrix upper bound on $\Gmeps(x,y)$ uniformly over all nonzero $x,y$.  For ease of exposition, in the next two Lemmas, we will take the entries of $W$ to be i.i.d. $\calN(0,1)$.  In this case, $\E[\Wpxt \Wpy]=n \Qxy$. 

\begin{lemma} \label{lemma:Gmeps-uniform} 
Fix $0 < \eps < 1$.  Let $W \in \R^{n \times k}$ have i.i.d. $\calN(0,1)$ entries.   If $n > c k \log k$, then with probability at least $1- 4 n e^{-\gamma k}$,
\[
\forall x\neq 0,y\neq 0, \quad \Gmeps(x,y) \matrixleq  n \Qxy  + 3 \eps n I_k.
\]
Here, $c$ and $\gamma^{-1}$ are constants that depend only polynomially on $\eps^{-1}$.
\end{lemma}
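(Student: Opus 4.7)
The plan is to reduce to the unit sphere, then combine (i) an expectation computation, (ii) matrix concentration at a fixed pair, (iii) a Lipschitz continuity bound, and (iv) an $\eta$-net argument.

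\textbf{Step 1: Expectation bound.} For fixed $(x,y) \in S^{k-1}\times S^{k-1}$, I first compare $\E[\Gmeps(x,y)]$ to $n\Qxy$. Writing $\hmeps = 1_{\cdot > 0} + r$ with $r$ supported on $[-\eps, 0]$ and $|r| \leq 1$,
\[
\E[\Gmeps(x,y)] - n\Qxy = n \, \E\bigl[\bigl(\hmeps(w \cdot x) \hmeps(w \cdot y) - 1_{w \cdot x > 0} 1_{w \cdot y > 0}\bigr) w w^t \bigr]
\]
for $w \sim \calN(0, I_k)$. Each residual term is an expectation of $ww^t$ weighted by a factor whose support has Gaussian measure $O(\eps)$. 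The resulting rotationally-invariant integral has spectral norm at most $C_1 \eps$, so $\E[\Gmeps(x,y)] \matrixleq n \Qxy + C_1 \eps n I_k$.

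\textbf{Step 2: Concentration at a fixed pair.} On the event $E_0 = \{\|w_i\|_2 \leq C_2 \sqrt{k} \text{ for all } i\}$, which holds with probability at least $1 - 2n e^{-\gamma k}$ by standard Gaussian concentration, each summand $X_i = \hmeps(w_i \cdot x) \hmeps(w_i \cdot y) w_i w_i^t$ is PSD with $\|X_i\| \leq C_2^2 k$. A Gaussian fourth-moment calculation gives $\|\sum_i \E X_i^2 \| \leq C_3 n k$. Matrix Bernstein then yields
\[
\PP\bigl(\|\Gmeps(x,y) - \E \Gmeps(x,y)\| \geq \eps n \bigr) \leq 2k \exp(-c_1 \eps^2 n / k).
\]

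\textbf{Step 3: Lipschitz estimate and net.} Since $\hmeps$ is $\eps^{-1}$-Lipschitz and bounded by $1$,
\[
\|\Gmeps(x,y) - \Gmeps(x',y')\|  \leq \frac{1}{\eps} \sum_{i=1}^n \|w_i\|_2^3 \bigl(\|x-x'\|_2 + \|y-y'\|_2\bigr) \leq L \bigl(\|x-x'\|_2 + \|y-y'\|_2\bigr)
\]
on $E_0$, with $L = n C_2^3 k^{3/2}/\eps$. Take an $\eta$-net $\calN$ of $S^{k-1} \times S^{k-1}$ of cardinality at most $(3/\eta)^{2k}$, and union-bound Step 2 at threshold $\eps n$ over $\calN$. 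Propagating the bound from the nearest net point adds at most $2 L \eta$ to the spectral-norm deviation; choosing $\eta$ polynomially small in $\eps, k^{-1}, n^{-1}$ keeps $2L\eta \leq \eps n$. The union-bounded probability is at least $1 - 2k (3/\eta)^{2k} \exp(-c_1 \eps^2 n/k)$, which exceeds $1 - 2n e^{-\gamma' k}$ once $n \geq c k \log k$ with $c$ polynomial in $\eps^{-1}$. Combining the resulting $2\eps n$ spectral-norm deviation with Step 1 gives $\Gmeps(x,y) \matrixleq n \Qxy + (C_1 + 2)\eps n I_k$ uniformly, and absorbing constants by rescaling $\eps$ by an absolute factor yields the stated $3 \eps n I_k$ bound.

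\textbf{Main obstacle.} The crux is balancing the Bernstein exponent $c_1 \eps^2 n / k$ against both the net entropy $\sim k \log(n k/\eps)$ and the truncation failure probability $n e^{-\gamma k}$; this forces $n \gtrsim k \log k / \eps^{\mathrm{poly}}$, which is absorbed into the statement via the polynomial dependence of $c$ on $\eps^{-1}$. A secondary subtlety is that $\Gmeps(x,y)$ is not positively homogeneous in $(x,y)$ (because $\hmeps$ has a fixed threshold), so the argument is carried out on the unit sphere rather than by a scale-invariance reduction; this is consistent with how the lemma is applied downstream, where it upper-bounds $\WpxtWpy$ at unit-norm directions.
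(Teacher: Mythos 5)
Your overall structure (expectation bound, fixed-pair concentration, Lipschitz estimate, $\eta$-net) matches the paper's, and Steps 1 and 3 are sound modulo constants. The genuine gap is in Step 2: using matrix Bernstein with the almost-sure bound $\|X_i\| \leq C_2^2 k$ and variance $\|\sum_i \E X_i^2\| \leq C_3 nk$ yields a failure probability of the form $2k\exp(-c_1 \eps^2 n/k)$, with an unavoidable factor of $1/k$ in the exponent (the $t^2/(\sigma^2 + Rt/3)$ scaling of Bernstein gives $\eps^2 n^2/(nk + k\eps n) \sim \eps^2 n/k$). When you union-bound over a net of cardinality $(3/\eta)^{2k}$ with $\log(3/\eta) = \Theta(\log k + \log\eps^{-1})$, you need $\eps^2 n/k \gtrsim k\log k$, i.e.\ $n \gtrsim k^2\log k/\eps^2$ — a full factor of $k$ worse than the lemma's hypothesis $n > ck\log k$. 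So your argument, as written, does not establish the stated result under the stated assumption.

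The fix, which is what the paper does, is to recognize that $\Gmeps(x,y) = \sum_i \xi_i\xi_i^t$ with $\xi_i := \sqrt{\hmeps(w_i\cdot x)}\sqrt{\hmeps(w_i\cdot y)}\, w_i$, and to observe that $\xi_i$ is sub-Gaussian in $\R^k$ with $O(1)$ sub-Gaussian norm (since $|\hmeps|\leq 1$ and $w_i$ is standard Gaussian). Then the sub-Gaussian second-moment estimation bound (Remark~5.40 of Vershynin) gives $\|\Gmeps - \E\Gmeps\| \leq \max(\delta,\delta^2) n$ with $\delta = c_K\sqrt{k/n} + t/\sqrt{n}$ on an event of probability $\geq 1 - 2e^{-\gamma_K t^2}$. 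Setting $t = \eps\sqrt{n}/2$ yields a fixed-pair failure probability $\exp(-\gamma_K\eps^2 n/4)$, with no $1/k$ loss, so the net's $e^{O(k\log k)}$ cardinality is beaten already at $n\gtrsim k\log k/\eps^2$. The underlying reason this beats matrix Bernstein is that the sub-Gaussian bound exploits the light tails of each $\xi_i$ (effectively a $\psi_2$ control) rather than only the crude operator-norm ceiling $\|\xi_i\xi_i^t\| \leq \|w_i\|^2 \approx k$, which is the bottleneck in your Bernstein exponent. A secondary issue: conditioning on $E_0$ in your Step 2 changes the law of each $w_i$, so $\E[X_i \mid E_0] \neq \E X_i$; this needs to be handled explicitly (or, better, avoided entirely by not truncating, as in the sub-Gaussian route).
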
 
\begin{proof}
Note that the entries of $W$ are assumed to have $\mathcal{N}(0,1)$ entries, and not $\mathcal{N}(0, 1/n)$ entries like in most of this paper.
In this proof, the values of the constants $c$ and $\gamma$ may change from line to line, but they are all bounded above and below, respectively, by some $\eps$-dependent constant.  Without loss of generality, let $x,y \in \spherek$. 

First, we bound $\E[\Gmeps(x,y)]$ for fixed $x, y \in \spherek$.   Noting that $\hmeps(z)  \leq 1_{z\geq -\eps}(z) = 1_{z>0}(z) + 1_{-\eps \leq z \leq 0}(z)$, we have
\begin{align}
\E[ \Gmeps(x,y) ] &\matrixleq \E \Bigl[ \sum_{i=1}^n 1_{\wix \geq -\eps} 1_{\wiy \geq -\eps} \cdot \wi \wit  \Bigr] \notag \\
&\matrixleq \E \Bigl[\sum_{i=1}^n (1_{\wix>0} 1_{\wiy>0} + 1_{-\eps \leq \wix \leq 0} + 1_{-\eps \leq \wiy \leq 0} ) \cdot  \wi \wit \Bigr] \notag \\
&=  n\Qxy +  n \E[ 1_{-\eps\leq \wix \leq 0} \cdot \wi \wit ] +  n \E[ 1_{-\eps\leq \wiy \leq 0} \cdot \wi \wit ]. \notag
\end{align}
We now bound $\E[ 1_{-\eps\leq \wix \leq 0} \cdot \wi \wit ]$.  For deriving this bound, we may take $x = e_1$ without loss of generality.  We have for $\eps < 1$, 
\begin{align}
\E[ 1_{-\eps\leq \wix \leq 0} \cdot \wi \wit  ] &= \begin{pmatrix}\zeta_1 & 0 \\ 0 & \zeta_2 \cdot I_{n-1} \end{pmatrix}, \notag \\
0 \leq \zeta_1 &\leq \int_{-\eps}^0 z^2 \frac{1}{\sqrt{2 \pi}} e^{-z^2/2} dz \leq \frac{\eps}{2}, \notag \\
0 \leq \zeta_2 &\leq \int_{-\eps}^0  \frac{1}{\sqrt{2 \pi}} e^{-z^2/2} dz \leq \frac{\eps}{2}. \notag
\end{align}
Thus, $\E[ 1_{-\eps< \wix < 0} \cdot \wi \wit ] \matrixleq \frac{\eps}{2} I_k$ for any $x \neq 0$, resulting in
\begin{align}
\E[\Gmeps(x, y)] \matrixleq n\Qxy + \eps n \cdot I_k. \label{bound-Gmeps-expectation-uniform}
\end{align}

Second, we show concentration of $\Gmeps(x,y)$ for fixed $x,y \in \spherek$.  Let $\xii = \sqrt{\hmeps(\wix)} \sqrt{\hmeps(\wiy)} \wi$.  We have
\begin{align}
\Gmeps(x,y) - \E[ \Gmeps(x,y) ] &= \sum_{i=1}^n  \Bigl( \hmeps(\wix) \hmeps(\wiy) \wiwit - \E\bigl[ \hmeps(\wix) \hmeps(\wiy) \wiwit \bigr]   \Bigr) \\&= \sum_{i=1}^n (\xiixiit - \E \xiixiit).
\end{align}
Note that $\xii$ is sub-Gaussian for all $i$ and that the sub-Gaussian norm of $\xii$ is bounded above by an absolute constant, which we will call $K$.  By the first part of Remark 5.40 in \cite{V2012}, there exist constants $c_K$ and $\gamma_K$ such that for all $t \geq 0$, with probability at least $1 - 2 e ^{-\gamma_K t^2}$,
\[
\| \Gmeps(x,y) - \E \Gmeps(x,y) \| \leq \max(\delta, \delta^2) n, \quad \text{ where } \delta = c_K \sqrt{\frac{k}{n}} + \frac{t}{\sqrt{n}}.
\]

If $n > (2/\eps)^2 c_K^2 k$, $t = \eps \sqrt{n}/2$, and $\eps < 1$, we have that with probability at least $1-2e^{-\gamma_K \eps^2 n/4}$,
\begin{align}
\| \Gmeps(x,y) - \E \Gmeps(x,y) \| \leq \eps n.   \label{bound-Gmeps-concentration-uniform} 
\end{align}

Third, we bound the Lipschitz constant of $\Gmeps$.  For $\xtilde, \ytilde \in \R^k$ we have
\begin{align}
\Gmeps(x, y) - \Gmeps(\xtilde, \ytilde) &= \sum_{i=1}^n \bigl[ \hmeps(\wix) \hmeps(\wiy) - \hmeps(\wixtilde) \hmeps(\wiytilde) \bigr] \wiwit \notag \\
&= \sum_{i=1}^n \bigl[ \hmeps(\wix) \bigl(\hmeps(\wiy) - \hmeps(\wiytilde) \bigr)  \notag \\[-.75em]
&\quad \quad\quad + \hmeps(\wiytilde) \bigl(\hmeps(\wix) - \hmeps(\wixtilde) \bigr)  \bigr] \wiwit \notag \\
&= \Wt \bigl[ \diag \bigl( \hmeps(Wx) \bigr) \diag \bigl(\hmeps(Wy) - \hmeps(W\ytilde) \bigr) \notag\\
&\quad \quad \quad+  \diag \bigl( \hmeps(W \ytilde) \bigr) \diag \bigl(\hmeps(Wx) - \hmeps(W\xtilde) \bigr)  \bigr] W. \notag
\end{align}
Thus,
\begin{align*}
\| \Gmeps(x,y) - \Gmeps(\xtilde, \ytilde) \| 
&\leq \| W\|^2  \Bigl[  \| \hmeps(W x)\|_\infty  \|  \hmeps(Wy) - \hmeps(W\ytilde) \|_\infty \notag \\
&\quad \quad \quad +  \| \hmeps(W \ytilde)\|_\infty  \|  \hmeps(Wx) - \hmeps(W\xtilde) \|_\infty  \Bigr ] \\
&\leq  \| W\|^2 \Bigl[ \max_{i \in [n]} \big | \hmeps(\wiy) - \hmeps(\wiytilde) \big|  \notag \\
& \quad \quad \quad  +\max_{i \in [n]}\big | \hmeps(\wix) - \hmeps(\wixtilde) \big|  \Bigr ] \\
&\leq \|W\|^2 \Bigl[ \max_{i \in [n]} \frac{1}{\eps} \big| \wi \cdot(x-\xtilde) \big| + \max_{i \in [n]} \frac{1}{\eps} \big| \wi \cdot(y-\ytilde) \big|  \Bigr] \\
&\leq \|W\|^2 \frac{1}{\eps} \Bigl(\max_{i \in [n]} \| \wi\|_2 \Bigr) ( \| x-\xtilde \|_2  +  \| y-\ytilde \|_2 ) 
\end{align*}
where the second inequality follows because $|\hmeps(z)| \leq 1$ for all $z$, and the third inequality follows because $\hmeps$ is $1/\eps$-Lipschitz.

Let $E_1$ be the event that $\|W\| \leq 3 \sqrt{n}$.  By Corollary 5.35 in \cite{V2012}, we have that $\PP(E_1) \geq 1 - 2e^{-n/2}$, if $n \geq k$.  Let $E_2$ be the event that $\max_{i \in [n]} \|\wi \|_2 \leq 2 \sqrt{k}$.    By a single-tailed variant of Corollary 5.17 in \cite{V2012}, there exists a constant $\gamma_0$ such that for fixed $i$,  $\| \wi\|_2 \leq 2 \sqrt{k}$ with probability at least $1 - e^{-\gamma_0 k}$.  Thus, $\PP(E_2) \geq 1- n e^{-\gamma_0 k}$. 

On $E_1 \cap E_2$, we have  
\begin{align}
\| \Gmeps(x,y) - \Gmeps(\xtilde, \ytilde) \| \leq \frac{18 n \sqrt{k}}{\eps} (\|x-\xtilde\|_2 + \|y-\ytilde\|_2) .   \label{lipschitz-constant-Gmeps-uniform}
\end{align}
for all $x,y,\xtilde, \ytilde \in \spherek$. 

Finally, we complete the proof by a covering argument.  Let $\Ndelta$ be a $\delta$-net on $\spherek$ such that $| \Ndelta| \leq (3/\delta)^k$.  Take $\delta = \frac{\eps^2}{36 \sqrt{k}}$.   Combining \eqref{bound-Gmeps-concentration-uniform} and \eqref{bound-Gmeps-expectation-uniform}, we have 
\[
\forall x,y \in \Ndelta, \quad \Gmeps(x,y) \matrixleq \E \Gmeps(x,y) + \eps  n I_k \matrixleq n\Qxy + 2 \eps n I_k.
\]
with probability at least $1 - 2 | \Ndelta| e^{-\gamma_K \eps^2 n/4} \geq 1 - 2 (\frac{3}{\delta})^k e^{- \gamma_K \eps^2 n/4 } \geq 1 - 2 e^{-\gamma_K \eps^2 n/4 + k \log (3\cdot 36 \sqrt{k}/\eps^2)}$.  If $n > \tilde{c} k \log k$, for some $\tilde{c} = \Omega (\eps^{-2} \log \eps^{-1})$, then this probability is at least $1 - 2 e^{-\tilde{\gamma} n}$ for some $\tilde{\gamma}=O(\eps^2)$. 
For any $x,y \in \spherek$, let $\xtilde, \ytilde \in \Ndelta$ be such that $\|x-\xtilde\|_2 < \delta$ and $\|y-\ytilde\|_2 < \delta$.  By \eqref{lipschitz-constant-Gmeps-uniform}, we have that 
\[
\forall x\neq0, y\neq 0, \quad \Gmeps(x,y) \matrixleq \Gmeps(\xtilde, \ytilde) + \frac{18 n\sqrt{k}}{\eps} 2 \delta I_k \matrixleq n \Qxy + 3 \eps n I_k.
\]
%if $n \geq \max(k, (2/\eps)^2 c^2 k,\tilde{c} k \log k)$,  with probability at least $1 - 2 e^{-cn} - n e^{-ck} \geq 1- ne^{-c_2 k}$. Take $c_1 = \max(1, (2/\eps)^2 c^2, \tilde{c})$. 
In conclusion, the result of this lemma holds if $n > (2/\eps)^2 c_K^2 k$ and $n > \tilde{c} k \log k$, with probability at least $1 - 2 e^{-\gamma_K \eps^2 n/4} - 2 e^{-n/2} - n e^{-\gamma_0 k} - 2 e^{-\tilde{\gamma} n} > 1 - 4 n e^{-\gamma k}$ for some $\gamma = O(\eps^2)$ and $\tilde{c} = \Omega (\eps^{-2} \log \eps^{-1})$.
\end{proof}

\begin{lemma} \label{lemma:Geps-uniform} 
Fix $0 < \eps < 1$.  Let $W \in \R^{n \times k}$ have i.i.d. $\calN(0,1)$ entries.   If $n > c k \log k$, then with probability at least $1- 4 n e^{-\gamma k}$,
\[
\forall x\neq 0, y \neq 0, \quad \Geps(x,y) \matrixgeq n \Qxy - 3 \eps n I_k.
\]
Here, $c$ and $\gamma^{-1}$ are constants that depend only polynomially on $\eps^{-1}$.
\end{lemma}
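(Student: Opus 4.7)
My plan is to mirror the proof of Lemma \ref{lemma:Gmeps-uniform}, with the key sign flip that $\heps$ underestimates the indicator $1_{z>0}$ rather than overestimating it. As in the earlier lemma, I will (i) bound $\E[\Geps(x,y)]$ from below for fixed $x,y$, (ii) establish concentration of $\Geps(x,y)$ about its expectation for fixed $x,y$, (iii) bound the Lipschitz constant of $\Geps$ on $\spherek \times \spherek$, and (iv) union-bound over a $\delta$-net of appropriate size.

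For step (i), I would use the pointwise bound $\heps(z) \geq 1_{z\geq \eps} = 1_{z>0} - 1_{0<z<\eps}$ to write
\begin{align*}
\E[\Geps(x,y)] &\matrixgeq \E \Bigl[\sum_{i=1}^n 1_{\wix \geq \eps} 1_{\wiy \geq \eps} \wiwit \Bigr] \\
&\matrixgeq \E \Bigl[\sum_{i=1}^n (\indwx \indwy - 1_{0<\wix<\eps} - 1_{0<\wiy<\eps}) \wiwit \Bigr] \\
&= n \Qxy - n \E[1_{0<\wix<\eps} \wiwit] - n \E[1_{0<\wiy<\eps} \wiwit].
\end{align*}
Taking $x = e_1$ without loss of generality, the matrix $\E[1_{0<\wix<\eps}\wiwit]$ is diagonal with entries bounded above by $\eps/2$, exactly as in the previous lemma. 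This yields $\E[\Geps(x,y)] \matrixgeq n\Qxy - \eps n I_k$.

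For step (ii), I would set $\xii = \sqrt{\heps(\wix)}\sqrt{\heps(\wiy)} \wi$ so that $\Geps(x,y) = \sum_i \xiixiit$. Each $\xii$ is sub-Gaussian with sub-Gaussian norm bounded by an absolute constant (since $0 \leq \heps \leq 1$ and $\wi$ is standard Gaussian), so the same application of Remark 5.40 of \cite{V2012} yields $\|\Geps(x,y) - \E\Geps(x,y)\| \leq \eps n$ for fixed $x,y \in \spherek$ with probability at least $1 - 2 e^{-\gamma_K \eps^2 n/4}$, provided $n \geq (2/\eps)^2 c_K^2 k$.

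For step (iii), the only property of $\hmeps$ used in the Lipschitz computation was that $|\hmeps(\cdot)| \leq 1$ and that $\hmeps$ is $(1/\eps)$-Lipschitz. Both properties hold verbatim for $\heps$, so the identical telescoping argument together with the events $E_1 = \{\|W\| \leq 3 \sqrt{n}\}$ and $E_2 = \{\max_i \|\wi\|_2 \leq 2 \sqrt{k}\}$ gives
\begin{align*}
\| \Geps(x,y) - \Geps(\xtilde, \ytilde) \| \leq \frac{18 n \sqrt{k}}{\eps} (\|x-\xtilde\|_2 + \|y-\ytilde\|_2)
\end{align*}
on $E_1 \cap E_2$, which occurs with probability at least $1 - 2 e^{-n/2} - n e^{-\gamma_0 k}$.

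Finally, for step (iv), I would take a $\delta$-net $\Ndelta \subset \spherek$ with $|\Ndelta| \leq (3/\delta)^k$ and $\delta = \eps^2/(36 \sqrt{k})$, union bound the concentration bound over $\Ndelta \times \Ndelta$, and transfer to all of $\spherek \times \spherek$ via the Lipschitz estimate. This yields $\Geps(x,y) \matrixgeq \E \Geps(x,y) - \eps n I_k - \eps n I_k \matrixgeq n \Qxy - 3 \eps n I_k$ simultaneously for all nonzero $x,y$, with the same failure probability bookkeeping as in Lemma \ref{lemma:Gmeps-uniform}, giving a total probability of at least $1 - 4 n e^{-\gamma k}$ for some $\gamma = O(\eps^2)$ whenever $n > c k \log k$ with $c = \Omega(\eps^{-2} \log \eps^{-1})$. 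No step appears harder than the corresponding step for $\Gmeps$; the main thing to check carefully is simply that every sign and direction of matrix inequality reverses correctly when passing from the truncation-from-above $\hmeps$ to the truncation-from-below $\heps$.
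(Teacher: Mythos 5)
Your proposal is correct and follows essentially the same route as the paper's proof, which itself explicitly defers to the proof of Lemma \ref{lemma:Gmeps-uniform} with the signs reversed: the pointwise bound $\heps(z) \geq 1_{z>0}(z) - 1_{0\le z \le \eps}(z)$ gives the expectation lower bound $\E[\Geps(x,y)] \matrixgeq n\Qxy - \eps n I_k$, and the sub-Gaussian concentration, Lipschitz bound (which uses only $0 \le \heps \le 1$ and the $1/\eps$-Lipschitz property), and net-union argument carry over verbatim. Your explicit intermediate scalar inequality $1_{a\geq\eps}1_{b\geq\eps} \geq 1_{a>0}1_{b>0} - 1_{0<a<\eps} - 1_{0<b<\eps}$ is if anything slightly more careful than the paper's one-line statement of the same fact.
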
 
\begin{proof}
The proof follows that of Lemma \ref{lemma:Gmeps-uniform} exactly. 

First, we bound $\E[\Geps(x,y)]$ for fixed $x,y \in \spherek$.  For deriving this bound, we take $x = e_1$ without loss of generality. Noting that $\heps(z) \geq 1_{z>0}(z) -  1_{0\leq z \leq \eps}(z)$ for all $z$, we have
\begin{align*}
\E[ \Geps(x,y) ] &\matrixgeq   n\Qxy -   n\E[ 1_{0\leq \wix \leq \eps} \cdot \wi \wit ] -   n \E[ 1_{0\leq \wiy \leq \eps} \cdot \wi \wit ].
\end{align*}

 We have  
\begin{align*}
\E[ 1_{0\leq \wix \leq \eps} \cdot \wi \wit  ] &= \begin{pmatrix}\zeta_1 & 0 \\ 0 & \zeta_2 \cdot I_{n-1} \end{pmatrix}, \\
0 \leq \zeta_1 &\leq \int_{0}^\eps z^2 \frac{1}{\sqrt{2 \pi}} e^{-z^2/2} dz \leq \frac{\eps}{2},\\
0 \leq \zeta_2 &\leq \int_{0}^\eps  \frac{1}{\sqrt{2 \pi}} e^{-z^2/2} dz \leq \frac{\eps}{2}.
\end{align*}
Thus, $\E[ 1_{0\leq \wix \leq \eps} \cdot \wi \wit ] \matrixleq \frac{\eps}{2} I_k$ for any $x \neq 0$, resulting in
\begin{align*}
\E[\Geps(x,y)] \matrixgeq n \Qxy - \eps n \cdot I_k. 
\end{align*}

Second, the same argument as in Lemma \ref{lemma:Gmeps-uniform} provides that for fixed $x,y\in \spherek$, 
if $n > (2/\eps)^2 c_K^2 k$, then we have that with probability at least $1-2e^{-\gamma_K \eps^2 n/4}$, 
\begin{align*}
\| \Geps(x,y) - \E \Geps(x,y) \| \leq \eps n.   
\end{align*}

Third, the same argument as in Lemma \ref{lemma:Gmeps-uniform} provides that on the event $E_1\cap E_2$, we have
\begin{align*}
\| \Geps(x,y) - \Geps(\xtilde, \ytilde) \| \leq \frac{18 n\sqrt{k}}{\eps} \bigl[ \|x-\xtilde \|_2 + \|y - \ytilde\|_2 \bigr].   
\end{align*}
for all $x,y,\xtilde, \ytilde \in \spherek$.

Finally, we complete the proof by an identical covering argument as in Lemma \ref{lemma:Gmeps-uniform}.  We have that if $n > c_0 k \log k$ then with probability at least $1 - 4 n e^{-\gamma k}$, 
\[
\forall x, y \in \spherek, \Geps(x,y) \matrixgeq n \Qxy   - 3 \eps n I_k.
\]
\end{proof}

We may now prove Lemma \ref{lemma:isometry-apxtapx-uniform}.
\begin{proof}[Proof of Lemma \ref{lemma:isometry-apxtapx-uniform}]

It suffices to show
\[
\forall x,y \in \R^k, \quad \|\WpxtWpy -  n\Qxy \| \leq  3 n\eps,
\]
where $A$ has i.i.d. $\mathcal{N}(0,1)$ entries.
The result is immediate if $x=0$ or $y = 0$.
  For all nonzero $x$ and $y$ , we have $\Geps(x,y) \matrixleq \WpxtWpy \matrixleq \Gmeps(x,y)$.  The lemma then follows directly from Lemmas \ref{lemma:Gmeps-uniform} and \ref{lemma:Geps-uniform}.
\end{proof}

\subsection{Proof of RRIC} \label{sec:RRIC-Gaussian}

We will make use of a standard concentration result used in proving the Restricted Isometry Property from compressed sensing \cite{Baraniuk2008}.
\begin{lemma}[Variant of Lemma 5.1 in \cite{Baraniuk2008}] \label{lemma:rip-baraniuk} 
Let $A \in \R^{m \times n}$ have i.i.d. $\calN(0, 1/m)$ entries.  Fix $0< \eps <1, k < m$.  Fix a subspace $T \subset \R^n$ of dimension $k$.  With probability at least $1 - (c_1/\eps)^k e^{-\gamma_1 \eps m}$, 
\[
(1-\eps) \|x\|_2^2 \leq \|Ax\|_2^2 \leq (1+\eps) \|x\|_2^2, \quad \forall x \in T,
\]
and
\[
| \langle Ax, Ay \rangle - \langle x, y \rangle | \leq \eps \|x\|_2 \|y\|_2, \quad \forall x, y \in T.
\]
Let $V = \bigcup_{i=1}^M V_i$ and $W = \bigcup_{j=1}^N W_j$, where $V_i$ and $W_j$ are subspaces of $\R^n$ of dimension at most $k$ for all $i,j$.  Then,
\[
| \langle Ax, Ay \rangle - \langle x, y \rangle| \leq \eps \|x\|_2 \|y\|_2, \quad \forall x \in V, y \in W,
\]
with probability at least $ 1 - MN(c_1/\eps)^{2k} e^{-\gamma_1 \eps m}. $
Here, $c_1$ and $\gamma_1$ are universal constants.
\end{lemma}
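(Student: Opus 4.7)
The plan is to prove the result in three stages: the norm bound on a single subspace $T$ via an $\eta$-net plus scalar Gaussian concentration; the inner-product bound on $T$ by polarization; and the union-of-subspaces statement by a union bound over the $MN$ Minkowski sums $V_i + W_j$, each of which is a subspace of dimension at most $2k$.

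For the first stage, I would start from the scalar Gaussian concentration inequality $\PP(|\|Ax\|_2^2 - \|x\|_2^2| \geq \delta \|x\|_2^2) \leq 2e^{-c_0 m \delta^2}$, valid for any fixed $x$ and a universal $c_0 > 0$, take an $\eta$-net $\calN_\eta$ of the unit sphere in $T$ with $|\calN_\eta| \leq (3/\eta)^k$, apply concentration with $\delta = \eps/2$, and union-bound over $\calN_\eta$ to get $|\|Au\|_2^2 - 1| \leq \eps/2$ uniformly on $\calN_\eta$ with failure probability at most $2(3/\eta)^k e^{-c_0 m \eps^2/4}$. The standard net-to-sphere argument (expanding an arbitrary unit vector in $T$ as a convergent series of net elements with geometrically decaying residuals) then upgrades this to $(1-\eps)\|x\|_2^2 \leq \|Ax\|_2^2 \leq (1+\eps)\|x\|_2^2$ for every $x \in T$, once $\eta$ is chosen as a small fixed fraction of $\eps$; the $\eps$-dependence is absorbed into the constants $c_1, \gamma_1$ of the stated bound. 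The inner-product version on $T$ then follows from the polarization identity
\[
\langle Ax, Ay\rangle - \langle x, y\rangle = \tfrac{1}{4}\bigl[\bigl(\|A(x+y)\|_2^2 - \|x+y\|_2^2\bigr) - \bigl(\|A(x-y)\|_2^2 - \|x-y\|_2^2\bigr)\bigr]
\]
together with the parallelogram identity $\|x+y\|_2^2 + \|x-y\|_2^2 = 2\|x\|_2^2 + 2\|y\|_2^2$: applying the norm bound to $x\pm y \in T$ yields $|\langle Ax, Ay\rangle - \langle x, y\rangle| \leq \eps \|x\|_2 \|y\|_2$ for all $x,y \in T$ after normalizing and using bilinearity.

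For the union-of-subspaces statement, the key observation is that $V_i + W_j$ is a subspace of $\R^n$ of dimension at most $2k$ containing both $V_i$ and $W_j$. Applying the single-subspace inner-product bound to $V_i + W_j$ gives $|\langle Ax, Ay\rangle - \langle x, y\rangle| \leq \eps\|x\|_2\|y\|_2$ simultaneously for all $x \in V_i$ and $y \in W_j$ on an event of probability at least $1 - (c_1/\eps)^{2k} e^{-\gamma_1 \eps m}$, and a union bound over the $MN$ pairs $(i,j)$ produces the claimed final probability $1 - MN(c_1/\eps)^{2k} e^{-\gamma_1 \eps m}$. No step here is mathematically delicate: essentially every piece appears in the original Baraniuk--Davenport--DeVore--Wakin proof of the restricted isometry property, supplemented by a textbook polarization argument and union bound. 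The main bookkeeping task is absorbing the Gaussian exponent $e^{-c_0 m \eps^2/4}$ into the stated form $e^{-\gamma_1 \eps m}$ and choosing the net spacing $\eta$ as a small fraction of $\eps$ so that the sphere-approximation step closes cleanly; neither presents a genuine obstacle.
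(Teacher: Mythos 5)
Your proposal is correct and follows essentially the same route as the paper: the paper invokes rotational invariance of the Gaussian to reduce $T$ to a coordinate subspace and then cites Lemma 5.1 of Baraniuk--Davenport--DeVore--Wakin directly, whereas you reprove that lemma from scratch via a net and scalar concentration, but this is the very argument underlying the cited lemma; the polarization/parallelogram step and the union bound over the at-most-$2k$-dimensional subspaces $\Span(V_i, W_j)$ are exactly as in the paper. One small caution on the bookkeeping you wave off at the end: the Gaussian concentration gives an exponent of order $\eps^2 m$, and for small $\eps$ this cannot be ``absorbed'' into $e^{-\gamma_1 \eps m}$ with a \emph{universal} $\gamma_1$ (that would strengthen the bound by a factor of $\eps$ in the exponent). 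This is a discrepancy inherited from the paper's own statement versus the original Baraniuk lemma, and it is harmless downstream because the constants $c,\gamma$ in the main theorems are allowed to depend polynomially on $\eps^{-1}$, but it is worth noting rather than calling it ``not a genuine obstacle.''
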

\begin{proof}
This proof is an immediate extension of Lemma 5.1 in \cite{Baraniuk2008}.  As $A$ is Gaussian, we may take $T$ to be the span of $k$ standard basis vectors and directly apply the lemma in that paper. The inner product form follows by a standard argument based on the parallelogram identity.  The last inequality holds by applying the second inequality to all subspaces of the form $\Span(V_i, W_j)$, which have dimension at most $2k$, and by applying a union bound.
\end{proof}

In order to apply Lemma \ref{lemma:rip-baraniuk}, we now provide an upper bound for the number of subspaces that arise from the objective $f$.

\begin{lemma} \label{lemma:combinatorial-bound-Apluszero} 
Let $V$ be a subspace of $\R^k$.  Let $W \in \R^{n \times k}$ have i.i.d. $\calN(0, 1/n)$ entries.    With probability 1, 
\begin{align}
| \{ \diag(Wv > 0) W \mid v \in V\} | &\leq 10 n^{\dim V}.
%| \{ \diag(Av = 0) A \mid v \in V\} | &\leq n^{\dim V}, \\
%\max_{v \in V, v \neq 0} |\{ i \mid \ai \cdot v = 0 \}| &< \dim V.
\end{align}
\end{lemma}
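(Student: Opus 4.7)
The plan is to split the proof into (i) an almost-sure reduction from diagonal-matrix products to binary activation patterns, and (ii) a classical combinatorial bound on the number of such patterns.

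For step (i), I would show that with probability one, distinct diagonal matrices $D = \diag(Wv > 0)$ produce distinct products $DW$. Indeed, if $D_1 \neq D_2$ differ at some index $i$, then the $i$-th row of $(D_1 - D_2) W$ equals $\pm w_i^\tran$, which is nonzero because each row $w_i \sim \calN(0, I_k/n)$ is a.s.\ nonzero and the candidate diagonal matrices lie in a finite set of $0/1$ diagonals. This almost-sure event reduces the cardinality $|\{\diag(Wv>0)W : v \in V\}|$ to $|\{S(v) : v \in V\}|$, where $S(v) := \{i \in [n] : w_i \cdot v > 0\}$.

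For step (ii), I would bound $|\{S(v) : v \in V\}|$ using the Sauer--Shelah lemma applied to the concept class of open origin-symmetric half-spaces in the $\ell$-dimensional space $V$, with the rows $w_i$ playing the role of ``data points.'' The VC-dimension of this class is exactly $\ell := \dim V$: any $\ell$ linearly independent vectors in $V$ can be shattered by choosing $v$ as an appropriately-signed element of the dual basis, whereas any $\ell+1$ vectors in $V$ are linearly dependent (giving $\sum c_i w_i = 0$ with mixed-sign $c_i$, which rules out at least one sign pattern). Sauer--Shelah then yields $|\{S(v) : v \in V\}| \leq \sum_{i=0}^\ell \binom{n}{i}$. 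The same bound follows more elementarily from the induction $f(n,\ell) \leq f(n-1,\ell) + f(n-1,\ell-1)$, obtained by separating patterns according to whether $n \in S(v)$; the base cases $f(\cdot,0) = f(0,\cdot) = 1$ together with Pascal's identity close the recurrence.

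The last step is a short numerical estimate converting $\sum_{i=0}^\ell \binom{n}{i}$ into $10 n^\ell$: for $n > \ell \geq 1$ the sum is at most $(\ell+1)\binom{n}{\ell} \leq (\ell+1) n^\ell/\ell! \leq 2 n^\ell$, using $(\ell+1)/\ell! \leq 2$ when $\ell \geq 1$; for $n \leq \ell$ the sum collapses to $2^n$, which is at most $n^\ell$ when $n \geq 2$ and trivially at most $10$ otherwise; the case $\ell = 0$ is immediate. The constant $10$ provides ample slack. No genuine obstacle is anticipated: the probabilistic ingredient boils down to a single nonvanishing event, and the combinatorial bound is a textbook consequence of VC theory (or a one-page induction).
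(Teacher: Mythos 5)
Your proof reaches the goal by a genuinely different combinatorial device than the paper. You bound the number of half-space sign patterns via the VC dimension of origin-symmetric half-spaces in $V$ (which you correctly identify as $\ell = \dim V$) and then apply Sauer--Shelah to obtain $\sum_{i=0}^{\ell}\binom{n}{i}$. The paper instead invokes Wendel's classical hyperplane-arrangement result \cite{wendel1962problem}: the complement of $n$ central hyperplanes in $\R^\ell$ in general position is partitioned into exactly $2\sum_{i=0}^{\ell-1}\binom{n-1}{i}$ regions, and the rows of $W$ are in general position almost surely. Wendel's formula is sharper and is an exact count under genericity, whereas Sauer--Shelah is distribution-free and a slightly looser upper bound; both fall under $10\,n^{\ell}$, so either works. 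One byproduct of your route is that the $\le$ half of the lemma actually needs no randomness at all (you only used probability 1 to show that distinct activation patterns give distinct matrices $DW$, but an inequality $|\{DW\}| \le |\{D\}|$ is free, since $D \mapsto DW$ is a surjection onto that set).

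There is, however, a genuine flaw in your final numerical step. The inequality $\sum_{i=0}^{\ell}\binom{n}{i} \le (\ell+1)\binom{n}{\ell}$ is false in the regime $n/2 < \ell < n$, since $\binom{n}{i}$ is not maximized at $i = \ell$ once $\ell$ passes $n/2$. For instance, with $n=5$ and $\ell=4$ one has $\sum_{i=0}^{4}\binom{5}{i} = 31$ but $(\ell+1)\binom{n}{\ell} = 25$. The conclusion you want is still true; simply replace this step with the standard estimate
\begin{align*}
\sum_{i=0}^{\ell}\binom{n}{i} \;\le\; \Bigl(\tfrac{en}{\ell}\Bigr)^{\ell} \;=\; \frac{e^{\ell}}{\ell^{\ell}}\,n^{\ell} \;\le\; e\,n^{\ell} \;\le\; 10\,n^{\ell},
\end{align*}
valid for all $1 \le \ell \le n$ (and handle $\ell = 0$ separately, where the left side is $1$). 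A second, harmless imprecision: in the VC upper-bound argument the dependency coefficients $c_i$ among $\ell+1$ vectors need not have ``mixed sign''; the argument works for any nonzero $(c_i)$, since the pattern assigning sign $\operatorname{sign}(c_i)$ to each index with $c_i \neq 0$ is never realized.
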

\begin{proof}
Let $\ell = \dim V$.  By rotational invariance of Gaussians, we may take $V= \Span(e_1, \ldots, e_\ell)$ without loss of generality.  Without loss of generality, we may let $W$  have dimensions $n \times \ell$ and take $V = \R^\ell$. 

%To establish the first inequality, 
We will appeal to a classical result from sphere covering \cite{wendel1962problem}.  If $n$ hyperplanes in $\R^\ell$ contain the origin and are such that the normal vectors to any subset of $\ell$ of those hyperplanes are independent, then the complement of the union of these hyperplanes is partitioned into at most 
\[
2 \sum_{i = 0}^{\ell-1}{n-1 \choose i}
\]
disjoint regions. Note that for fixed $W$,  $|\{ \diag(Wv>0) W \mid v \in \R^\ell\}|$ equals the number of binary vectors of the form $(1_{\wi \cdot v > 0})_{i \in [n]}$ for $ v \in \spherel$.  Each such binary vector corresponds uniquely to one of the disjoint regions given by partitioning the unit sphere in $\R^\ell$ by the $n$ half-spaces going through the origin with normal vectors $\{\wi\}_{i \in [n]}$.
With probability 1, any subset of $\ell$ rows of $W$ are linearly independent, and thus,
\[
|\{ \diag(Wv>0) W \mid v \in \R^\ell\}| = 2 \sum_{i = 0}^{\ell-1}{n-1 \choose i} \leq 2 \ell \Bigl(\frac{en}{\ell}\Bigr)^\ell \leq 10 n^\ell.
\]
where the first inequality uses the fact that ${n \choose \ell} \leq (en/\ell)^\ell$ and the second inequality uses that $2 \ell (e/\ell)^\ell \leq 10$ for all $\ell \geq 1$. 

%To establish the third inequality, note that if $\ell$ rows of $A\in \R^{n \times \ell}$ are orthogonal to a nonzero $v \in \R^\ell$, then those rows are linearly dependent.  With probability $1$, any subset of $\ell$ rows of $A$ are independent with probability 1, and thus
%\begin{align}
%\max_{v \in V, v \neq 0} |\{ i \mid \ai \cdot v = 0 \}| < \ell. \label{max-num-orthogonal-to-v}
%\end{align}
%
%To establish the second inequality, we note that because of \eqref{max-num-orthogonal-to-v}, $| \{ \diag(Av = 0) A \mid v \in \R^\ell\} |$ is at most the number of ways to select at most $\ell-1$ of the $n$ rows of $A$. 
%
%Thus, 
%\[
%| \{ \diag(Av = 0) A \mid v \in \R^\ell\} | \leq \sum_{i = 0}^{\ell-1}{n-1 \choose i} \leq \ell \Bigl(\frac{en}{\ell}\Bigr)^\ell \leq n^\ell.
%\] 
%where the last inequality uses that $\ell \geq 4$. 
\end{proof}

\begin{lemma} \label{lemma:num-apx-aox-bpx-box} 
%Let $A \in \R^{n_1 \times k}$ have i.i.d. $\calN(0, 1/n_1)$ entries.
%Let $B \in \R^{n_2 \times n_1}$ have i.i.d. $\calN(0, 1/n_2)$ entries. 
%Then, with probability 1,
%\begin{align}
%|\{ \Apx \mid x \neq 0 \} | &\leq 10 n_1^k \label{eqn-num-apx},\\
%%|\{ \Aox \mid x \neq 0 \} | &\leq n_1^k \label{eqn-num-aox}\\
%%\max_{x \neq 0} | \{i \mid \ai \cdot x = 0 \} | &< k \label{eqn-num-nnz-apx}\\
%|\{ \Bpx \mid x \neq 0 \}| &\leq 10^2 n_1^k n_2^k \label{eqn-num-bpx}.
%%|\{ \Boox \mid x \neq 0 \} | &\leq n_1^k n_2^k \label{eqn-num-boox}\\
%%\max_{x \neq 0} | \{i \mid b_i \cdot \Apx x = 0\}| &< k \label{eqn-num-nnz-bpx}
%\end{align}
Let $W_i \in \R^{n_i \times n_{i-1}}$ have i.i.d. $\calN(0, 1/n_i)$ entries for $i = 1,\ldots,d$.  Let $k = n_0$.   Then, with probability 1,
\begin{align}
| \{ \Wipx \mid x \neq 0\} | \leq 10^i n_1^k n_2^k \cdots n_i^k. %\label{eqn-num-wipx}
\end{align}

\end{lemma}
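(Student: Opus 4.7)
The plan is to prove this by induction on $i$, using Lemma \ref{lemma:combinatorial-bound-Apluszero} as the key ingredient at each step. The intuition is that the number of distinct matrices at layer $i$ is controlled by (a) the number of distinct matrices at layer $i-1$, and (b) the number of sign patterns that $W_i$ can produce when its input is constrained to lie in a $k$-dimensional subspace coming from the previous layers.

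Base case $i=1$: Here $W_{1,+,x} = \diag(W_1 x > 0) W_1$ with $x$ ranging over $\R^k \setminus \{0\}$. Applying Lemma \ref{lemma:combinatorial-bound-Apluszero} with $V = \R^k$ immediately yields $|\{W_{1,+,x} : x \neq 0\}| \leq 10 n_1^k$, with probability $1$.

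Inductive step: Suppose the bound holds up through layer $i-1$, i.e.\ the set $\mathcal{M}_{i-1} := \{W_{i-1,+,x} W_{i-2,+,x} \cdots W_{1,+,x} : x \neq 0\}$ has cardinality at most $10^{i-1} n_1^k \cdots n_{i-1}^k$. For each fixed $M \in \mathcal{M}_{i-1}$, let $V_M := \range(M) \subset \R^{n_{i-1}}$, which has dimension at most $k$. By definition,
\begin{align*}
W_{i,+,x} = \diag\bigl(W_i\, M_x\, x > 0\bigr) W_i, \qquad M_x := W_{i-1,+,x} \cdots W_{1,+,x},
\end{align*}
and $M_x x \in V_{M_x}$. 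Thus for each $M \in \mathcal{M}_{i-1}$, as $x$ ranges over the (finitely many) nonzero vectors producing that $M$, the matrix $W_{i,+,x}$ lies in the set $\{\diag(W_i v > 0) W_i : v \in V_M\}$. Applying Lemma \ref{lemma:combinatorial-bound-Apluszero} to $W_i$ and the subspace $V_M$ shows this set has cardinality at most $10 n_i^{\dim V_M} \leq 10 n_i^k$, again with probability $1$.

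Summing over all $M \in \mathcal{M}_{i-1}$, we obtain
\begin{align*}
|\{W_{i,+,x} : x \neq 0\}| \leq |\mathcal{M}_{i-1}| \cdot 10 n_i^k \leq 10^{i-1} n_1^k \cdots n_{i-1}^k \cdot 10 n_i^k = 10^i n_1^k \cdots n_i^k,
\end{align*}
closing the induction. The probability-$1$ bookkeeping is routine: there are only finitely many subspaces $V_M$ in play at each layer (one per element of $\mathcal{M}_{i-1}$), so the intersection of all the probability-$1$ events from the finitely many invocations of Lemma \ref{lemma:combinatorial-bound-Apluszero} (across all layers and all $V_M$) still has probability $1$. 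No step seems genuinely hard; the only point to be careful about is that the sign pattern at layer $i$ is determined by $W_i$ acting on $M_x x$, which lies in $V_{M_x}$ — so the subspace whose dimension controls the count is $V_{M_x}$ rather than all of $\R^{n_{i-1}}$, and this is exactly what keeps each layer's factor at $n_i^k$ rather than $n_i^{n_{i-1}}$.
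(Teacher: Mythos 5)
Your core idea is the same as the paper's: at each layer, $W_{i,+,x}$ is a row-selection of $W_i$ determined by the sign pattern of $W_i$ applied to a vector living in a $\le k$-dimensional subspace inherited from the earlier layers, so Lemma~\ref{lemma:combinatorial-bound-Apluszero} gives at most $10\,n_i^k$ choices per subspace, and you multiply the counts across layers. Two points deserve attention, though.

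First, your inductive invariant does not quite match what you prove. You state the hypothesis as a bound on $|\mathcal{M}_{i-1}|$ (the set of full products $W_{i-1,+,x}\cdots W_{1,+,x}$), but your inductive step concludes a bound on $|\{W_{i,+,x}\}|$ --- a different quantity. To run the induction as you set it up you also need $|\mathcal{M}_i|\le |\mathcal{M}_{i-1}|\cdot 10 n_i^k$, which follows by the same reasoning (each $M\in\mathcal{M}_{i-1}$ yields at most $10 n_i^k$ pairs $(W_{i,+,x},M)$ and hence at most that many products $W_{i,+,x}M$), but this step should be stated. The paper avoids this mismatch by inducting directly on $|\{W_{i-1,+,x}:x\neq 0\}|$, which is exactly the lemma's conclusion at the previous level: since $M_x x \in \range(W_{i-1,+,x})$, one may range over $\widehat{W}\in\{W_{i-1,+,x}\}$ and the subspaces $\range(\widehat{W})$, rather than over $\mathcal{M}_{i-1}$ and $\range(M_x)$. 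Your variant is valid but carries slightly more than it needs to.

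Second, the probability-$1$ bookkeeping is glossed over in a way that would not survive a careful reading. The subspaces $V_M$ are random --- they are functions of $W_1,\dots,W_{i-1}$ --- so there is no fixed finite collection of $V_M$'s over which to intersect probability-$1$ events in the unconditional probability space. The correct argument (as the paper carries out) conditions on $W_1,\dots,W_{i-1}$: on the probability-$1$ event where the previous layers' counts are controlled, the relevant subspaces become deterministic and there are finitely many of them, at which point Lemma~\ref{lemma:combinatorial-bound-Apluszero} applied to $W_i$ gives a conditional probability-$1$ statement, which is then integrated. Making this conditioning explicit is what turns your ``routine bookkeeping'' into a rigorous argument.
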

\begin{proof}
%Recall that $\Apx = \diag(A x > 0) A$ and $\Bpx = \diag(B \Apx x > 0) B$.  
Recall that $\Wipxn{1} = \diag(W_1 x>0) W_1$ and $\Wipx = \diag(W_i \Wipxn{i-1} \cdots \Wipxn{1} x>0) W_i$. 
The case of $i=1$ holds with probability 1 by applying Lemma \ref{lemma:combinatorial-bound-Apluszero} with $V = \R^k$. 

Next, we establish the $i=2$ case. Let $\calWp = \{ \Wpx \mid x \neq 0\}$.  Note that 
\[
\forall x \neq 0, \quad \Wipxn{2} \in \bigcup_{\What \in \calWp} \{ \diag( W_2 v>0) W_2 \mid v \in \range \What \}.
\]
Let the random variable $\XWhatWtwo = | \{ \diag(W_2 v>0)W_2 \mid v \in \range \What \} |$.  Note that by Lemma \ref{lemma:combinatorial-bound-Apluszero}, for any fixed $\What$, $\PP_{W_2}(\XWhatWtwo \leq  10 n_2^k) = 1$.  Hence, conditioned on the probability 1 event $E:=\{|\calWp|\leq 10 n_1^k\}$, we have  $\PP_{W_2}(\sum_{\What \in \calWp} \XWhatWtwo \leq 10^2 n_1^k n_2^k) = 1$. 
Thus, % by independence of $W_1$ and $W_2$,
\begin{align*}
\PP_{W_1,W_2} \Biggl( \sum_{\What \in \calWp} \XWhatWtwo \leq 10^2 n_1^k n_2^k \Biggr) 
& = \int_{\Omega_{1}} \PP_{W_2}\Biggl( \sum_{\What \in \calWp} \XWhatWtwo \leq 10^2 n_1^k n_2^k \Biggr) d\mu_{1}\\
& = \int_{E} \PP_{W_2}\Biggl( \sum_{\What \in \calWp} \XWhatWtwo \leq 10^2 n_1^k n_2^k \Biggr) d\mu_{1} \\
 &= 1,
\end{align*}
where $\Omega_{1}$ and $\mu_{1}$ are the state space and probability measure for the random variable $W_1$.  Hence, $|\{\Wipxn{2} \mid x \neq 0\}| \leq 10^2 n_1^k n_2^k$ with probability 1.

The case of larger $i$ follows by repeating the logic above.
%Equation \eqref{eqn-num-wipx} follows from repeating the above logic $i$ times.

%The same argument establishes \eqref{eqn-num-boox} with probability 1.

%Finally, we establish \eqref{eqn-num-nnz-bpx}.  Note that for all $A$, the set $\calAp$ is finite, and
%\begin{align}
%\max_{x \neq 0} | \{i \mid b_i \cdot \Apx x = 0\}| &\leq
%\max_{\Ahat \in \calAp} \max_{v \in \range(\Ahat), v \neq 0} | \{i \mid b_i \cdot v = 0\}|. \label{max-CalAp-nnz-Bpx}
%\end{align}
%By Lemma \ref{lemma:combinatorial-bound-Apluszero}, for any fixed $\Ahat$, $\PP_B(\max_{v \in \range(\Ahat), v \neq 0} | \{i \mid b_i \cdot v = 0\}| < k) = 1$.  Hence, because $\calAp$ is finite for all $A$, we have 
%$\PP_B(\max_{\Ahat \in \calAp} \max_{v \in \range(\Ahat), v \neq 0} | \{i \mid b_i \cdot v = 0\}| \leq k) = 1$.  Integrating over the probability space of $A$, we conclude that \eqref{eqn-num-nnz-bpx} holds with probability 1.
\end{proof}

We can now show concentration of $\vxxo$ to its expectation with respect to $C$.

\begin{lemma}\label{lemma:ABCCBA-ABBA-xy} 
Fix $0 < \eps < 1$.  
% Let $A \in \R^{n_1 \times k}$ have i.i.d. $\calN(0,1/n_1)$ entries. Let $B \in \R^{n_2 \times n_1}$ have i.i.d. $\calN(0, 1/n_2)$ entries.   If $m > c k \log(n_1 n_2)$, then with probability at least $1 - e^{-\gamma m}$, 
%\begin{align}
%\forall x, y \in \R^k, \quad \| \Apxt \Bpxt \CtC \Bpy \Apy - \Apxt \Bpxt \Bpy \Apy \| \leq \eps \|\Apx\| \| \Bpx \| \|\Apy\| \| \Bpy\|. \label{eqn:ABCCBAxy}
%\end{align}
%
Let $W_i \in \R^{n_i \times n_{i-1}}$, have i.i.d. $\calN(0, 1/n_i)$ entries for $i = 1,\ldots,d$.  Let $A \in \R^{m \times n_d}$ have i.i.d. $\calN(0, 1/m)$ entries that are independent from all $W_i$.  If $m > c d k \log(n_1 n_2 \cdots n_d)$, then with probability at least $1 -  e^{-\gamma m}$, 
\begin{align}
\forall x, y \in \R^k, \quad \| (\PiWd)^t \AtA (\PiWdy) -(\PiWd)^t (\PiWdy)  \| \leq \eps \Pioned \| \Wipx\| \|\Wipy\| . \label{eqn:ABCCBAxy-Wcase}
\end{align}

Here, $c$ and $\gamma^{-1}$ are constants that depend only polynomially on $\eps^{-1}$.
\end{lemma}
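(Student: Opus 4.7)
The plan is to combine the subspace counting result of Lemma \ref{lemma:num-apx-aox-bpx-box} with the Baraniuk-type concentration result of Lemma \ref{lemma:rip-baraniuk}, exploiting the fact that as $x$ ranges over $\R^k \setminus \{0\}$, the matrix $\PiWd$ only takes finitely many values and hence the set $\bigcup_{x \neq 0} \range(\PiWd)$ is a finite union of subspaces of dimension at most $k$. First I would condition on the high-probability event on which the conclusion of Lemma \ref{lemma:num-apx-aox-bpx-box} holds, so that the collection $\mathcal{V} := \{ \range(\PiWd) : x \neq 0 \}$ has cardinality at most $M := 10^d \prod_{i=1}^d n_i^k$, and similarly for $\mathcal{V}' := \{ \range(\PiWdy) : y \neq 0 \}$, independently.

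Next, conditionally on $\mathcal{V}, \mathcal{V}'$ (which are functions of $W_1,\ldots,W_d$, hence independent of $A$), I would invoke the inner-product form of Lemma \ref{lemma:rip-baraniuk} with the collections $\{V_i\}_{i=1}^M$ and $\{W_j\}_{j=1}^M$ taken to enumerate $\mathcal{V}$ and $\mathcal{V}'$. This yields, with probability at least $1 - M^2 (c_1/\eps)^{2k} e^{-\gamma_1 \eps m}$ over the draw of $A$, the bound
\[
| \langle A u, A v \rangle - \langle u, v \rangle | \leq \eps \|u\|_2 \|v\|_2 \quad \forall u \in \range(\PiWd),\ v \in \range(\PiWdy),\ \forall x,y \neq 0.
\]
To convert this vector-level inequality to the required spectral-norm bound on the matrix difference, I would apply it at $u = (\PiWd) x'$ and $v = (\PiWdy) y'$ for arbitrary $x', y' \in \spherek$, which gives
\[
\Bigl|\bigl\langle x', \bigl[(\PiWd)^t A^t A (\PiWdy) - (\PiWd)^t (\PiWdy)\bigr] y' \bigr\rangle\Bigr| \leq \eps \|(\PiWd) x'\|_2 \|(\PiWdy) y'\|_2.
\]
Submultiplicativity bounds the right side by $\eps \prod_{i=1}^d \|\Wipx\| \|\Wipy\|$, and taking a supremum over unit $x', y'$ converts the scalar bound into the claimed spectral-norm inequality \eqref{eqn:ABCCBAxy-Wcase}, uniformly in $x, y$. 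The cases $x=0$ or $y=0$ are trivial.

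Finally I would verify the probability bound. The total failure probability is bounded by
\[
M^2 (c_1/\eps)^{2k} e^{-\gamma_1 \eps m} \leq \exp\bigl( 2 d \log 10 + 2 k \textstyle\sum_{i=1}^d \log n_i + 2 k \log(c_1/\eps) - \gamma_1 \eps m \bigr),
\]
and this is at most $e^{-\gamma m}$ provided $m \geq c d k \log(n_1 n_2 \cdots n_d)$ for a constant $c$ depending polynomially on $\eps^{-1}$, absorbing the event from Lemma \ref{lemma:num-apx-aox-bpx-box} which has probability $1$. The main obstacle I anticipate is bookkeeping the independence: $\mathcal{V}$ and $\mathcal{V}'$ depend on the weight matrices, so one must use Fubini (condition on $\{W_i\}$, apply Lemma \ref{lemma:rip-baraniuk} over $A$, then integrate) to legitimately apply the Baraniuk bound to the \emph{data-dependent} collection of subspaces; beyond that, the argument is essentially a union bound with exponent slack.
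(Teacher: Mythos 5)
Your proposal follows the paper's proof essentially line for line: condition on the weight matrices, invoke the union form of Lemma \ref{lemma:rip-baraniuk} over the finitely many range spaces $\range(\PiWd)$, pass from the vector-level inner-product bound to the spectral-norm bound by taking a supremum over $x', y' \in S^{k-1}$ and using submultiplicativity, and integrate out the $\{W_i\}$ via Fubini to handle the data-dependent collection of subspaces.

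The one place you are slightly imprecise is the subspace count. You assert that $|\{\range(\PiWd) : x \neq 0\}| \leq 10^d \prod_{i=1}^d n_i^k$ and attribute this to Lemma \ref{lemma:num-apx-aox-bpx-box}, but that lemma only controls $|\{\Wipx\}|$ for each fixed $i$ separately (at $\leq 10^i n_1^k \cdots n_i^k$), not the number of distinct \emph{products}. The paper takes the blunt route and multiplies the per-layer counts, obtaining $|\{\PiWd\}| \leq 10^{d^2}(n_1^d n_2^{d-1} \cdots n_d)^k$ for the union-bound step. Your tighter $10^d \prod n_i^k$ is in fact correct — it follows by the same telescoping argument that proves Lemma \ref{lemma:num-apx-aox-bpx-box}, counting distinct tuples $(W_{1,+,x},\ldots,W_{d,+,x})$ layer by layer — but it is not a direct citation of that lemma as stated; you would have to rerun its argument for tuples. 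Since $\log$ of either count is $O(d^2 + dk\sum_i\log n_i)$, which is absorbed by the hypothesis $m \geq cdk\log\prod_i n_i$, the discrepancy does not affect your conclusion.
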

\begin{proof}
For pedagogical purposes, we first establish the lemma in the $d=2$ case.  It suffices to show that $\forall x, y,w, v \in \spherek$,
\begin{align*}
 | \langle A \Wipxn{2} \Wipxn{1} w, A \Wipyn{2} \Wipyn{1} v \rangle &- \langle \Wipxn{2} \Wipxn{1}  w , \Wipyn{2} \Wipyn{1}  v\rangle | \\&\leq \eps \|\Wipxn{1}\| \| \Wipxn{2} \| \|\Wipyn{1}\| \| \Wipyn{2}\|.
\end{align*}

In order to apply Lemma \ref{lemma:rip-baraniuk}, we will show that $\{ \Wipxn{2} \Wipxn{1} w \mid x, w \in \spherek \}$ is a subset of a union of at most $10^3 (n_1^2 n_2)^k$ subspaces of dimension at most $k$. 
For fixed \red{$W_1,W_2$}, let $\calAp = \{\Wipxn{1} \mid x \neq 0\}$ and $\calBp = \{\Wipxn{2} \mid x \neq 0\}$.  
By Lemma \ref{lemma:num-apx-aox-bpx-box}, there exists a probability 1 event, E, over \red{$(W_1,W_2)$} on which $|\calAp| \leq 10 n_1^k$ and $|\calBp| \leq 10^2 n_1^k n_2^k$. 
On $E$, 
\[
|\{ \Wipxn{2} \Wipxn{1} \mid x \neq 0\}| \leq 10^3 (n_1^2 n_2)^k.
\]
 Note that $\dim \range(\Wipxn{2} \Wipxn{1}) \leq k$ for all $x \neq 0$.  Hence $\{ \Wipxn{2} \Wipxn{1} w \mid x, w \in \spherek \} \subset V$, where $V$ is a union of at most $10^3 (n_1^2 n_2)^k$ subspaces of dimensionality at most $k$. 

By applying \red{the second half of} Lemma \ref{lemma:rip-baraniuk} to the sets $V$ and $V$, we get that for fixed \red{$W_1,W_2$}, 
\begin{align}
| \langle A \Wipxn{2} \Wipxn{1} w, A \Wipyn{2} \Wipyn{1} v \rangle &- \langle \Wipxn{2} \Wipxn{1} w, \Wipyn{2} \Wipyn{1} v \rangle |  \notag
\\&\leq \eps \|\Wipxn{2} \Wipxn{1} w\|_2 \| \Wipyn{2} \Wipyn{1} v \|_2, \quad \forall x,y, w, v \in \spherek \label{CawCavAwAv}
\end{align}
with probability at least $1 - 10^3 (c_1 n_1^2 n_2/\eps)^{2k} e^{-\gamma_1 \eps m} \geq 1 - e^{-\gamma_2 m}$, provided $m \geq \ctilde k \log(n_1 n_2)$, for universal constants $c_1, \gamma_1$ and for some $\gamma_2 = \frac{\gamma_1 \eps}{2}, \ctilde =\Omega(\eps^{-1} \log \eps^{-1})$. 

Integrating over the probability space of \red{$(W_1,W_2)$, independence of $A$ and $(W_1,W_2)$} implies that  \eqref{CawCavAwAv} holds for random \red{$(W_1,W_2)$} with the same probability bound.
Continuing from \eqref{CawCavAwAv}, we have
\begin{align*}
| \langle A \Wipxn{2} \Wipxn{1} w, A \Wipyn{2} \Wipyn{1} v \rangle &- \langle \Wipxn{2} \Wipxn{1} w, \Wipyn{2} \Wipyn{1} v \rangle | \\&\leq \eps \|\Wipxn{1}\|_2 \|\Wipyn{1}\|_2 \|\Wipxn{2}\|_2 \|\Wipyn{2}\|_2 
\end{align*}
$\forall x,y, w, v \in \spherek$ with probability at least $1 -  e^{-\gamma m}$. for some $\gamma>0$. 

For the case of $d \geq 2$, the lemma follows similarly.  We have 
\[
|\{ \PiWd x \mid x \neq 0\}| \leq 10^{(d^2)} (n_1^d n_2^{d-1} \cdots n_{d-1}^2 n_d)^k
\] on the probability 1 event.  The analogous bound to \eqref{CawCavAwAv} holds with probability at least $1 - 10^{(d^2)}(c_1 n_1^d n_2^{d-1} \cdots n_{d-1}^2 n_d /\eps)^{2k} e^{-\gamma_1 \eps m} \geq 1 - e^{-\gamma_2 m}$, provided $m \geq \ctilde d k \log(n_1 n_2 \cdots n_d)$, for some $\gamma_2 = \frac{\gamma_1 \eps}{2}, \ctilde = \Omega(\eps^{-1} \log \red{\eps^{-1}})$.

\end{proof}

\subsection{Proof of Theorem \ref{thm-multi-layer}}

Theorem \ref{thm-multi-layer} can be proved by combining Lemmas \ref{lemma:isometry-apxtapx-uniform} and \ref{lemma:ABCCBA-ABBA-xy}.

\subsubsection*{Acknowledgment}
PH is partially supported by NSF Grant DMS-1464525.

%Use unnumbered third level headings for the acknowledgments. All
%acknowledgments go at the end of the paper. Do not include
%acknowledgments in the anonymized submission, only in the final paper.

%\section*{References}

\bibliographystyle{plain}
\bibliography{refs}

\begin{thebibliography}{10}

\bibitem{adler2018learned}
Jonas Adler and Ozan {\"O}ktem.
\newblock Learned primal-dual reconstruction.
\newblock {\em IEEE Transactions on Medical Imaging}, 2018.

\bibitem{BDC}
Ali Ahmed, Benjamin Recht, and Justin~K. Romberg.
\newblock Blind deconvolution using convex programming.
\newblock {\em CoRR}, abs/1211.5608, 2012.

\bibitem{akhtar2018threat}
Naveed Akhtar and Ajmal Mian.
\newblock Threat of adversarial attacks on deep learning in computer vision: A
  survey.
\newblock {\em arXiv preprint arXiv:1801.00553}, 2018.

\bibitem{Sanjeev}
Sanjeev Arora, Yingyu Liang, and Tengyu Ma.
\newblock Why are deep nets reversible: {A} simple theory, with implications
  for training.
\newblock {\em CoRR}, abs/1511.05653, 2015.

\bibitem{Burer1}
Bandeira, Boumal, and Voroninski.
\newblock On the low-rank approach for semidefinite programs arising in
  synchronization and community detection.
\newblock {\em JMLR}, 49:1--22, 2016.

\bibitem{barak2016nearly}
Boaz Barak, Samuel~B Hopkins, Jonathan Kelner, Pravesh~K Kothari, Ankur Moitra,
  and Aaron Potechin.
\newblock A nearly tight sum-of-squares lower bound for the planted clique
  problem.
\newblock {\em arXiv preprint arXiv:1604.03084}, 2016.

\bibitem{Baraniuk2008}
Richard Baraniuk, Mark Davenport, Ronald DeVore, and Michael Wakin.
\newblock A simple proof of the restricted isometry property for random
  matrices.
\newblock {\em Constructive Approximation}, 28(3):253--263, 2008.

\bibitem{Price}
Ashish Bora, Ajil Jalal, Eric Price, and Alexandros~G. Dimakis.
\newblock Compressed sensing using generative models.
\newblock {\em arXiv:1703.03208}, 2017.

\bibitem{CESV2011}
E.~Cand\`{e}s, Y.~Eldar, T.~Strohmer, and V.~Voroninski.
\newblock Phase retrieval via matrix completion.
\newblock {\em SIAM J. Imaging Sci.}, 6(1):199--225, 2013.

\bibitem{wirtinger}
Emmanuel~J Candes, Xiaodong Li, and Mahdi Soltanolkotabi.
\newblock Phase retrieval via wirtinger flow: Theory and algorithms.
\newblock {\em Information Theory, IEEE Transactions on}, 61(4):1985--2007,
  2015.

\bibitem{MC}
Emmanuel~J. Cand{\`{e}}s and Benjamin Recht.
\newblock Exact matrix completion via convex optimization.
\newblock {\em CoRR}, abs/0805.4471, 2008.

\bibitem{CRT2005}
Emmanuel~J. Cand\`{e}s, Justin~K. Romberg, and Terence Tao.
\newblock Stable signal recovery from incomplete and inaccurate measurements.
\newblock {\em Communications on Pure and Applied Mathematics},
  59(8):1207--1223, 2006.

\bibitem{CSV2013}
Emmanuel~J Cand\`{e}s, Thomas Strohmer, and Vladislav Voroninski.
\newblock Phaselift: Exact and stable signal recovery from magnitude
  measurements via convex programming.
\newblock {\em Comm. Pure Appl. Math.}, 66(8):1241--1274, 2013.

\bibitem{chen2015solving}
Yuxin Chen and Emmanuel Candes.
\newblock Solving random quadratic systems of equations is nearly as easy as
  solving linear systems.
\newblock In {\em Advances in Neural Information Processing Systems}, pages
  739--747, 2015.

\bibitem{chen2016projected}
Yuxin Chen and Emmanuel Candes.
\newblock The projected power method: An efficient algorithm for joint
  alignment from pairwise differences.
\newblock {\em arXiv preprint arXiv:1609.05820}, 2016.

\bibitem{dai2017scan}
Wei Dai, Joseph Doyle, Xiaodan Liang, Hao Zhang, Nanqing Dong, Yuan Li, and
  Eric~P Xing.
\newblock Scan: Structure correcting adversarial network for chest x-rays organ
  segmentation.
\newblock {\em arXiv preprint arXiv:1703.08770}, 2017.

\bibitem{dar2017transfer}
Salman Ul~Hassan Dar and Tolga {\c{C}}ukur.
\newblock A transfer-learning approach for accelerated mri using deep neural
  networks.
\newblock {\em arXiv preprint arXiv:1710.02615}, 2017.

\bibitem{daubechies1992ten}
Ingrid Daubechies.
\newblock {\em Ten lectures on wavelets}, volume~61.
\newblock Siam, 1992.

\bibitem{Donoho}
David Donoho.
\newblock For most large underdetermined systems of linear equations the
  minimal l1-norm solution is also the sparsest solution.
\newblock {\em Communications on Pure and Applied Mathematics}, 59(6), 2006.

\bibitem{donoho2009counting}
David Donoho and Jared Tanner.
\newblock Counting faces of randomly projected polytopes when the projection
  radically lowers dimension.
\newblock {\em Journal of the American Mathematical Society}, 22(1):1--53,
  2009.

\bibitem{giryes2016deep}
Raja Giryes, Guillermo Sapiro, and Alexander~M Bronstein.
\newblock Deep neural networks with random gaussian weights: a universal
  classification strategy?
\newblock {\em IEEE Trans. Signal Processing}, 64(13):3444--3457, 2016.

\bibitem{Goodfellow-et-al-2016}
Ian Goodfellow, Yoshua Bengio, and Aaron Courville.
\newblock {\em Deep Learning}.
\newblock MIT Press, 2016.
\newblock \url{http://www.deeplearningbook.org}.

\bibitem{GAN}
Ian~J. Goodfellow, Jean Pouget-Abadie, Bing Mirza, Mehdi;~Xu, David
  Warde-Farley, Sherjil Ozair, Aaron Courville, and Yoshua Bengio.
\newblock Generative adversarial networks.
\newblock {\em arXiv:1406.2661}, 2014.

\bibitem{hammernik2017learning}
Kerstin Hammernik, Teresa Klatzer, Erich Kobler, Michael~P Recht, Daniel~K
  Sodickson, Thomas Pock, and Florian Knoll.
\newblock Learning a variational network for reconstruction of accelerated mri
  data.
\newblock {\em Magnetic resonance in medicine}, 2017.

\bibitem{hong2017generative}
Yongjun Hong, Uiwon Hwang, Jaeyoon Yoo, and Sungroh Yoon.
\newblock How generative adversarial nets and its variants work: An overview of
  gan.
\newblock {\em arXiv preprint arXiv:1711.05914}, 2017.

\bibitem{huang2017blind}
Wen Huang and Paul Hand.
\newblock Blind deconvolution by a steepest descent algorithm on a quotient
  manifold.
\newblock {\em arXiv preprint arXiv:1710.03309}, 2017.

\bibitem{ilyas2017robust}
Andrew Ilyas, Ajil Jalal, Eirini Asteri, Constantinos Daskalakis, and
  Alexandros~G Dimakis.
\newblock The robust manifold defense: Adversarial training using generative
  models.
\newblock {\em arXiv preprint arXiv:1712.09196}, 2017.

\bibitem{johnson2016perceptual}
Justin Johnson, Alexandre Alahi, and Li~Fei-Fei.
\newblock Perceptual losses for real-time style transfer and super-resolution.
\newblock In {\em European Conference on Computer Vision}, pages 694--711.
  Springer, 2016.

\bibitem{karras2017progressive}
Tero Karras, Timo Aila, Samuli Laine, and Jaakko Lehtinen.
\newblock Progressive growing of gans for improved quality, stability, and
  variation.
\newblock {\em arXiv preprint arXiv:1710.10196}, 2017.

\bibitem{kelly2017deep}
Brendan Kelly, Thomas~P Matthews, and Mark~A Anastasio.
\newblock Deep learning-guided image reconstruction from incomplete data.
\newblock {\em arXiv preprint arXiv:1709.00584}, 2017.

\bibitem{kingma2013auto}
Diederik~P Kingma and Max Welling.
\newblock Auto-encoding variational bayes.
\newblock {\em arXiv preprint arXiv:1312.6114}, 2013.

\bibitem{kohl2017adversarial}
Simon Kohl, David Bonekamp, Heinz-Peter Schlemmer, Kaneschka Yaqubi, Markus
  Hohenfellner, Boris Hadaschik, Jan-Philipp Radtke, and Klaus Maier-Hein.
\newblock Adversarial networks for the detection of aggressive prostate cancer.
\newblock {\em arXiv preprint arXiv:1702.08014}, 2017.

\bibitem{ledig2016photo}
Christian Ledig, Lucas Theis, Ferenc Husz{\'a}r, Jose Caballero, Andrew
  Cunningham, Alejandro Acosta, Andrew Aitken, Alykhan Tejani, Johannes Totz,
  Zehan Wang, et~al.
\newblock Photo-realistic single image super-resolution using a generative
  adversarial network.
\newblock {\em arXiv preprint}, 2016.

\bibitem{li2016rapid}
Xiaodong Li, Shuyang Ling, Thomas Strohmer, and Ke~Wei.
\newblock Rapid, robust, and reliable blind deconvolution via nonconvex
  optimization.
\newblock {\em arXiv preprint arXiv:1606.04933}, 2016.

\bibitem{li2013sparse}
Xiaodong Li and Vladislav Voroninski.
\newblock Sparse signal recovery from quadratic measurements via convex
  programming.
\newblock {\em SIAM Journal on Mathematical Analysis}, 45(5):3019--3033, 2013.

\bibitem{lipton2017precise}
Zachary~C Lipton and Subarna Tripathi.
\newblock Precise recovery of latent vectors from generative adversarial
  networks.
\newblock {\em arXiv preprint arXiv:1702.04782}, 2017.

\bibitem{liu2017high}
Yang Liu, Feng Li, Lei Xin, Jie Fu, and Puming Huang.
\newblock High efficient optical remote sensing images acquisition for
  nano-satellite: reconstruction algorithms.
\newblock In {\em Image and Signal Processing for Remote Sensing XXIII}, volume
  10427, page 104271Y. International Society for Optics and Photonics, 2017.

\bibitem{lohit2017convolutional}
Suhas Lohit, Kuldeep Kulkarni, Ronan Kerviche, Pavan Turaga, and Amit Ashok.
\newblock Convolutional neural networks for non-iterative reconstruction of
  compressively sensed images.
\newblock {\em arXiv preprint arXiv:1708.04669}, 2017.

\bibitem{lucas2018using}
Alice Lucas, Michael Iliadis, Rafael Molina, and Aggelos~K Katsaggelos.
\newblock Using deep neural networks for inverse problems in imaging: Beyond
  analytical methods.
\newblock {\em IEEE Signal Processing Magazine}, 35(1):20--36, 2018.

\bibitem{MRM:MRM21391}
Michael Lustig, David Donoho, and John~M. Pauly.
\newblock Sparse mri: The application of compressed sensing for rapid mr
  imaging.
\newblock {\em Magnetic Resonance in Medicine}, 58(6):1182--1195, 2007.

\bibitem{ma2017implicit}
Cong Ma, Kaizheng Wang, Yuejie Chi, and Yuxin Chen.
\newblock Implicit regularization in nonconvex statistical estimation: Gradient
  descent converges linearly for phase retrieval, matrix completion and blind
  deconvolution.
\newblock {\em arXiv preprint arXiv:1711.10467}, 2017.

\bibitem{mahapatra2017retinal}
Dwarikanath Mahapatra.
\newblock Retinal vasculature segmentation using local saliency maps and
  generative adversarial networks for image super resolution.
\newblock {\em arXiv preprint arXiv:1710.04783}, 2017.

\bibitem{mahmood2017unsupervised}
Faisal Mahmood, Richard Chen, and Nicholas~J Durr.
\newblock Unsupervised reverse domain adaption for synthetic medical images via
  adversarial training.
\newblock {\em arXiv preprint arXiv:1711.06606}, 2017.

\bibitem{Mallat}
S.~Mallat.
\newblock Group invariant scattering.
\newblock {\em Comm. Pure Appl. Math.}, 65(10):1331–1398, 2012.

\bibitem{mao2016image}
Xiao-Jiao Mao, Chunhua Shen, and Yu-Bin Yang.
\newblock Image restoration using convolutional auto-encoders with symmetric
  skip connections. arxiv preprint.
\newblock {\em arXiv preprint arXiv:1606.08921}, 2, 2016.

\bibitem{mardani2017deep}
Morteza Mardani, Enhao Gong, Joseph~Y Cheng, Shreyas Vasanawala, Greg
  Zaharchuk, Marcus Alley, Neil Thakur, Song Han, William Dally, John~M Pauly,
  et~al.
\newblock Deep generative adversarial networks for compressed sensing automates
  mri.
\newblock {\em arXiv preprint arXiv:1706.00051}, 2017.

\bibitem{mardani2017recurrent}
Morteza Mardani, Hatef Monajemi, Vardan Papyan, Shreyas Vasanawala, David
  Donoho, and John Pauly.
\newblock Recurrent generative adversarial networks for proximal learning and
  automated compressive image recovery.
\newblock {\em arXiv preprint arXiv:1711.10046}, 2017.

\bibitem{maunu2017well}
Tyler Maunu, Teng Zhang, and Gilad Lerman.
\newblock A well-tempered landscape for non-convex robust subspace recovery.
\newblock {\em arXiv preprint arXiv:1706.03896}, 2017.

\bibitem{moeskops2017adversarial}
Pim Moeskops, Mitko Veta, Maxime~W Lafarge, Koen~AJ Eppenhof, and Josien~PW
  Pluim.
\newblock Adversarial training and dilated convolutions for brain mri
  segmentation.
\newblock In {\em Deep Learning in Medical Image Analysis and Multimodal
  Learning for Clinical Decision Support}, pages 56--64. Springer, 2017.

\bibitem{mousavi2017learning}
Ali Mousavi and Richard~G Baraniuk.
\newblock Learning to invert: Signal recovery via deep convolutional networks.
\newblock In {\em Acoustics, Speech and Signal Processing (ICASSP), 2017 IEEE
  International Conference on}, pages 2272--2276. IEEE, 2017.

\bibitem{mousavi2015deep}
Ali Mousavi, Ankit~B Patel, and Richard~G Baraniuk.
\newblock A deep learning approach to structured signal recovery.
\newblock In {\em Communication, Control, and Computing (Allerton), 2015 53rd
  Annual Allerton Conference on}, pages 1336--1343. IEEE, 2015.

\bibitem{Murty1987}
Katta~G. Murty and Santosh~N. Kabadi.
\newblock Some np-complete problems in quadratic and nonlinear programming.
\newblock {\em Mathematical Programming}, 39(2):117--129, Jun 1987.

\bibitem{nie2017medical}
Dong Nie, Roger Trullo, Jun Lian, Caroline Petitjean, Su~Ruan, Qian Wang, and
  Dinggang Shen.
\newblock Medical image synthesis with context-aware generative adversarial
  networks.
\newblock In {\em International Conference on Medical Image Computing and
  Computer-Assisted Intervention}, pages 417--425. Springer, 2017.

\bibitem{oord2016pixel}
Aaron van~den Oord, Nal Kalchbrenner, and Koray Kavukcuoglu.
\newblock Pixel recurrent neural networks.
\newblock {\em arXiv preprint arXiv:1601.06759}, 2016.

\bibitem{quan2017compressed}
Tran~Minh Quan, Thanh Nguyen-Duc, and Won-Ki Jeong.
\newblock Compressed sensing mri reconstruction with cyclic loss in generative
  adversarial networks.
\newblock {\em arXiv preprint arXiv:1709.00753}, 2017.

\bibitem{radford2015unsupervised}
Alec Radford, Luke Metz, and Soumith Chintala.
\newblock Unsupervised representation learning with deep convolutional
  generative adversarial networks.
\newblock {\em arXiv preprint arXiv:1511.06434}, 2015.

\bibitem{rezende2014stochastic}
Danilo~Jimenez Rezende, Shakir Mohamed, and Daan Wierstra.
\newblock Stochastic backpropagation and approximate inference in deep
  generative models.
\newblock {\em arXiv preprint arXiv:1401.4082}, 2014.

\bibitem{rippel2017real}
Oren Rippel and Lubomir Bourdev.
\newblock Real-time adaptive image compression.
\newblock {\em arXiv preprint arXiv:1705.05823}, 2017.

\bibitem{rivenson2017deep}
Yair Rivenson, Zolt{\'a}n G{\"o}r{\"o}cs, Harun G{\"u}naydin, Yibo Zhang,
  Hongda Wang, and Aydogan Ozcan.
\newblock Deep learning microscopy.
\newblock {\em Optica}, 4(11):1437--1443, 2017.

\bibitem{sonderby2016amortised}
Casper~Kaae S{\o}nderby, Jose Caballero, Lucas Theis, Wenzhe Shi, and Ferenc
  Husz{\'a}r.
\newblock Amortised map inference for image super-resolution.
\newblock {\em arXiv preprint arXiv:1610.04490}, 2016.

\bibitem{song2017pixeldefend}
Yang Song, Taesup Kim, Sebastian Nowozin, Stefano Ermon, and Nate Kushman.
\newblock Pixeldefend: Leveraging generative models to understand and defend
  against adversarial examples.
\newblock {\em arXiv preprint arXiv:1710.10766}, 2017.

\bibitem{Simplicity}
Jost~Tobias Springenberg, Alexey Dosovitskiy, Thomas Brox, and Martin~A.
  Riedmiller.
\newblock Striving for simplicity: The all convolutional net.
\newblock {\em CoRR}, abs/1412.6806, 2014.

\bibitem{WrightPR}
Ju~Sun, Qing Qu, and John Wright.
\newblock A geometric analysis of phase retrieval.
\newblock {\em CoRR}, abs/1602.06664, 2016.

\bibitem{szegedy2013intriguing}
Christian Szegedy, Wojciech Zaremba, Ilya Sutskever, Joan Bruna, Dumitru Erhan,
  Ian Goodfellow, and Rob Fergus.
\newblock Intriguing properties of neural networks.
\newblock {\em arXiv preprint arXiv:1312.6199}, 2013.

\bibitem{velankar2009pdbe}
Sameer Velankar, Christoph Best, B~Beuth, CH~Boutselakis, N~Cobley, AW~Sousa
  Da~Silva, Dimitris Dimitropoulos, Adel Golovin, Miriam Hirshberg, Melford
  John, et~al.
\newblock Pdbe: protein data bank in europe.
\newblock {\em Nucleic acids research}, 38(suppl\_1):D308--D317, 2009.

\bibitem{V2012}
R.~Vershynin.
\newblock Introduction to the non-asymptotic analysis of random matrices.
\newblock In Y.C. Eldar and G.~Kutyniok, editors, {\em Compressed Sensing:
  Theory and Applications}. Cambridge University Press, 2012.

\bibitem{wendel1962problem}
James~G Wendel.
\newblock A problem in geometric probability.
\newblock {\em Math. Scand}, 11:109--111, 1962.

\bibitem{wolterink2017generative}
Jelmer~M Wolterink, Tim Leiner, Max~A Viergever, and Ivana I{\v{s}}gum.
\newblock Generative adversarial networks for noise reduction in low-dose ct.
\newblock {\em IEEE transactions on medical imaging}, 36(12):2536--2545, 2017.

\bibitem{xue2017segan}
Yuan Xue, Tao Xu, Han Zhang, Rodney Long, and Xiaolei Huang.
\newblock Segan: Adversarial network with multi-scale $ l\_1 $ loss for medical
  image segmentation.
\newblock {\em arXiv preprint arXiv:1706.01805}, 2017.

\bibitem{yang2017dagan}
Guang Yang, Simiao Yu, Hao Dong, Greg Slabaugh, Pier~Luigi Dragotti, Xujiong
  Ye, Fangde Liu, Simon Arridge, Jennifer Keegan, Yike Guo, et~al.
\newblock Dagan: Deep de-aliasing generative adversarial networks for fast
  compressed sensing mri reconstruction.
\newblock {\em IEEE Transactions on Medical Imaging}, 2017.

\bibitem{yeh2017semantic}
Raymond~A Yeh, Chen Chen, Teck~Yian Lim, Alexander~G Schwing, Mark
  Hasegawa-Johnson, and Minh~N Do.
\newblock Semantic image inpainting with deep generative models.
\newblock In {\em Proceedings of the IEEE Conference on Computer Vision and
  Pattern Recognition}, pages 5485--5493, 2017.

\bibitem{yuan2017adversarial}
Xiaoyong Yuan, Pan He, Qile Zhu, Rajendra~Rana Bhat, and Xiaolin Li.
\newblock Adversarial examples: Attacks and defenses for deep learning.
\newblock {\em arXiv preprint arXiv:1712.07107}, 2017.

\bibitem{zhu2016generative}
Jun-Yan Zhu, Philipp Kr{\"a}henb{\"u}hl, Eli Shechtman, and Alexei~A Efros.
\newblock Generative visual manipulation on the natural image manifold.
\newblock In {\em European Conference on Computer Vision}, pages 597--613.
  Springer, 2016.

\end{thebibliography}

  \appendix

\end{document}